\theoremstyle{plain}
\newtheorem{theorem}{Theorem}
\theoremstyle{definition}
\theoremstyle{remark}
\DeclareMathOperator{\tr}{tr}
\DeclareMathOperator{\argmin}{arg min}
\DeclareMathOperator{\curl}{curl}
\DeclareMathOperator{\mass}{mass}
\newcommand{\curlDisc}{\curl^d}
\newcommand{\DeltaDisc}{\Delta^d}
\newcommand{\partialDisc}{\partial^d}
\newcommand{\ud}{\,\mathrm{d}}
\newcommand{\RR}{\mathbb{R}}
\newcommand{\TT}{\mathrm{T}}
\newcommand{\CC}{\mathbb{C}}
\DeclareFontFamily{U}{mathx}{\hyphenchar\font45}%
   \DeclareFontShape{U}{mathx}{m}{n}{<->mathx10}{}%
   \DeclareSymbolFont{mathx}{U}{mathx}{m}{n}%
   \DeclareMathAccent{\widebar}{0}{mathx}{"73}%
  \newcommand{\widebar}[1]{\overline{#1}}%
\newcommand{\mc}[1]{\mathcal{#1}}
\newcommand{\abs}[1]{\lvert#1\rvert}
\newcommand{\norm}[1]{\left\lVert#1\right\rVert}
\renewcommand{\Re}{\mathfrak{Re}}
\newcommand{\jumpset}{{S}}                  
\newcommand{\pointgroup}{{P}}               
\newcommand{\proj}[1]{{\mathrm{proj}_{#1}}} 
\newcommand{\range}{{\mathrm{range}}}
\newcommand{\rngJ}{{L}}                     
\newcommand{\notinclude}[1]{}
\begin{document}
\begin{frontmatter}

\title{Combining $2D$ synchrosqueezed wave packet transform with optimization for crystal image analysis}

\author[1]{Jianfeng Lu}
\address[1]{Departments of Mathematics, Physics,
  and Chemistry, Duke University, Durham, NC 27708 USA}
  \ead{jianfeng@math.duke.edu}
  \author[2]{Benedikt Wirth}
\address[2]{Institute for Computational and Applied Mathematics
  and Cells-in-Motion Cluster of Excellence (EXC 1003 - CiM),
  University of M\"unster, D-48149 M\"unster, Germany}
  \ead{Benedikt.Wirth@uni-muenster.de}
  \author[3]{Haizhao Yang}
\address[3]{Department of Mathematics, Stanford University, Stanford,
  CA 94305 USA}
  \ead{haizhao@stanford.edu}
  

\begin{abstract}
  We develop a variational optimization method for crystal analysis
  in atomic resolution images, which uses information from a
  2D synchrosqueezed transform (SST) as input. The synchrosqueezed transform 
  is applied to extract initial information from atomic crystal images: 
  crystal defects, rotations and the gradient of elastic deformation.
  The deformation gradient estimate is then improved outside the identified defect 
  region via a variational approach, to obtain more robust
  results agreeing better with the physical constraints. The
  variational model is optimized by a nonlinear projected
  conjugate gradient method. Both examples of images from computer
  simulations and imaging experiments are analyzed, with results
  demonstrating the effectiveness of the proposed method.
\end{abstract}

\begin{keyword}
Atomic crystal images, crystal defect, elastic deformation, $2$D synchrosqueezed transforms, optimization
\end{keyword}

\end{frontmatter}


\section{Introduction}

Defects, like dislocations, grain boundaries, vacancies, etc., play a
fundamental role in polycrystalline materials. They greatly change the
material behavior from a perfect crystal and affect the macroscopic
properties of the materials. Analysis of images arising from atomistic
simulations or imaging of polycrystalline materials hence becomes very
important to characterize and help understand the defects and their
effects on crystalline materials. While the defect analysis is
traditionally done by visual inspection, the large amount of data
made available due to advances in imaging and simulation techniques
creates a need of efficient computer-assisted or automated analysis.

Crystal deformation at the atomic scale is another 
important quantity that characterizes polycrystalline materials. 
When the deformation, denoted by $\psi$, is well-defined, the 
tensor field $F=\nabla \psi$ describes the local crystal strain;  
the polar decomposition of $F$ at each point gives grain rotations; 
the curl of $F$ provides information about defects and the well-known 
Burgers vector that represents the magnitude and
direction of the lattice distortion resulting from a dislocation. 
Since it is almost impossible to estimate the deformation manually, 
the development of computer-aided analysis becomes important. 

For crystalline materials, defects are physical domains of the
materials such that it is not possible to identify a smooth crystal 
deformation $\phi=\psi^{-1}$ that maps the atomic configuration back to a perfect 
lattice. In other words, the deformation gradient $G=\nabla\phi=F^{-1}$ is irregular and 
has nonzero curl at the defect location. In 
the opposite case, when a smooth deformation map does exist, the affine transform given by
the gradient of the map, $G = \nabla\phi$, transforms the image locally to an
undistorted lattice of atoms. Therefore, for a defect-free region of the material,
$G$ is a gradient field and thus is curl-free: $\curl
G = 0$. Crystal image analysis hence requires the detection 
of the defect regions and preferably also the estimation of the local 
elastic deformation $G$ away from the defects. 

In recent years several variational image processing methods for crystal analysis have emerged.
The most basic versions just segment the crystal image into several crystal grains and identify their orientation,
using clever formulations as convex optimization in $2$D \cite{Berkels:10,Strekalovskiy:11}
or efficient, sophisticated optimization algorithms for $3$D images \cite{ElseyWirth:14}.
Berkels et al.\ additionally determine a full deformation field $\psi$ \cite{Berkels:08} via nonlinear optimization,
while defects and local crystal distortion $G$ are identified in \cite{ElseyWirth:MMS}.

If accurate atom positions can easily be extracted from the
image (or if the data stems from atomistic simulations) such that the
input data consists of a discrete list of atom positions, then
  the local lattice orientation and deformation as well as defects can
  be obtained efficiently by identifying the nearest neighbors of each
  atom \cite{StukowskiAlbe1:10,StukowskiAlbe2:10,BeHuHa12}.

Another efficient approach is the crystal image analysis via $2$D
synchrosqueezed transforms (SSTs) in \cite{SSCrystal}. The SST was
originally proposed in \cite{DaubechiesMaes:96}, rigorously analyzed
in \cite{DaubechiesLuWu:11,SSSTFT,Yang:2014}, and extended to $2$D in
\cite{YangYing:2013,YangYing:2014}. It is proved that the $2$D SST can
accurately estimate the local wave vector of a nonlinear wave-like
image. Inspired by the fact that a deformed atomic crystal grain can
be considered as a superposition of several nonlinear wave-like
components, \cite{SSCrystal} proposes an efficient crystal image
analysis method based on a $2$D band-limited SST. In particular,
tracking the irregularity of the synchrosqueezed energy can identify
defects, and the deformation gradient $G$ can be obtained by a linear
system generated with local wave vector estimations.  A recent paper
\cite{Robustness} on the robustness of SSTs supports the application
of this analysis method to noisy crystal images. The
idea of exploiting the local periodic structure in the Fourier
domain to extract a deformation gradient has also been considered in
\cite{Hytch:98} in the crystal imaging literature.

Our work here is a variational approach based on the information
obtained from a band-limited $2$D synchrosqueezed wave 
packet transform following \cite{SSCrystal}. In this manuscript, 
we use this information
as input to a variational optimization in order to improve the robustness
of the analysis and, importantly, to make the results better agree
with the physical nature of defects.

\section{Variational method to retrieve deformation gradients}

Let us fix a perfect, unstrained crystal with a fixed orientation as our reference lattice.
Let $\Omega\subset\RR^2$ be the domain of the image.
Our objective is to find at each $x\in\Omega$ the local strain or deformation gradient $G(x)\in\RR^{2\times2}$
of a (locally defined) deformation $\phi$ which deforms a defect-free neighborhood of $x$ into the reference lattice.
(Note that since the image corresponds to a deformed crystal state, $x$ should be understood as an Eulerian coordinate.)
In other words, we seek the affine
map defined by $G(x)$ which maps the local atom arrangement to the
reference lattice. This is however impossible around defects. In
particular, while $\curl G = 0$ in the elastic region, in the defect
region, $\curl G$ is not zero, and the integral of $\curl G$ over a neighborhood of the defect such as a dislocation should match the defect's Burgers vector \cite{ElseyWirth:14}.

Assume the defect region is given by $\Omega_d\subset\Omega$ and the curl of G is given by $b$, consistent with the Burgers vectors and with $\curl G=0$ on $\Omega\setminus\Omega_d$.
We expect the displacement field to minimize the
elastic energy of the system outside the defect region, since the
system under imaging is in a quasistatic state.  Given $G_0$ a rough
guess of the deformation gradient, this motivates the energy
minimization
\begin{equation}\label{eq:energyelastic}
  \begin{aligned}
    & \min_G \int_{\Omega \backslash \Omega_{d}} \abs{G - G_0}^2 + W(G) \ud x \\
    & \text{s.\,t.} \curl G = b\,,
  \end{aligned}
\end{equation}
where $\abs{\cdot}$ denotes the Frobenius norm of a matrix, $\abs{A} =
(\tr (A^{\TT} A))^{1/2}$, and $W$ is the elastic stored energy density.

Since our reference lattice represents the undeformed equilibrium state of the crystal
and the atom configuration in the image is produced by the (local) deformation $\phi^{-1}$,
the stored elastic energy can be expressed in the standard Lagrangian form
as the integral over the reference domain $\phi(\Omega\setminus\Omega_d)$ of an elastic energy density $w$
that depends on $\nabla(\phi^{-1})=G^{-1}\circ\phi^{-1}$,
$$\int_{\phi(\Omega\setminus\Omega_d)}w(G^{-1}\circ\phi^{-1}(y))\,\ud y\,.$$
Here, $w$ satisfies the standard conditions coming from first principles, i.\,e.\ $w$ is frame indifferent, $w(A)=0$ for $A\in SO(2)$, $w(A)>0$ else, and $w(A)=\infty$ if $\det A\leq 0$.
After a change of variables the elastic energy turns into
$$\int_{\Omega\setminus\Omega_d}W(G)\,\ud x$$
for $W(G)=w(G^{-1})\det G$, where it is easy to see that $W$ has the same above properties as $w$.
For $w$ (or equivalently $W$) one can use a material-specific, possibly anisotropic energy density.
To be specific, since our numerical examples are all concerned with a triangular lattice exhibiting isotropic elastic behavior,
we here simply restrict ourselves to the following neo-Hookean-type elastic energy density
\begin{equation}
\label{eq:elasticEnergy}
  \begin{aligned}
    W(G) 
    & = \frac{\mu}{2} (\abs{G}^2 - 2) + \bigl(\frac{\mu}{2} + \frac{\lambda}{2}\bigr) (\det G - 1)^2 \\
    &- \mu (\det G - 1).
  \end{aligned}
\end{equation}
Note that in \eqref{eq:energyelastic}, the fidelity and elastic energy
terms are both evaluated outside the defect region. Within the defect
region, since it is not possible to map the local configuration of
atoms back to the reference state, the estimate $G_0$ is not
trustworthy. It is also well known that the elastic energy blows up
logarithmically approaching the dislocation core, and hence it only
makes sense to penalize the elastic energy away from the defects. 

The questions are then how to estimate $G_0$ and to determine the
defect region $\Omega_d$ as well as the vector field $b$ consistent with the defects' Burgers vectors. In this
work, the desired information is all obtained using synchrosqueezed
wave packet transforms. We first review the synchrosqueezed wave packet
transforms and the estimate of $G_0$ in \S\ref{sec:synsquez}. The
defect region and Burgers vectors are estimated using the
synchrosqueezed wave packet transforms as explained in
\S\ref{sec:defect}. The variational problem \eqref{eq:energyelastic} is then solved by a constrained minimization algorithm as described in \S\ref{sec:minimize}.

\subsection{Synchrosqueezed wave packet transforms (SSTs)}\label{sec:synsquez}

Before explaining the variational optimization, we briefly introduce the crystal image model established in \cite{SSCrystal} and the synchrosqueezed wave packet transforms in \cite{YangYing:2013,YangYing:2014} for the purpose of a self-contained description. 
Consider an image of a polycrystalline material with atomic resolution.
Denote the perfect reference lattice as 
\begin{equation*}
\mathcal L=\{av_1+bv_2\,:\,a,b\text{ integers}\}\,,
\end{equation*}
where $v_1,v_2\in\RR^2$ represent two fixed lattice vectors.
Let $s(x)$ be the image corresponding to a single perfect unit cell, extended periodically in $x$ with respect to the reference crystal lattice.
We denote by $\Omega$ the domain occupied by the whole
image and by $\Omega_k$, $k = 1, \ldots, M$, the grains the system consists of.
Now we model a polycrystal image $f:\Omega\to\RR$ as
\begin{equation}
f(x)= \alpha_k(x) s(N \phi_k(x))+c_k(x) \qquad \text{if } x \in \Omega_k,
\label{eqn:crystal}
\end{equation}
where $N$ is the reciprocal lattice parameter (or rather the relative
reciprocal lattice parameter as we will normalize the dimension of the
image).  The $\phi_k:\Omega_k \to \RR^2$ is chosen to map the atoms of
grain $\Omega_k$ back to the configuration of a perfect
crystal, in other words, it can be thought of as the inverse of the
elastic displacement field. The local inverse deformation gradient is
then given by $G = \nabla \phi_k$ in each $\Omega_k$. 
For generality, \eqref{eqn:crystal}
also includes a smooth amplitude envelope $\alpha_k(x)$ and a smooth
trend function $c_k(x)$ to take into account possible variation of
intensity, illumination, etc.~during the imaging process.  Using the
$2D$ Fourier series $\hat s$ of $s$ and the indicator functions
$\chi_{\Omega_k}$, we can rewrite \eqref{eqn:crystal} as
\begin{equation}
\begin{aligned}
  f(x) =\sum_{k=1}^M \chi_{\Omega_k} (x)\left( \sum_{\xi \in
      \mc{L}^{\ast}}\alpha_k(x) \widehat{s}(\xi)e^{iN \xi \cdot
      \phi_k(x)}+c_k(x)\right),
\label{eqn:imagefunction}
\end{aligned}
\end{equation}
where $\mc{L}^{\ast}$ is the reciprocal lattice of $\mc{L}$ (recall that $s$ is
periodic with respect to the lattice $\mc{L}$). In each grain
$\Omega_k$, the image is a superposition of wave-like components
$\alpha_k(x) \widehat{s}(\xi) e^{i N \xi \cdot \phi_k(x)}$ with
local wave vectors $N \nabla(\xi \cdot \phi_k(x))$ and local amplitude
$\alpha_k(x)|\widehat{s}(\xi)|$.  Our goal here is to apply the
band-limited $2$D SST developed in \cite{SSCrystal} to estimate the
defect region and also $G_0=\sum_{k=1}^M\chi_{\Omega_k}\nabla \phi_k$ in the interior of each
grain $\Omega_k$ as required in the variational formulation.

The starting point of $2$D SST is a wave packet $w_{a\theta x}$,
which is, roughly speaking, obtained by translating, rotating, and rescaling or modulating a mother wave packet $w:\RR^2\to\CC$ according to the parameters $x\in\RR^2$, $\theta\in[0,2\pi)$, and $a\in\RR$
\cite{YangYing:2013,YangYing:2014,SSCrystal}.  Let $W_f(a,\theta,x) =
\int_\Omega \overline{w_{a\theta x}}(y)f(y)\,\ud y$ be the wave packet transform of $f$ at
scale $a$ and angle $\theta$ in the frequency domain, and at spatial
location $x$. The wave packet transform is a generalization of curvelet and
wavelet transforms with better flexibility in frequency scaling and
consequently is better suited to analyze crystal images with complex
geometry. As a convolution with smooth wave packets, $W_f$ is well-defined and smooth even under very low regularity requirements for $f$, e.\,g.\ $f\in L^\infty(\RR^2)$. The spectrum of the windowed Fourier transform of a given crystal image spreads out in the
phase space, as illustrated in Figure~\ref{fig:patch}(b). The $2$D
synchrosqueezed wave packet transform (SST) aims at sharpening this phase space 
representation. In the SST, for each $(a, \theta, x)$, we define the
corresponding local wave vector
\begin{equation*}
  v_f(a, \theta,x) = \Re \frac{\nabla_x
    W_f(a,\theta,x)}{2\pi i W_f(a,\theta,x)}.
\end{equation*}
Here, $\nabla_x W_f$ denotes the gradient of $W_f$ with respect to its third argument $x$.
The synchrosqueezed (SS) energy distribution of $f$ is then constructed as
\begin{equation}
T_{\!f}(v,x) \!=\! \int_0^{2\pi}\!\!\!\!\!\int_{0}^{\infty} \!\!|W_{\!f}(a,\theta,x)|^2 \delta(v_{\!f}(a,\theta,x)\!-\!v) \, a\ud a\ud\theta\,, \label{eq:SED}
\end{equation}
where $\delta$ denotes the Dirac measure. By the stationary phase method, in a small neighborhood of a point $x$ such that $W_f(a,\theta,x)$ is nonzero for some fixed $(a,\theta)$, $W_f(a,\theta,x)$ is essentially a plane wave in $x$ with a local wave vector approximating a certain $N \nabla(\xi \cdot \phi_k(x))$. Inspired by this intuition, it is proved in \cite{SSCrystal} that in the interior of a grain, the local wave vector estimation $v_f(a,\theta,x)$ approximates 
 local wave vectors $N\nabla (\xi\cdot \phi_k(x))$ accurately, if the deformation $\phi_k$ and the amplitude function $\alpha_k$ are smooth. 
The SST squeezes the wave packet spectrum $|W_f(a, \theta,
x)|^2$ according to $v_f(a, \theta, x)$ to
obtain a sharpened and concentrated representation of the image in the
phase space. Hence, in
the interior of a grain, the SS energy distribution $T_f$ has a support concentrating
around local wave vectors $N\nabla (\xi\cdot \phi_k(x))$, $ \xi \in
\mc{L}^{\ast}$, and is given approximately by (see e.g.,
Figure~\ref{fig:patch}(e) in polar coordinates)
\begin{equation}
  T_f(v, x) \approx \sum_{\xi \in \mc{L}^{\ast}} \alpha_k(x)^2|\hat{s}(\xi)|^2 
  \delta\bigl( v - N \nabla ( \xi\cdot \phi_k(x))\bigr), \label{eq:Tfapprox}
\end{equation}
understood in the distributional sense. Therefore, by locating the
energy peaks of $T_f$, we can obtain estimates of local wave vectors
$N\nabla(\xi\cdot \phi_k(x))$ and also their associated spectral energy. In
practice, we will aim at high energy peaks corresponding to $\xi$ close to the
origin in the reciprocal lattice.
For simplicity and concreteness, we will in this article focus on the case when the lattice is hexagonal.
We then have six such reciprocal lattice vectors $\xi$, which can be further reduced to three due to the
symmetry $\xi \leftrightarrow -\xi$. We will henceforth denote these
as $\xi_n$, $n = 1, 2, 3$, and denote by $v^{\text{est}}_n(x)$ the
estimate of $N \nabla (\xi_n \cdot \phi_k(x)) = N (\nabla \phi_k(x))
\xi_n$.  The inverse deformation gradient $G_0(x) =\nabla \phi_k(x)$ is
determined by a least squares fitting of the deformed reference reciprocal lattice vectors $N \xi_n$
to the $v^{\text{est}}_n$:
\[
G_0(x) = \underset{G}{\argmin} \sum_{n=1}^3 \norm{v^{\text{est}}_n(x) - N G \xi_n}_2^2.
\]
In practice, for each physical point $x$ we represent $T_f(\cdot, x)$
in polar coordinates $(r,\vartheta)\in[0,\infty)\times[0,\pi)$ (where
$\vartheta\in[\pi,2\pi)$ is ignored due to symmetry). To identify the
peak locations $\{v^{\text{est}}_n\}$, we choose the grid point with highest
amplitude in each $60$ degree sector of $\vartheta$. 

As the representation \eqref{eq:Tfapprox} of $T_f$ is no longer valid around the crystal defects, we may characterize the defect region by using an indicator function for the deviation of $T_f$ from the representation \eqref{eq:Tfapprox}. To this end, for each $n \in \{1, 2, 3\}$ (corresponding to one of the sectors), we define
\[
w_n(x)=\frac{\displaystyle \int_{B_\delta(v^{\text{est}}_n)} T_f(v,x) \ud v}{\displaystyle \int_{\arg v \in[(n-1)\pi/3,n\pi/3)} T_f(v,b) \ud v}\,,
\]
where $B_{\delta}(v^{\text{est}}_n)$ denotes a small ball around the
estimated local wave vector $v^{\text{est}}_n$. Hence, $\mass(x) :=
\sum_n w_n(x)$ will be close to $3$ in the interior of a grain due to
\eqref{eq:Tfapprox}, while its value will be much smaller than $3$
near the defects. This is illustrated in Figure~\ref{fig:patch}(c),
where we show $\mass(x)$ for the crystal image in
Figure~\ref{fig:patch}(a). The estimate of defect regions can be
obtained by a thresholding $\mass(x)$ at some value $\eta\in(0,3)$ according to
\begin{equation*}
\Omega_d=\{x\in\Omega\ :\ \mass(x)<\eta\}\,,
\end{equation*}
an illustration of which is shown in Figure~\ref{fig:patch}(d). 
Figure~\ref{fig:patchG} shows the estimate of $G_0$ by the SST. It is
observed that the crystal is slightly deformed in the grain interior
and heavily deformed at the defect region. 

To better interpret the inverse deformation gradient $G_0$, we compute
its polar decomposition
$G_0(x)=U_0(x)P_0(x)$ for each point $x\in \Omega$, where $U_0(x)$ is
a rotation matrix and $P_0(x)$ is a positive-semidefinite symmetric
matrix. The rotation angle of $U_0(x)$ describes the crystal
orientation at $x$; $\det(G_0(x))-1$ indicates the volume distortion
of $G_0(x)$; the quantity $|\lambda_1(x)-\lambda_2(x)|$, where
$\lambda_1(x)$ and $\lambda_2(x)$ are the eigenvalues of $P_0(x)$,
characterizes the difference in the principal stretches of
$G_0(x)$ as a measure of shear strength. The bottom panel of Figure~\ref{fig:patchG} shows these
quantities corresponding to the estimate of $G_0$ in the top panel. In
the later numerical examples, instead of $G$ itself we will always present the
crystal orientation, the volume distortion, and the difference in
principal stretches.

\begin{figure}[ht!]
  \begin{center}
    \begin{tabular}{cc}
      \includegraphics[height=1.2in]{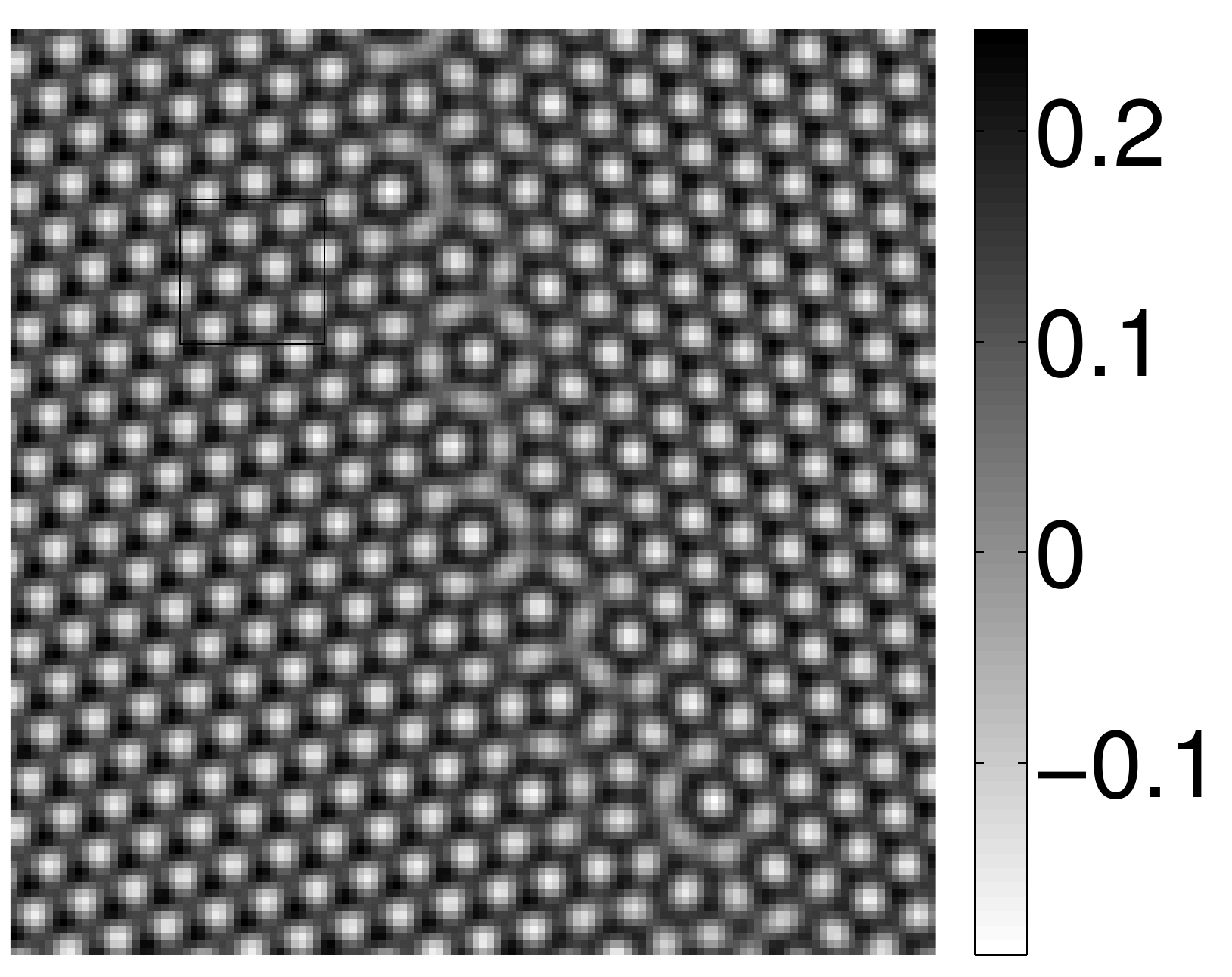} &  \includegraphics[height=1.2in]{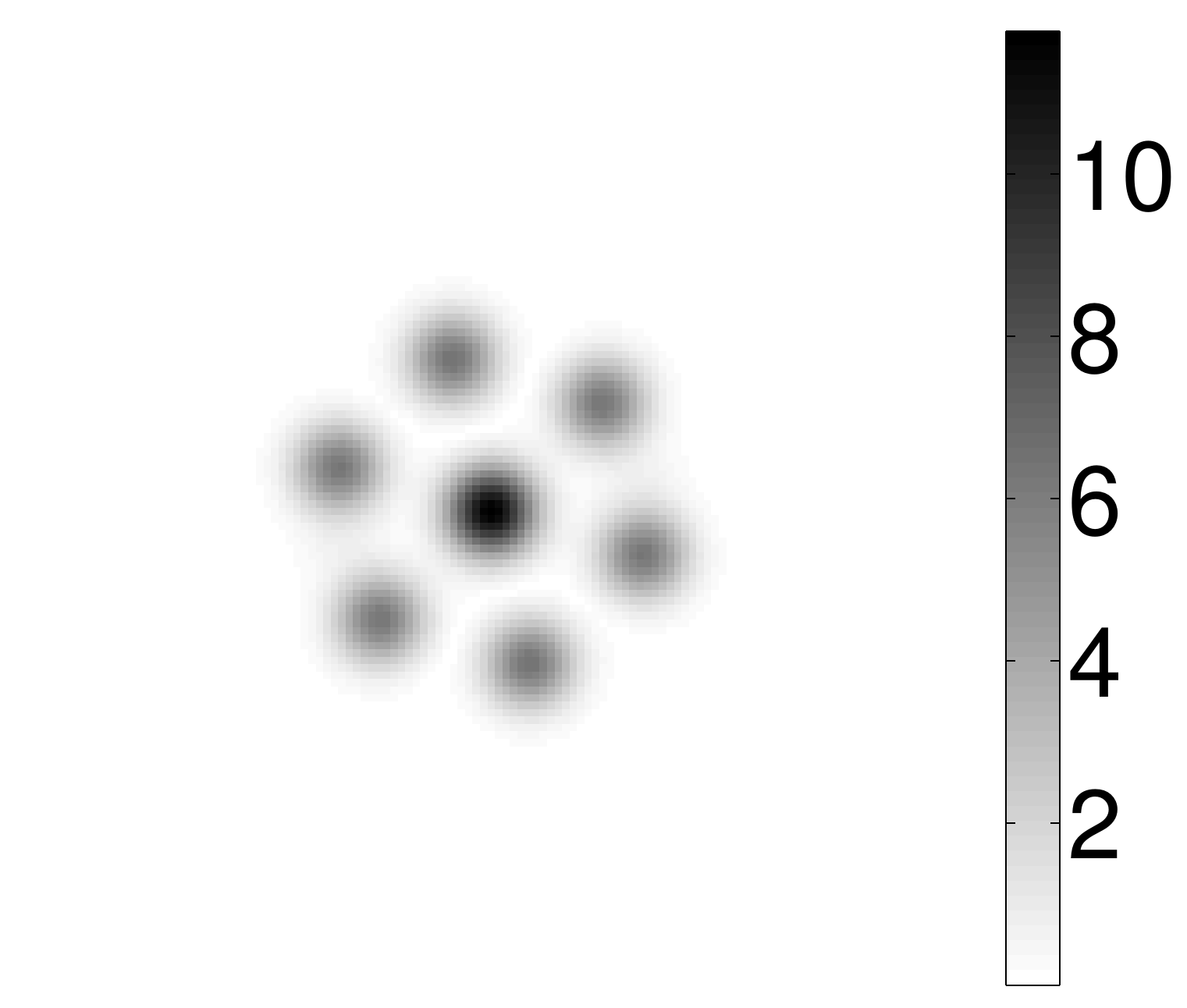} \\
       (a)& (b) \\
        \includegraphics[height=1.2in]{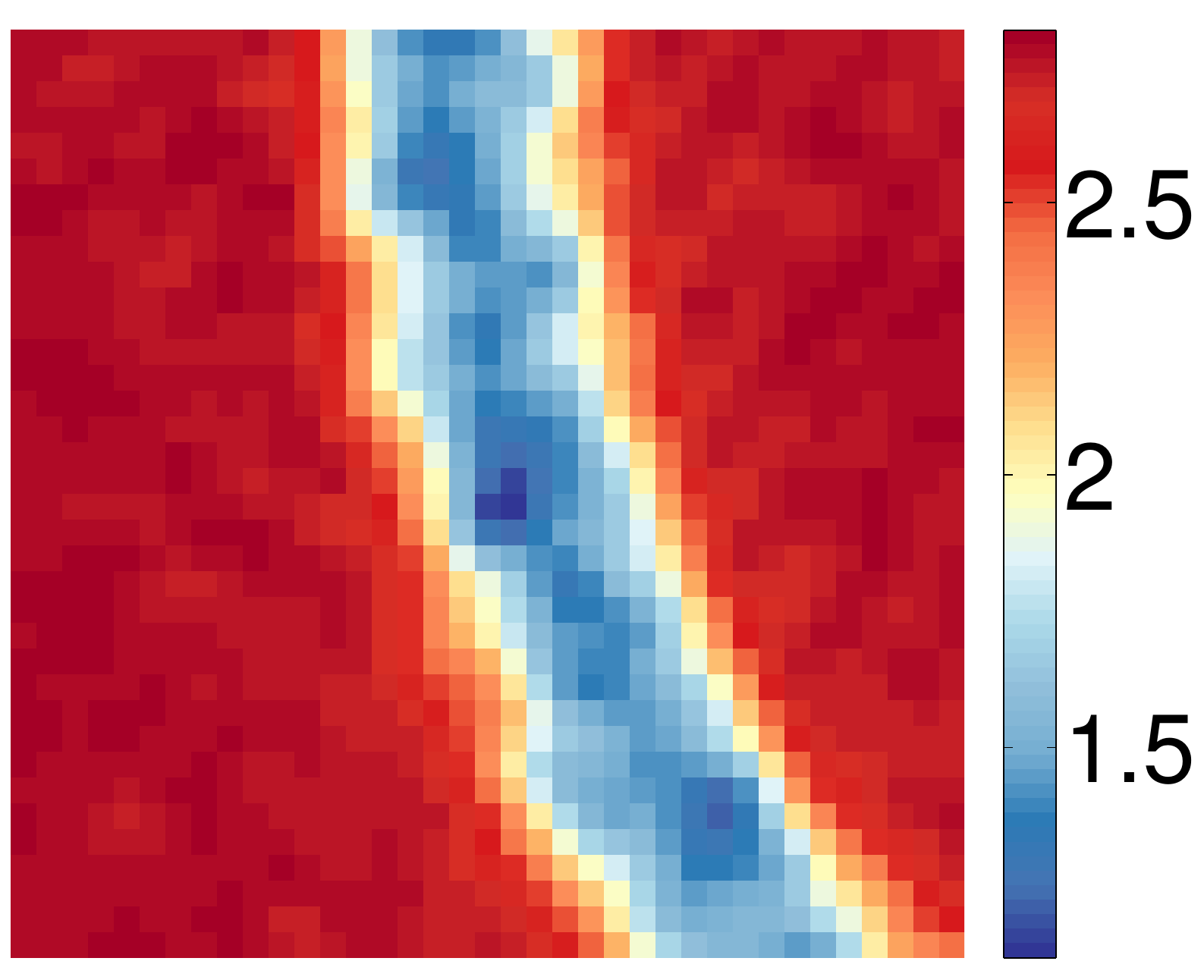} &  \includegraphics[height=1.2in]{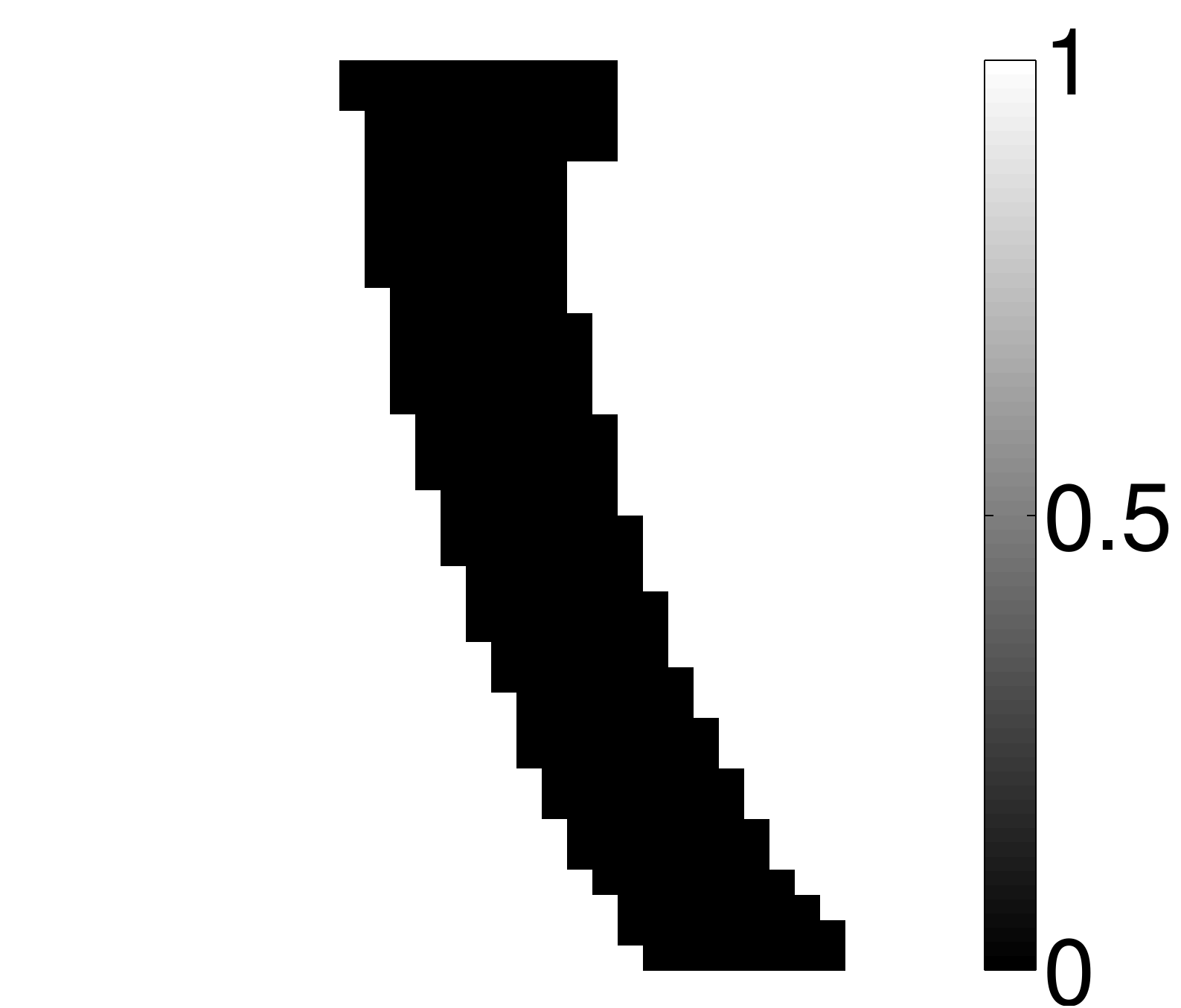}\\
     (c) & (d) \\
       \includegraphics[height=1.2in]{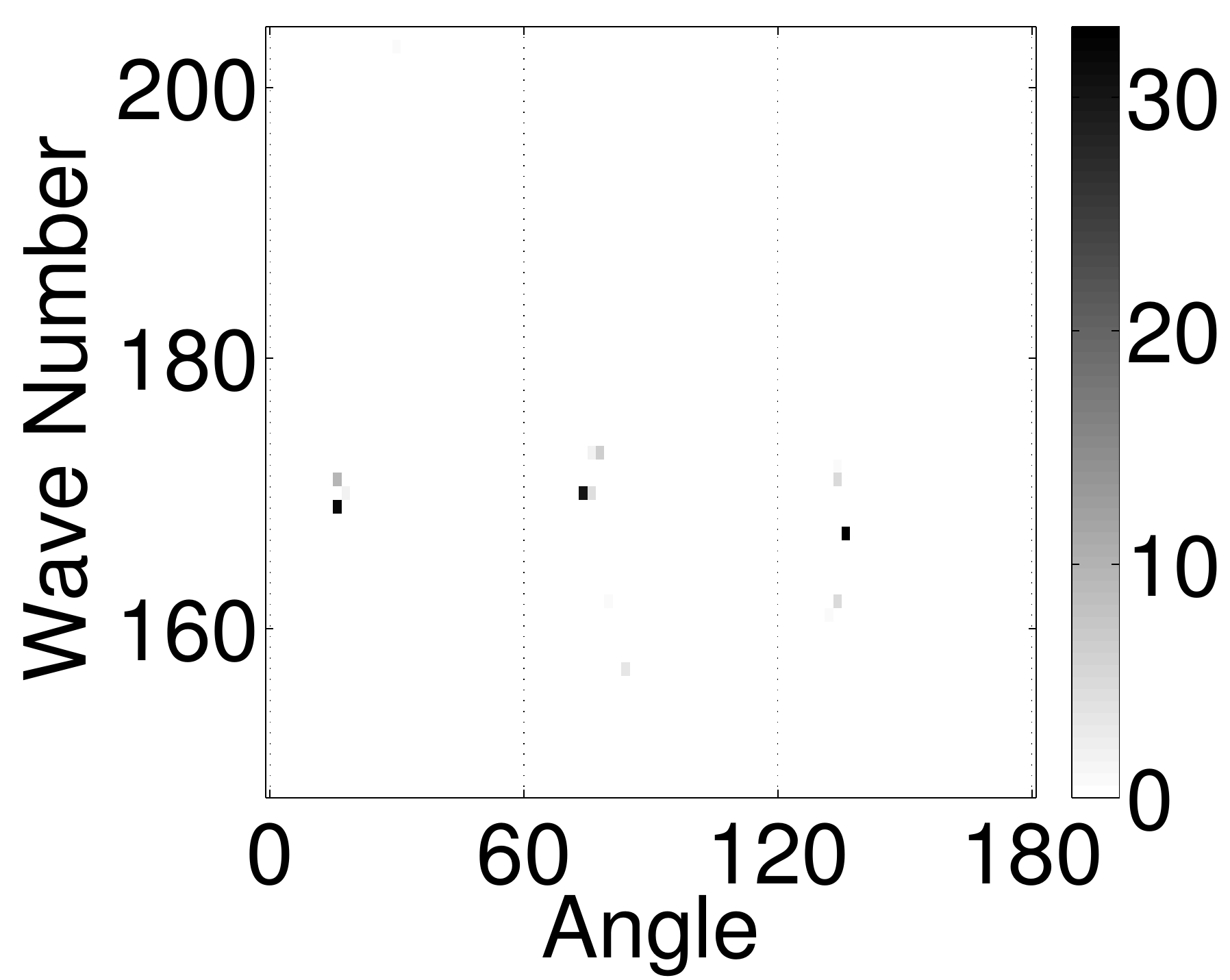} &  \includegraphics[height=1.2in]{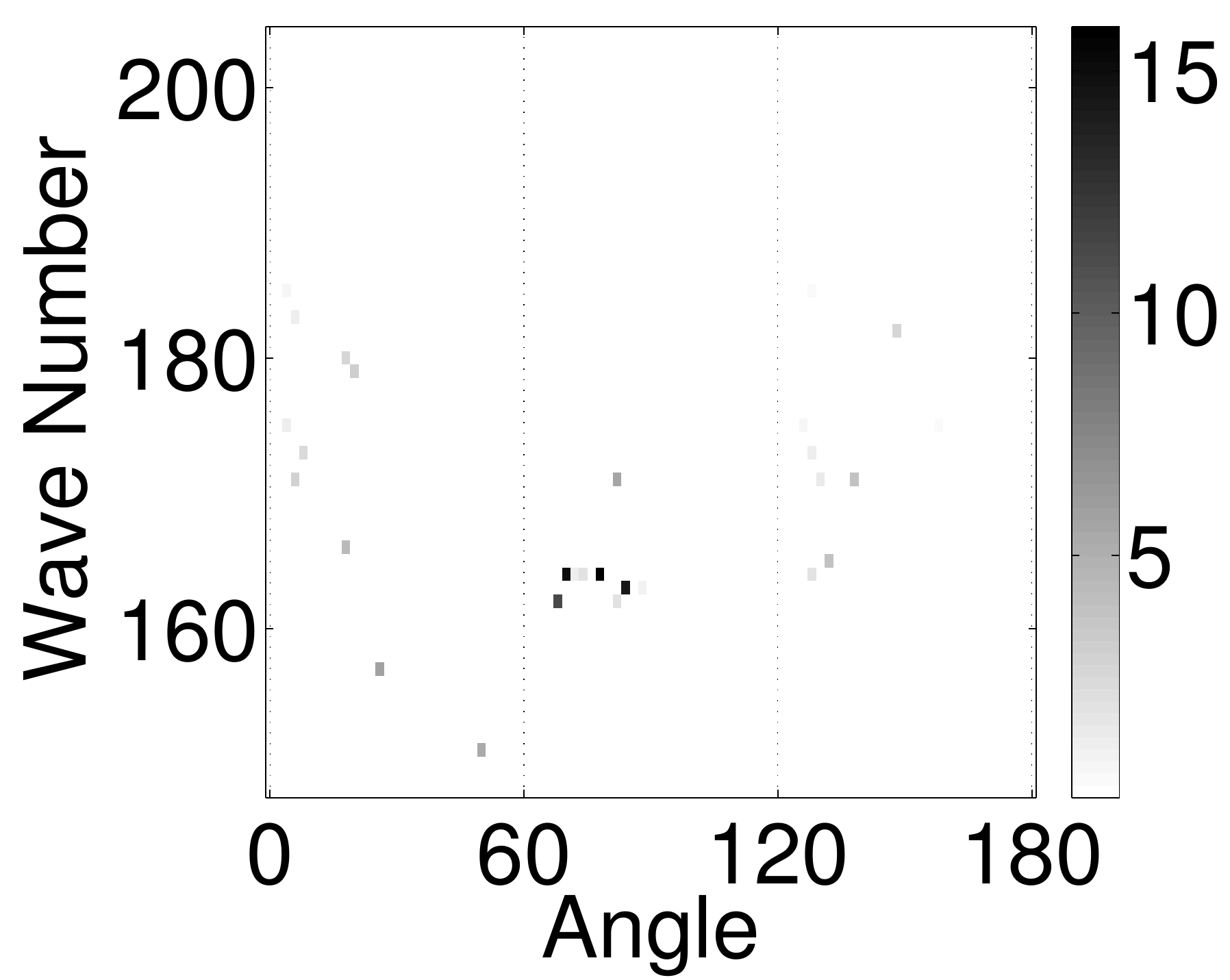} \\
       (e)&(f)
    \end{tabular}
  \end{center}
  \caption{(a) An example of a crystal image. (b) Windowed Fourier
    transform at a local patch indicated by a rectangle. (c) The defect
    indicator $\mass(x)$ provided by the SST. (d) Identified defect region for a threshold of $\eta=2$. (e) The SS energy
    distribution in polar coordinates at a point outside the defect
    region. (f) SS energy distribution in polar
    coordinates at a point in the defect region.}
  \label{fig:patch}
\end{figure}

\begin{figure}[ht!]
  \begin{center}
    \begin{tabular}{cc}
      \includegraphics[height=1.2in]{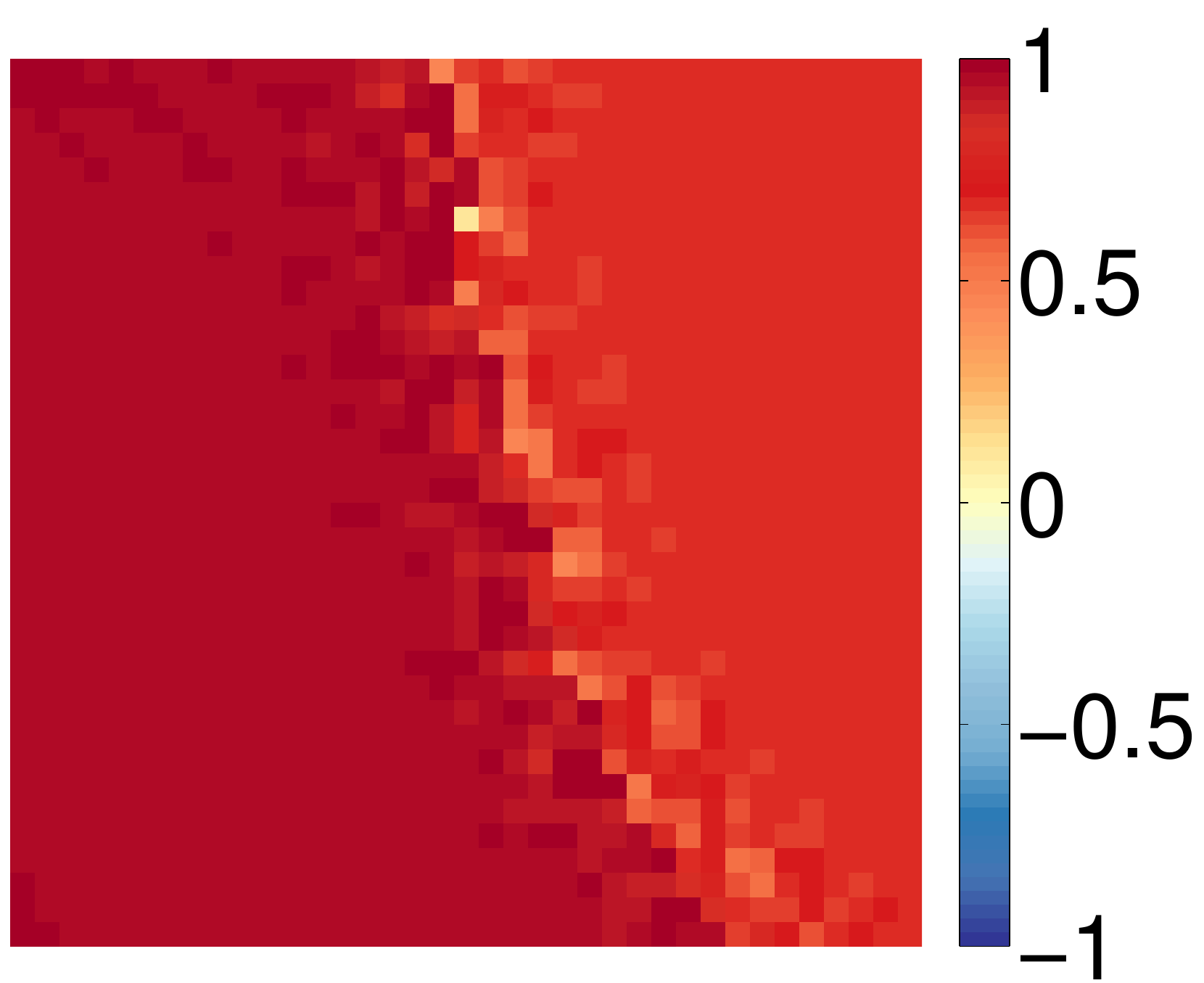} &  \includegraphics[height=1.2in]{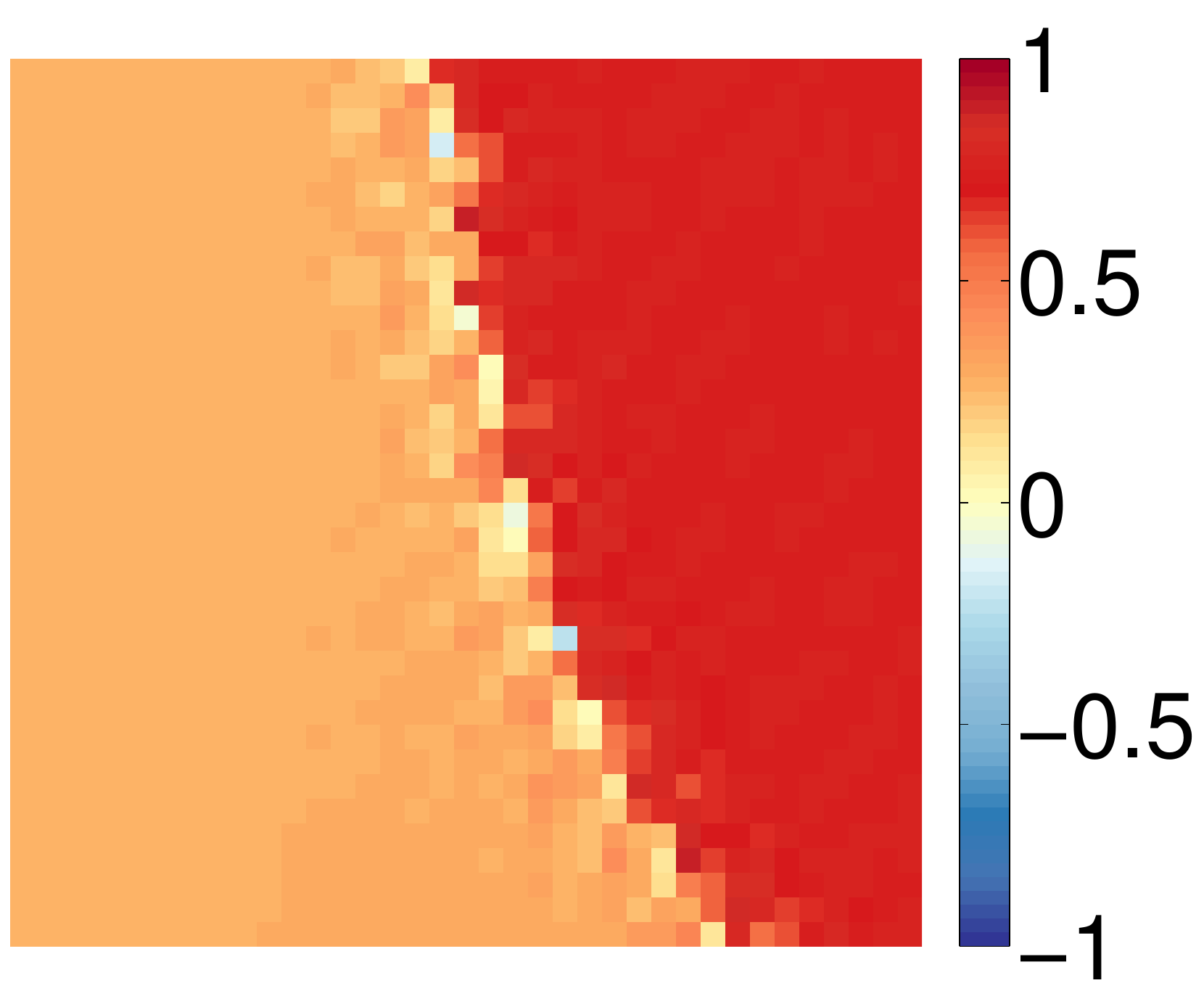} \\
      $G_0^{11}$& $G_0^{21}$ \\
        \includegraphics[height=1.2in]{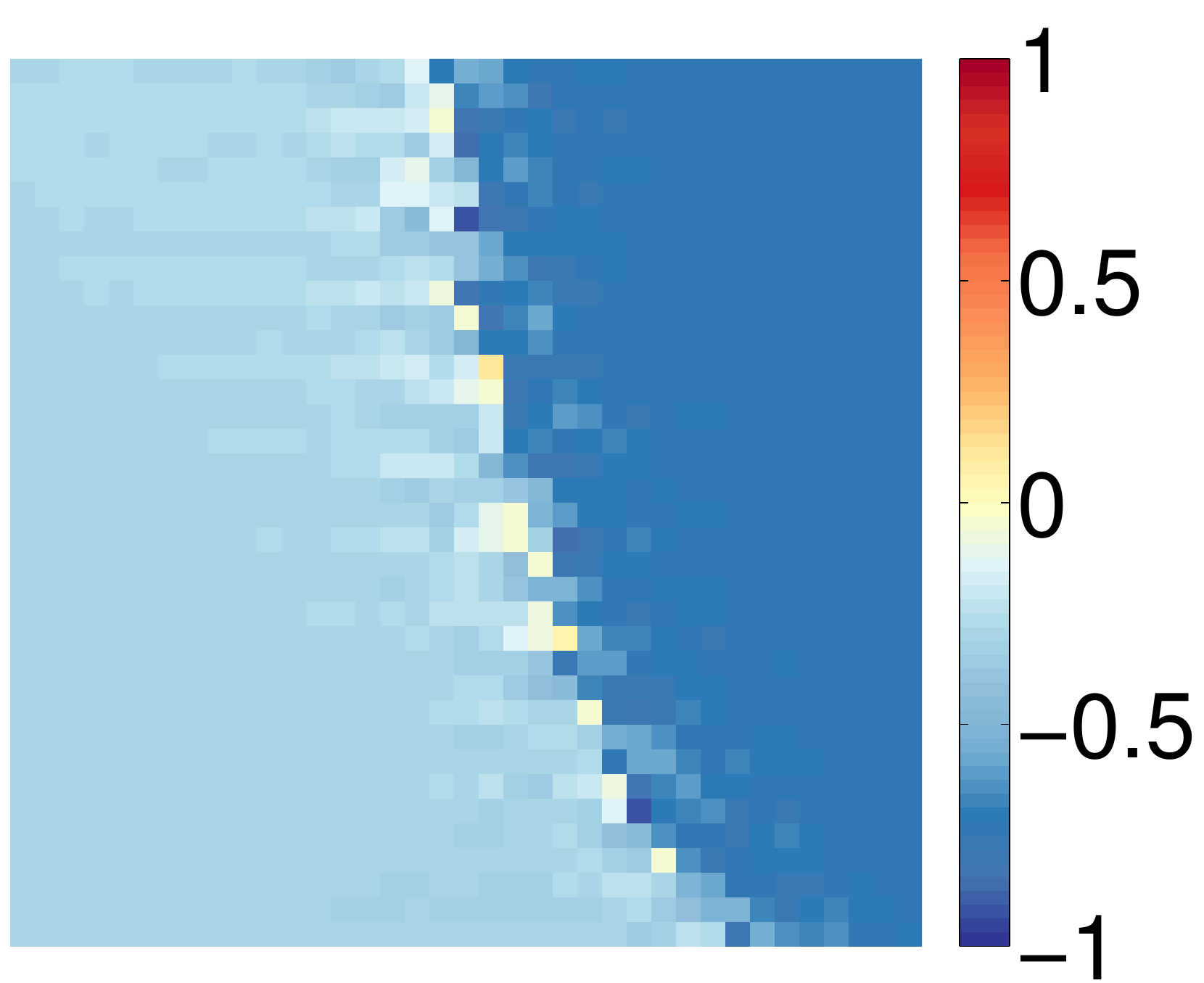} &  \includegraphics[height=1.2in]{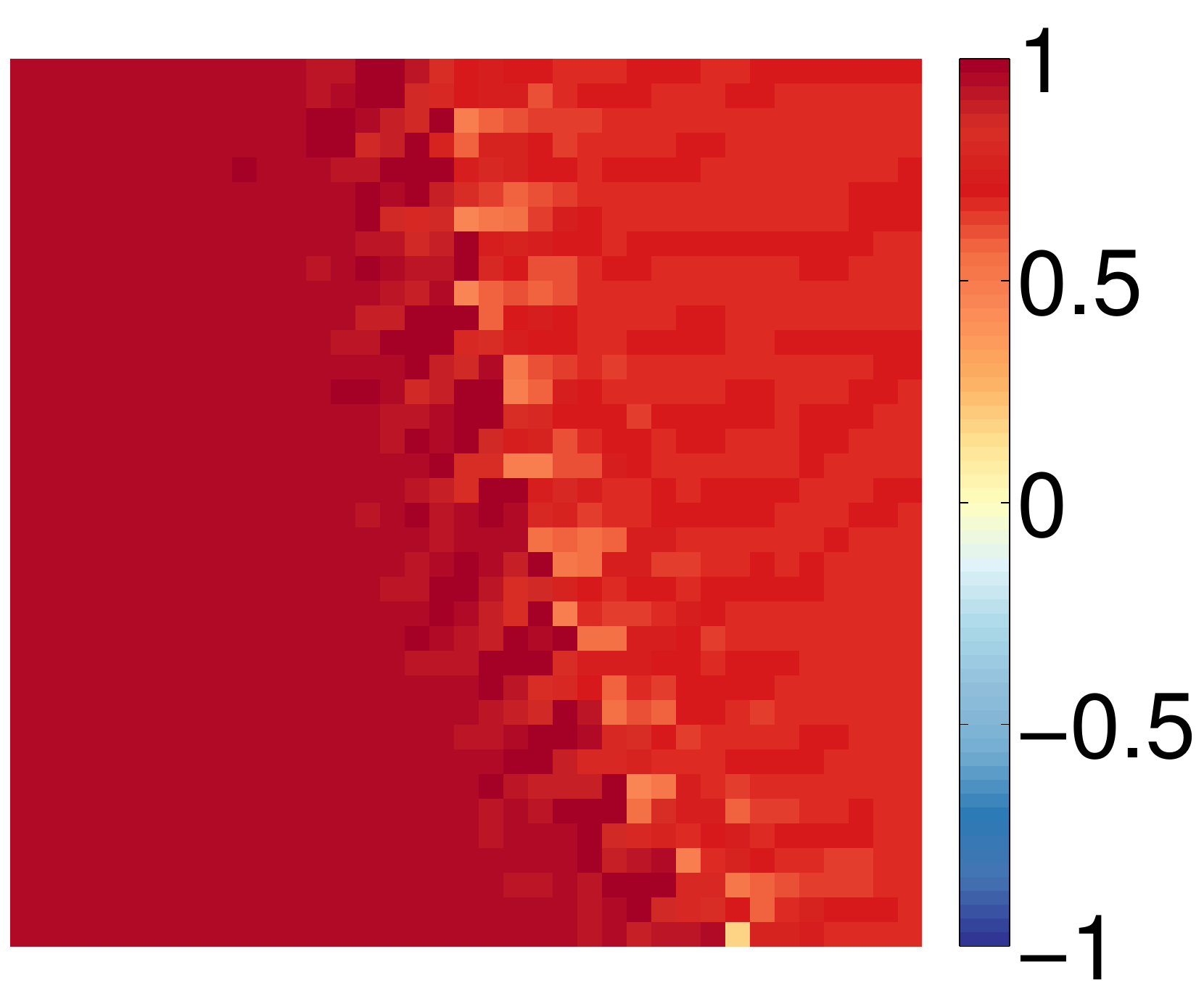}\\
       $G_0^{12}$ & $G_0^{22}$\\
    \end{tabular}
    \\
        \begin{tabular}{cc}
      \includegraphics[height=1.2in]{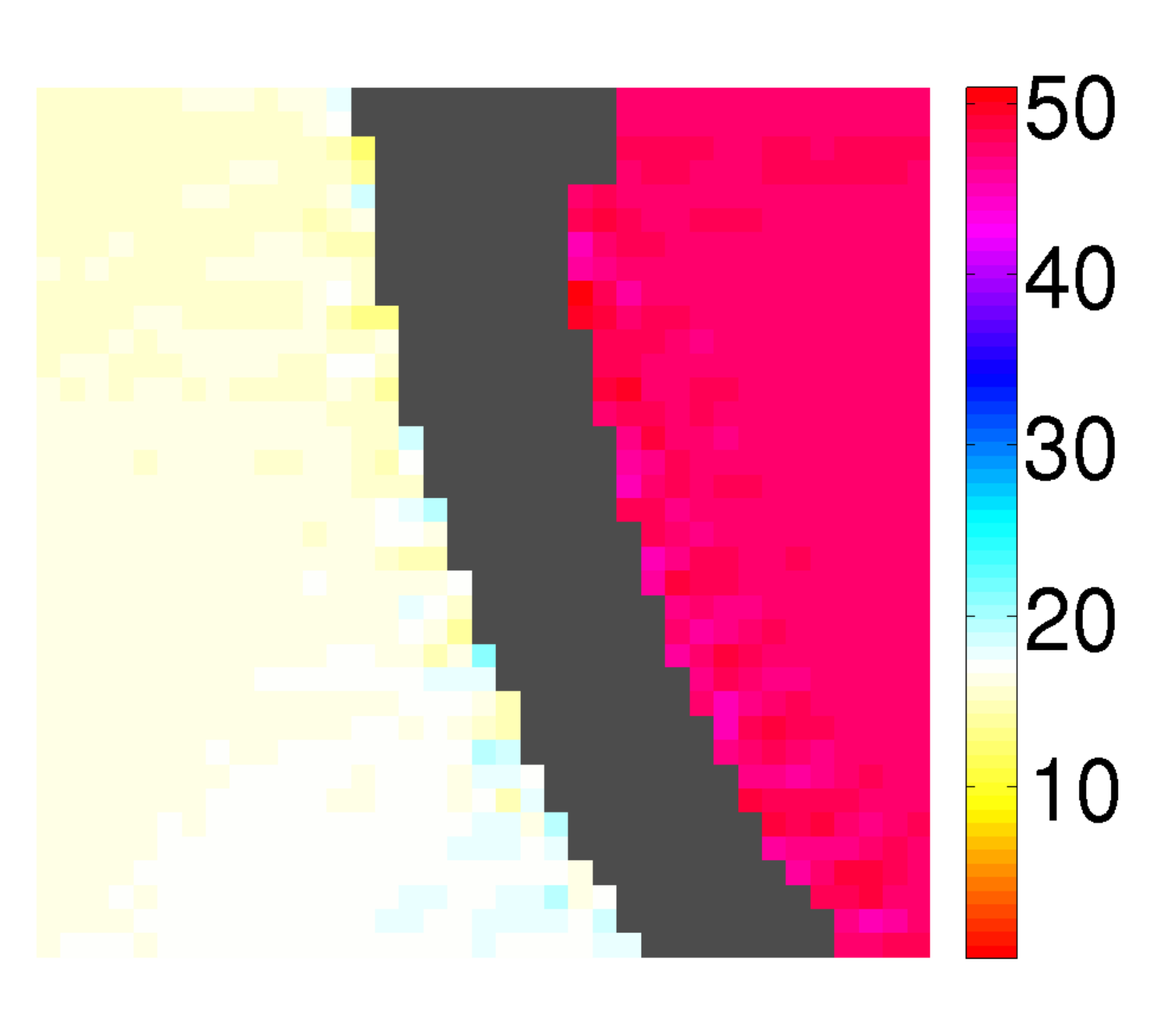} &  \includegraphics[height=1.2in]{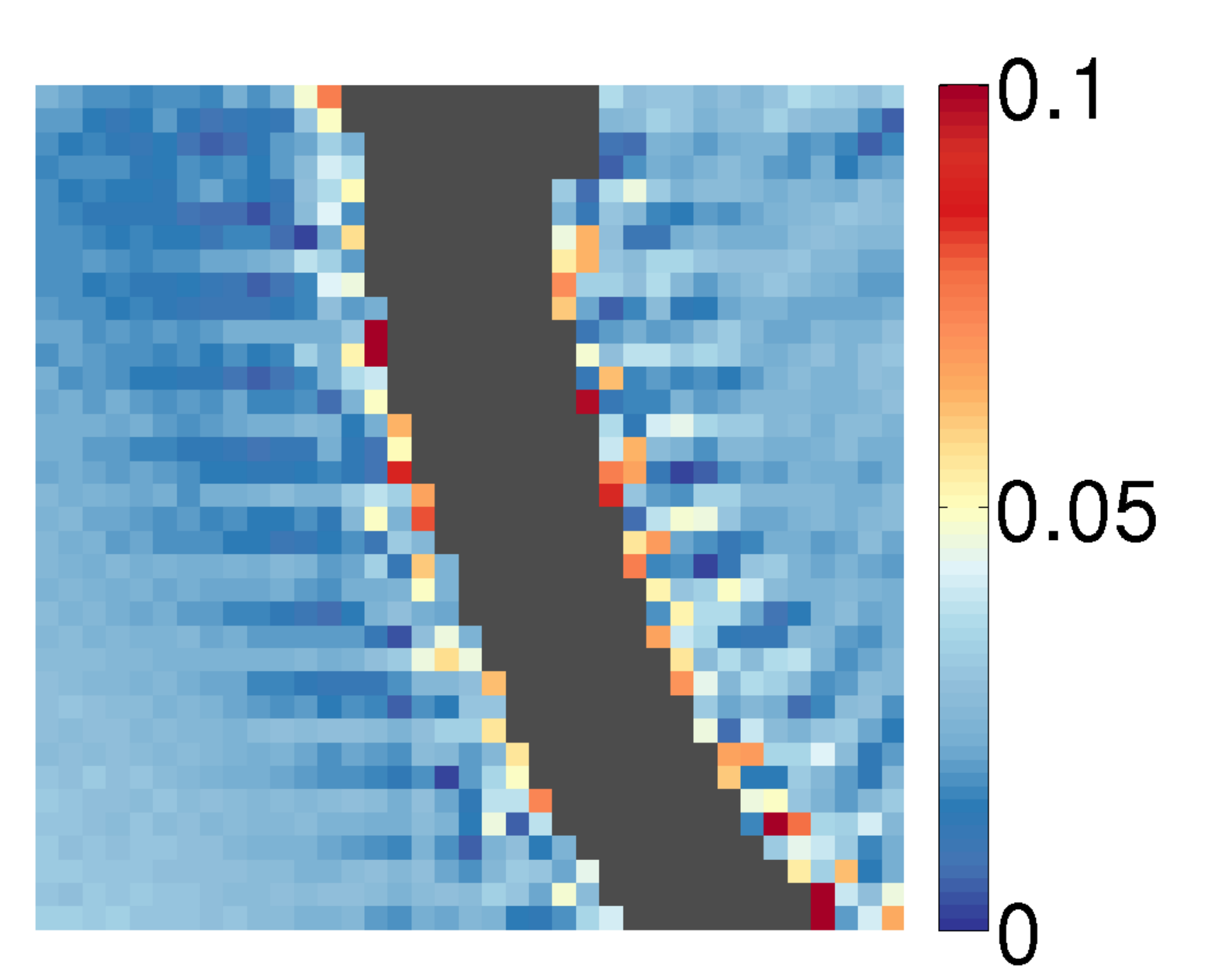} \\
      Crystal Orientation & Difference in principal stretches\\
    \end{tabular}
    \\
            \begin{tabular}{c}
      \includegraphics[height=1.2in]{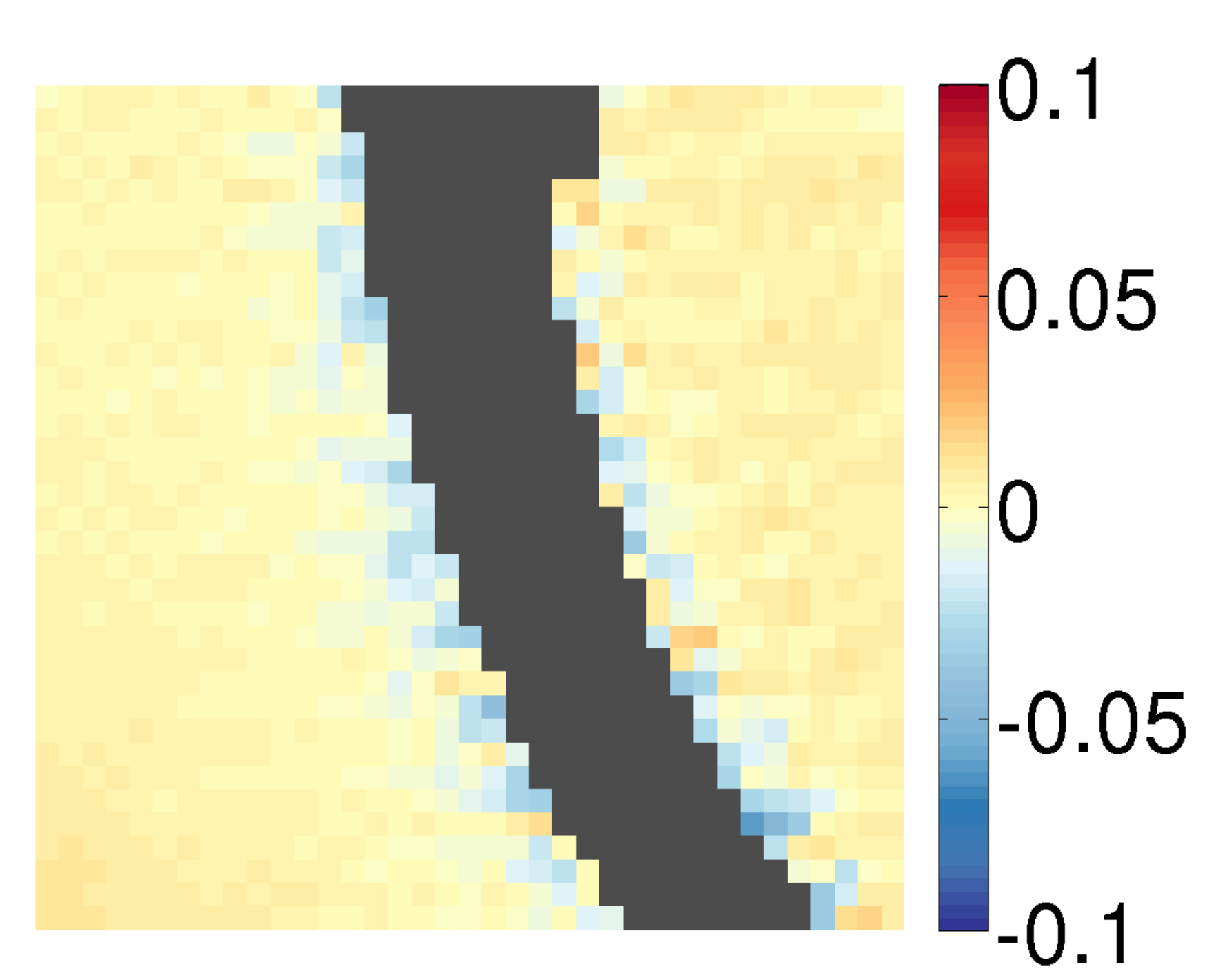} \\
      Volume Distortion
    \end{tabular}
  \end{center}
  \caption{Top panel: Estimated inverse deformation gradient $G_0\in \RR^{2\times 2}$ of the atomic crystal image in Figure \ref{fig:patch} (a). Bottom panel: The crystal orientation, the difference in principal stretches, and the volume distortion of $G_0$. The grey mask in these figures is the defect region identified in Figure~\ref{fig:patch} (d).}
  \label{fig:patchG}
\end{figure}

\subsection{Defect region, topological consistency, and Burgers vector identification}\label{sec:defect}

Based on the estimated $G_0$, we can determine the defect region
$\Omega_d$ and also the Burgers vectors or equivalently the curl field $b$ used in the constraint of
our variational formulation \eqref{eq:energyelastic}.  We give the
details in this section.

\subsubsection{Burgers vectors and $\curl G$}
As explained previously, away from defects, $G$ can be interpreted as the gradient $\nabla\phi$ of a (fictitious) deformation $\phi$ deforming the configuration of the given image into a perfect reference crystal of a fixed orientation.
Now the fact that gradient fields are always curl-free can be exploited as a constraint
\begin{equation}\label{eqn:zeroCurl}
\curl G=0\quad\text{on }\Omega\setminus\Omega_d
\end{equation}
in the variational method.
In the defect region $\Omega_d$, however, the interpretation of $G$ as a deformation gradient breaks down since there is no smooth deformation of the crystal that can undo the lattice defect.
In fact, denoting the connected components of $\Omega_d$ by $\Omega_d^1,\ldots,\Omega_d^l$, it is relatively simple to see that the integral
\begin{equation}\label{eqn:BurgersVector}
B_i=\int_{\Omega_d^i}\curl G\,\ud x
\end{equation}
is the Burgers vector associated with the defect in $\Omega_d^i$. 
Indeed, consider any closed injective curve $\gamma:[0,1]\to\RR^2$ that encloses $\Omega_d^i$ counterclockwise, but does not intersect $\Omega_d$.
Figure~\ref{fig:BurgersVector}, left, shows a specific example of such a curve, which is chosen such that it connects a sequence of neighboring atoms.
Denote the curve interior by $\tilde\Omega$.
Since $\partial\tilde\Omega$ lies in the defect-free region,
for every $x\in\partial\tilde\Omega$ there is a deformation $\phi$ (the gradient of which is given by $G$)
that deforms a neighborhood of $x$ into the perfect reference configuration of the crystal.
Starting at $\gamma(0)$ we can thus iteratively map little segments of the curve $\gamma$ into the reference crystal via $\gamma\circ\phi^{-1}$,
resulting in a curve $\hat\gamma:[0,1]\to\RR^2$ on the reference configuration.
Note that $\hat\gamma$ is in general no longer closed or injective.
By Stokes' Theorem, using the tangent vector $n^\perp$ to $\partial\tilde\Omega$, we have
\begin{multline*}
\int_{\Omega_d^i}\curl G\,\ud x
=\int_{\tilde\Omega}\curl G\,\ud x
=\int_{\partial \tilde\Omega}Gn^\perp\,\ud\sigma\\
=\int_0^1G(\gamma(t))\dot\gamma(t)\,\ud t
=\int_0^1\dot{\hat\gamma}(t)\,\ud t
=\hat\gamma(1)-\hat\gamma(0)\,.
\end{multline*}
The vector on the right-hand side is the defect's Burgers vector and is obviously independent of the originally chosen curve $\gamma$.
If $\Omega_d^i$ contains an isolated dislocation surrounded by a regular lattice, $B_i$ just represents the Burgers vector of that dislocation.
If $\Omega_d^i$ contains multiple dislocations or even a section of a high angle grain boundary, $B_i$ represents the accumulated Burgers vector of all defects in $\Omega_d^i$
(note that a high angle grain boundary may be thought of as a string of dislocations with distance smaller than the lattice spacing so that the single dislocations are not clearly spatially separated; Figure~\ref{fig:PFC1}(c) gives an impression of the size of the numerically found $\Omega_d^i$.

\begin{figure}
\setlength{\unitlength}{1.2\linewidth}
\begin{center}%
\includegraphics[width=.4\unitlength]{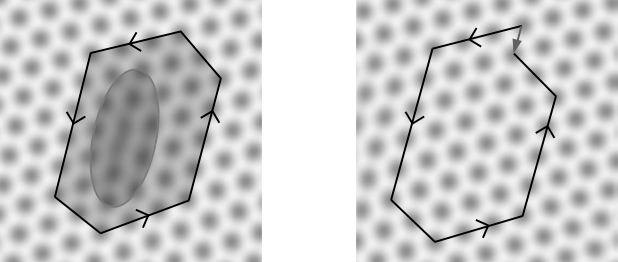}%
\begin{picture}(0,0)(.225,-.08)
\put(0,0){\vector(1,0){.05}}
\put(.015,.01){$\phi$}
\put(-.15,0){$\gamma$}
\put(.1,0){$\hat\gamma$}
\put(-.105,0){$\Omega_d^i$}
\put(-.06,.02){$\tilde\Omega$}
\put(.165,.065){$B_i$}
\end{picture}%
\end{center}%
\caption{A closed curve $\gamma$ around a defect in the crystal image breaks up when transported to the perfect crystal, the gap being the Burgers vector.}
\label{fig:BurgersVector}
\end{figure}

As in the case of $\Omega\setminus\Omega_d$, where we know $\curl G=0$, we also have a priori information on $\curl G$ in $\Omega_d^i$.
In particular we know that $B_i$ is a Burgers vector and thus must lie in the discrete set of Bravais lattice vectors of the perfect reference lattice $\mathcal L$. 
We thus identify $B_i$ by projecting the (potentially noisy) estimate $\int_{\Omega_d^i}\curl G_0\,\ud x$ onto $\mathcal L$
and then impose \eqref{eqn:BurgersVector} as a constraint on $G$.
In fact, instead of prescribing the accumulated curl in $\Omega_d^i$ via \eqref{eqn:BurgersVector} we may just as well prescribe
\begin{equation}\label{eqn:pointwiseCurlConstraint}
  \curl G=b_i\quad\text{on }\Omega_d^i
\end{equation}
for a function $b_i:\Omega_d^i\to\RR^2$ with
$\int_{\Omega_d^i}b_i\,\ud x=B_i$ (in mathematically more precise
terms, $b_i$ may be a distribution).  This is possible since we are
only interested in the field $G$ outside of $\Omega_d$ and since any
field $G:\Omega\to\RR^{2\times2}$ satisfying \eqref{eqn:zeroCurl} and
\eqref{eqn:BurgersVector} can be modified on $\Omega_d$ to a field
satisfying \eqref{eqn:pointwiseCurlConstraint}.  The function $b_i$ is here
simply chosen as $b_i=\mathrm{diag}(\alpha,\beta)\curl G_0$ with $\alpha,\beta\in\RR$
such that $\int_{\Omega_d^i}b_i\,\ud x=B_i$ (\textit{i.e.}, an overall
scaling of $\curl G_0$).  Summarizing, in our variational method to
extract $G$ from the noisy data $G_0$ we will prescribe the constraint
\begin{equation}\label{eqn:curlConstraint}
\curl G=b\qquad\text{for}\quad b=\begin{cases}0&\text{on }\Omega\setminus\Omega_d;\\b_i&\text{on }\Omega_d^i\,.\end{cases}
\end{equation}

\subsubsection{Refined defect regions}
On the one hand, the threshold $\eta$ to identify $\Omega_d$ should be chosen very low to yield thin and localized defect regions
(e.\,g.\ such that defect regions $\Omega_d^i$ around single dislocations stay separated from each other),
on the other hand, the thinner the identified defect region $\Omega_d^i$ the worse will the estimate of the Burgers vectors $B_i$ be.
A compromise is to first use thick defect regions $\tilde\Omega_d^i$ in order to estimate the Burgers vectors $B_i$
and then to impose the constraint \eqref{eqn:curlConstraint} with a much finer estimate of the $\Omega_d^i$.
However, it may happen that a thick patch $\tilde\Omega_d^i$ contains multiple thin connected components $\Omega_d^{i_1},\ldots,\Omega_d^{i_k}$.
In that case the $B_{i_j}$ are defined as the closest projections of $\int_{\Omega_d^{i_j}}\curl G_0\,\ud x$ onto $\mathcal L$ under the constraint $B_{i_1}+\ldots+B_{i_k}=\tilde B_i$,
where $\tilde B_i$ is the accumulated Burgers vector of the patch $\tilde\Omega_d^i$.
In order to obtain a very thin and localized $\Omega_d$ we simply identify the ridge of $3-\mass(b)$ inside the thick $\tilde\Omega_d$ and then dilate this ridge by a few pixels.
The ridge computation turns out to be sufficiently robust even in the presence of noise, as can be seen from the similarity between the detected defect regions with and without noise in Figures~\ref{fig:PFC1} and \ref{fig:PFC2}.

\subsubsection{Introduction of jump sets for topological consistency}
A Bravais lattice does typically not only exhibit translational, but
also rotational symmetry.  The so-called point group
$\pointgroup\subset SO(2)$ comprises all those rotations which leave
the reference lattice invariant.  This leads to an ambiguity in the
deformation gradient $G$: if $G$ correctly describes the local
configuration of the crystal, then $RG$ for any $R\in\pointgroup$ does
so as well.  Even though the constraint \eqref{eqn:curlConstraint} has the
effect that the matrix field $G$ will be locally consistent (in the
sense that for any $y$ in a neighborhood of $x\in\Omega$ we have
$|G(y)-G(x)| = \min_{R\in\pointgroup}|RG(y)-G(x)|$), global
consistency is often not guaranteed.  Indeed,
Figure~\ref{fig:pointGroup} shows a situation in which along a closed
path $\gamma\subset\Omega$, $G$ changes continuously from the identity
$I$ to an element $R\neq I$ of the point group.  Since $R$ describes
the same local crystal configuration as $I$, the curl where $G$ jumps
from $R$ to $I$ is spurious.  As in \cite{ElseyWirth:MMS}, this
inconsistency can be remedied by introducing a cut set $\jumpset$
across which the tangential component of $G$ is allowed to jump by a point group element,
\begin{equation*}
G^-n^\perp=RG^+n^\perp
\end{equation*}
for some $R:\jumpset\to\pointgroup$,
where $G^-$ and $G^+$ denote the value of $G$ on either side of $\jumpset$ and $n^\perp:\jumpset\to\RR^2$ denotes the tangent to $\jumpset$.

\begin{figure}
\centering
\setlength{\unitlength}{.8\linewidth}
\begin{picture}(.3,.3)
\put(0,0){\includegraphics[width=.3\unitlength]{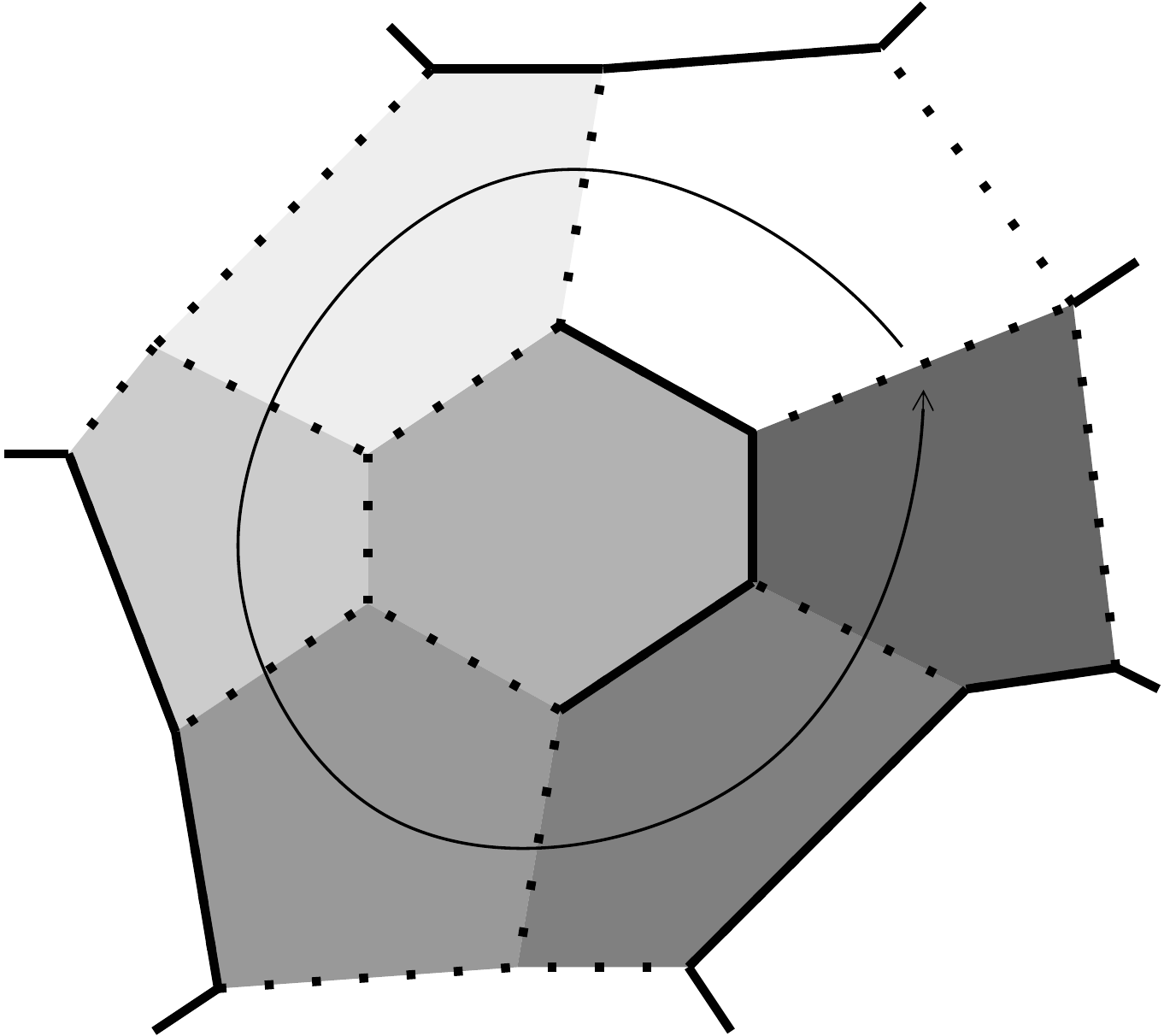}}
\put(.06,.04){\small$\gamma$}
\put(.2,.205){\small$G\!=\!I$}
\put(.23,.14){\small$G\!=\!R$}
\end{picture}
\caption{Along a closed path $\gamma$ traversing a sequence of crystal grains, the deformation gradient $G$ changes continuously from $I$ to $R\neq I$.
The gray shade indicates the local crystal orientation from the identity $I$ (white) to $R$ (dark gray).
Dots represent point dislocations; lines indicate high angle grain boundaries.
Along the path $\gamma$ all grains are connected by low angle grain boundaries.}
\label{fig:pointGroup}
\end{figure}

Note that we have large freedom to choose the cut set $\jumpset$.
For instance we could take $\jumpset$ to be composed of smooth lines
connecting the different connected components $\Omega_d^i$ of $\Omega_d$.
Those lines can easily be chosen in such a way that they do not intersect
and that the connected components, say $\Omega^i$, of $\Omega\setminus(\Omega_d\cup S)$ are simply connected.
Within each $\Omega^i$, the crystal is defect-free,
thus there is a deformation $\phi^i$ mapping $\Omega^i$ onto the reference crystal.
The matrix field $G$ will be the gradient of $\phi^i$ inside $\Omega^i$,
which is automatically consistent all over $\Omega^i$.
If $G$ jumps across $\jumpset$ between two neighboring components $\Omega^i$ and $\Omega^j$,
then for all $x\in S^{ij}:=\partial\Omega^i\cap\partial\Omega^j$ we must have $G^i(x)n^\perp(x)=RG^j(x)n^\perp(x)$ with a constant $R\in\pointgroup$,
since $S^{ij}$ lies inside a defect-free crystal region
($G^i$ and $G^j$ denote the value of $G$ in $\Omega^i$ and $\Omega^j$, respectively, while $n^\perp$ is the tangent vector to $\jumpset$).
Therefore, not only is $S$ smooth, but even $R$ constant on each curve segment of $S$.
Furthermore, up to multiplications with point group elements,
the optimal $G$ does not depend on the particular position and geometry of $S$.
Indeed, if $S^{ij}$ is shifted, producing new $\hat\Omega^i$, $\hat\Omega^j$, $\hat S^{ij}$,
then the new optimal field $\hat G$ is given by
\begin{equation*}
\hat G(x)=\begin{cases}
G(x)&\text{if }x\in\hat\Omega^i\cap\Omega^i\text{ or }x\in\hat\Omega^j\cap\Omega^j,\\
RG(x)&\text{if }x\in\hat\Omega^i\cap\Omega^j,\\
R^{-1}G(x)&\text{if }x\in\hat\Omega^j\cap\Omega^i.
\end{cases}
\end{equation*}%

For simplicity, the cut set $\jumpset$ in our computation is algorithmically chosen after the problem discretization.
First we sweep over all image pixels and reassign the pixel values via a fast marching type method.
Let $Q$ denote the set of already treated pixels (consisting initially of only one pixel).
Amongst the not yet treated neighbors of all pixels in $Q$, we pick the pixel $x$ with the smallest value of $\min_{R\in\pointgroup}|RG_0(x)-G_0(x_n)|$,
where $x_n\in Q$ denotes the neighboring pixel.
We replace its value $G_0(x)$ with $RG_0(x)$ for the minimizer $R\in\pointgroup$ and set $Q=Q\cup\{x\}$.
After all pixels have been swept through in this way, we set $S$ to be the union of interfaces between neighboring pixels $x_1,x_2$
that have $\min_{R\in\pointgroup}|RG_0(x_1)-G_0(x_2)|<|G_0(x_1)-G_0(x_2)|$.
Note that we also take care that $\Omega_d\cap\jumpset=\emptyset$ so that the
estimation of Burgers vectors is not impaired.

\subsubsection{Variational principle with topological jump set}

While we will use $G$ on the whole domain $\Omega$ in the numerical
algorithm, as the energy \eqref{eq:energyelastic} depends only on $G$
on $\Omega \setminus \Omega_d$, it is more convenient
to reformulate the problem with a bounded Lipschitz
domain $\Omega_0 := \Omega\setminus( \Omega_d \cup \jumpset)$ and
consider the Hilbert space
\begin{equation}
  H(\curl, \Omega_0)  = \Bigl\{ u
  \in 
  (L^2(\Omega_0))^2 \; \Big\vert \; \curl u \in L^2(\Omega_0) \Bigr\}, 
\end{equation}
with norm given by 
\begin{equation}
  \norm{u}_{H(\curl, \Omega_0)} = \norm{u}_{(L^2(\Omega_0))^2} + \norm{\curl u}_{L^2(\Omega_0)}.
\end{equation}
Note that $\partial \Omega_0 = S \cup \partial \Omega_d = S \cup
\bigcup_{i} \partial \Omega_d^i$, where we recall that $\Omega_d^i$ are
the connected components of $\Omega_d$. We define the tangential trace
operators $\gamma_t^-$ and $\gamma_t^+$ on the two sides of $S$ for $u
\in (C^{\infty}(\Omega_0))^2$ as
\begin{equation}
  \gamma_t^{-}(u) = 
  n^{\perp} \cdot  u \vert_{S^-}
  \quad \text{and} \quad 
  \gamma_t^{+}(u) = 
  n^{\perp} \cdot  u \vert_{S^+}, 
\end{equation}
where $n = (n_1, n_2)$ is the unit normal of $S$ and $n^{\perp} =
(-n_2, n_1)$. It is standard (see e.\,g., \cite{DuvautLions:77}) using
Green formula that $\gamma_t^{\pm}$ can be extended to bounded linear
operators from $H(\curl, \Omega_0)$ to the fractional Hilbert space $H^{-1/2}(S)$. Similarly, we
may also define the trace operator $\gamma_t$ on $\partial \Omega_d$.

For the (inverse) deformation gradient $G$, it is then natural to
consider the space $H(\curl, \Omega_0)^2$ as the rows of $G$: $G^{1:}$
and $G^{2:}$ lie in $H(\curl, \Omega_0)$. Thus, we generalize
the tangential trace operators $\gamma_t^{\pm}$ and $\gamma_t$ also to
$G$, such that they act on each row of $G$ and map $H(\curl,
\Omega_0)^2$ to $(H^{-1/2}(S))^2$ and $(H^{-1/2}(\partial
\Omega_d))^2$, respectively.  In particular, as we are only concerned with $G$ on
$\Omega_0$, the constraint $\curl G = b$ on $\Omega_d$ is equivalent to
$\int_{\partial \Omega_d^i} \gamma_t(G) \ud \sigma = \int_{\Omega_d^i}
b_i \ud x$.

With this, we may formulate \eqref{eq:energyelastic} with the
topological jump set more precisely as
\begin{align}\label{eq:energySH}
  \min_{G \in H(\curl, \Omega_0)^2} &\int_{\Omega\setminus\Omega_{d}} \abs{G - G_0}^2 + W(G) \ud x \\
  \text{s.\,t.\ }& \curl G = 0 \text{ on }\Omega_0\, \notag\\
  & \int_{\partial \Omega_d^i} \gamma_t(G) \ud \sigma =
  \int_{\Omega_d^i} b_i \ud x\, \notag \\
  & \gamma_t^-(G) = R \gamma_t^+(G) \text{ on }\jumpset\,,\notag
\end{align}
for some fixed piecewise constant $R: S \to P$. Here the last equation
$\gamma_t^-(G) = R \gamma_t^+(G)$ is understood in the space of
$(H^{-1/2}(S))^2$ (this is well defined as $R f \in (H^{1/2}(S))^2$
for any $f \in (H^{1/2}(S))^2$ due to the regularity of $R$).

\begin{theorem}
  Assume that the topological jump set $\jumpset$ and the boundary of
  $\Omega_d$ are Lipschitz.  Given $G_0 \in \mc{M}_{2 \times
    2}(L^2(\Omega \setminus \Omega_d))$, $R: S \to P$ piecewise
  constant, and $b$ a finite measure on $\Omega_d$, the minimizer to
  the variational problem \eqref{eq:energySH} exists.
\end{theorem}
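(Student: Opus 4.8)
\section*{Proof proposal}

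The plan is to apply the direct method of the calculus of variations. Writing $E(G)=\int_{\Omega\setminus\Omega_d}\abs{G-G_0}^2+W(G)\,\ud x$ for the objective and $\mc{C}$ for the admissible set cut out by the three constraints in \eqref{eq:energySH}, I would proceed by: (i) checking $\mc{C}\neq\emptyset$ with $\inf_{\mc C}E<\infty$; (ii) establishing that $E$ is bounded below and coercive on $\mc{C}$, so that a minimizing sequence is bounded and admits a weakly convergent subsequence; (iii) showing $\mc{C}$ is weakly closed; and (iv) proving that $E$ is weakly lower semicontinuous. Then the weak limit lies in $\mc{C}$ and attains the infimum.

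For (i)--(ii): since the Lam\'e constants satisfy $\mu>0$ and $\mu+\lambda\geq0$, and since $\abs{\det G}\leq\tfrac12\abs{G}^2$ for $2\times2$ matrices, the linear determinant term in \eqref{eq:elasticEnergy} can be absorbed into $\tfrac{\mu}{2}\abs{G}^2$ to give the pointwise bound $W(G)\geq\tfrac{\mu+\lambda}{2}(\det G-1)^2\geq0$. Hence $W\geq0$, and an admissible field of finite energy exists (e.g.\ a bounded field assembled from the component-wise deformations $\phi^i$ underlying the construction of $b$, $R$, $\jumpset$), so $\inf_{\mc C}E$ is finite. On $\mc{C}$ we have $\curl G=0$, so $\norm{G}_{H(\curl,\Omega_0)}=\norm{G}_{(L^2)^2}$, and the fidelity term gives $\tfrac12\norm{G}_{(L^2)^2}^2\leq E(G)+\norm{G_0}_{(L^2)^2}^2$. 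Thus a minimizing sequence $G_n$ is bounded in $H(\curl,\Omega_0)^2$, and the bound on $\int(\det G_n-1)^2$ additionally keeps $\det G_n$ bounded in $L^2$. Passing to a subsequence, $G_n\rightharpoonup G$ in $H(\curl,\Omega_0)^2$.

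For (iii): the constraints are affine and are defined through bounded linear maps, hence weakly closed. Weak $H(\curl)$ convergence preserves $\curl G=0$; the Burgers-vector constraints $\int_{\partial\Omega_d^i}\gamma_t(G)\,\ud\sigma=\int_{\Omega_d^i}b_i$ are values of bounded linear functionals (since $\gamma_t:H(\curl,\Omega_0)\to H^{-1/2}(\partial\Omega_d)$ is bounded and pairing against $1\in H^{1/2}$ is continuous); and the jump condition $\gamma_t^-(G)=R\gamma_t^+(G)$ is the kernel of the bounded linear map $G\mapsto\gamma_t^-(G)-R\gamma_t^+(G)$ into $(H^{-1/2}(\jumpset))^2$, using that $R$ multiplies $H^{1/2}(\jumpset)$ boundedly. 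All three therefore survive in the weak limit, so $G\in\mc{C}$.

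Step (iv) is the crux and the main obstacle, because $W$ is \emph{not} convex in $G$: the terms $-\mu\det G$ and $(\det G-1)^2$ are problematic. The key is to exploit $\curl G=0$. On each simply connected component $\Omega^i$ of $\Omega_0$ the rows of $G_n$ are curl-free, so $G_n=\nabla\Psi_n$ with $\Psi_n$ bounded in $H^1(\Omega^i;\RR^2)$ (by Poincar\'e), whence $\Psi_n\rightharpoonup\Psi$ in $H^1$ with $\nabla\Psi=G$. By the weak continuity of the two-dimensional Jacobian---writing $\det\nabla\Psi_n=\partial_{x_1}(\Psi_{n,1}\partial_{x_2}\Psi_{n,2})-\partial_{x_2}(\Psi_{n,1}\partial_{x_1}\Psi_{n,2})$ and combining the compact embedding $H^1\hookrightarrow\hookrightarrow L^4$ with the weak $L^2$ convergence of the gradients---one gets $\det G_n\to\det G$ in $\mathcal{D}'$. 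Since $\det G_n$ is bounded in $L^2$, this upgrades to $\det G_n\rightharpoonup\det G$ weakly in $L^2$. Consequently $-\mu\int\det G_n\to-\mu\int\det G$, and by convexity and weak $L^2$ lower semicontinuity $\liminf\int(\det G_n-1)^2\geq\int(\det G-1)^2$; the remaining terms $\int\abs{G-G_0}^2$ and $\tfrac{\mu}{2}\int\abs{G}^2$ are convex and hence weakly lower semicontinuous. Summing over the components yields $\liminf_nE(G_n)\geq E(G)$, so $G$ is a minimizer. Equivalently, one may observe that $W$ is polyconvex---a convex function of $(G,\det G)$---and invoke the standard lower semicontinuity theorem for polyconvex integrands.
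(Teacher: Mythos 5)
Your proposal is correct, and its skeleton (direct method: coercivity from the fidelity term plus $\curl G=0$, weak compactness in $H(\curl,\Omega_0)^2$, weak closedness of the affine constraint set, then lower semicontinuity) coincides with the paper's proof. The genuine difference is at the decisive step (iv). The paper disposes of it in one line, asserting that $\int W(G)\ud x$ is lower semicontinuous with respect to \emph{weak} $(L^2)^{2\times2}$ convergence; as you correctly point out, that assertion cannot stand on its own, because $W$ is not convex: writing
\begin{equation*}
W(G)=\tfrac{\mu}{2}\bigl((G^{11}-G^{22})^2+(G^{12}+G^{21})^2\bigr)+\tfrac{\mu+\lambda}{2}(\det G-1)^2,
\end{equation*}
one sees $W(tI)=\tfrac{\mu+\lambda}{2}(t^2-1)^2$ is a double well, and for integral functionals on $L^2$ without a differential constraint, weak sequential lower semicontinuity forces convexity of the integrand. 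Your argument supplies exactly the missing compensated-compactness ingredient: the minimizing sequence lies in the constraint set, so it is curl-free, hence locally a gradient $\nabla\Psi_n$; weak continuity of the two-dimensional Jacobian upgrades (via the $L^2$ bound on $\det G_n$ coming from the $(\det G-1)^2$ term) to $\det G_n\rightharpoonup\det G$ in $L^2$, after which polyconvexity finishes the job. In short, the paper's proof is shorter but elides the one step where the structure of the problem (the curl constraint) is actually indispensable, while yours makes that step rigorous. Two small caveats on your write-up: the Jacobian argument as stated uses simple connectedness of the components $\Omega^i$ of $\Omega_0$, which is part of the paper's construction of $\jumpset$ but not of the theorem's hypotheses --- this is harmless, since distributional convergence of $\det\nabla\Psi_n$ is local and one may instead work on balls, where curl-free fields are always gradients; and nonemptiness of the admissible class with finite infimum is asserted rather than proved (the paper silently assumes it as well by starting from a minimizing sequence).
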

\begin{proof}
  We consider the admissible space of $G$:
  \begin{multline*}
    \mc{A} = \Bigl\{ G \in H(\curl, \Omega_0)^2 \;\Big\vert\; \curl G = 0 \text{ on } \Omega_0,\, 
    \int_{\partial \Omega_d^i} \gamma_t(G) \ud \sigma =
    \int_{\Omega_d^i} b_i \ud x, \, \\
    \gamma_t^-(G) = R \gamma_t^+(G) \text{ on } \jumpset \Bigr\}.
  \end{multline*}
  Note that this is a closed linear subspace of $H(\curl, \Omega_0)^2$, and
  hence any bounded sequence is weakly compact in $\mc{A}$. Given
  $\{G^{(n)}\}$ a minimizing sequence of \eqref{eq:energySH}, as the
  energy is uniformly bounded, we have
  \begin{equation*}
    \int_{\Omega \setminus \Omega_{d}} \abs{G^{(n)} - G_0}^2 \ud x \leq C 
  \end{equation*}
  for some constant $C$. Together with $\curl G^{(n)} = 0$, we obtain
  that $\norm{G^{(n)}}_{H(\curl, \Omega_0)^2}$ is uniformly bounded, and
  hence weakly convergent (in $H(\curl, \Omega_0)^2$) to $G^{(\infty)}
  \in \mc{A}$. The existence then follows from the lower
  semi-continuity of $\int \abs{G^{(n)} - G_0}^2\ud x$ and $\int W(G)\ud x$ with
  respect to the weak $(L^2(\Omega_0 \setminus
  \Omega_d))^{2\times 2}$ topology (which is clearly weaker than the weak topology
  of $H(\curl, \Omega_0)^2$).
\end{proof}

\subsection{Constrained minimization algorithm}\label{sec:minimize}
In this section we describe the numerical algorithm to solve the
minimization problem \eqref{eq:energyelastic} or rather the
corresponding version with topological jump set $\jumpset$
\eqref{eq:energySH}. It is more convenient to
describe the algorithm for the following simplified formulation,
which is equivalent after discretization.
\begin{align}
\label{eq:energyS}
\min_{G:\Omega\to\RR^{2\times2}}&\int_{\Omega\setminus\Omega_{d}} \abs{G - G_0}^2 + W(G) \ud x \\
\text{s.\,t.\ }&\curl G = b\text{ on }\Omega\setminus\jumpset\,,\quad
G^- = R G^+ \text{ on }\jumpset\,,\nonumber
\end{align}
where $G^-$ and $G^+$ denote the value of $G$ on either side of $\jumpset$.
For simplicity we use the same notation for the continuous and discrete objects.

\subsubsection{Finite difference discretization}
The image domain $\Omega$ is discretized via $M\cdot N$ Cartesian pixels,
indexed by $x$-$y$-position $(m,n)\in\Omega=\{1,\ldots,M\}\times\{1,\ldots,N\}$
(the pixel spacing is assumed to be one).
For an index $m$ we denote by $m^+=m+1$ and $m^-=m-1$ the next larger or next smaller index,
where for simplicity we assume periodic boundary conditions and use cyclic indexing, i.\,e.\ $(M^+,n)=(1,n)$, $(m,N^+)=(m,1)$, $(1^-,n)=(M,n)$, $(m,1^-)=(m,N)$ for all $(m,n)\in\Omega$.
The matrix fields are discretized accordingly, $G,G_0\in(\RR^{2\times2})^{M\times N}$.
The jump set $\jumpset$ follows the edges between the pixels
and is represented as a collection of horizontal or vertical pixel pairs,
$\jumpset\subset\{((m,n),(k,l))\in\Omega\times\Omega\ :\ (k,l)=(m^+,n)\text{ or }(k,l)=(m,n^+)\}$.
Furthermore, we define the function $R:S\to\pointgroup$
such that $R_{(m,n),(k,l)}$ is the point group element with smallest distance to $(G_0)_{m,n}(G_0)_{k,l}^{-1}$.
$R$ is extended to $\Omega\times\Omega$ by the identity in $P$.
Derivatives in $x$- and $y$-direction are replaced by finite differences that respect the point group equivalence across $\jumpset$,
\begin{eqnarray*}
(\partialDisc_x G)_{m,n}^{ij}=(R_{(m,n),(m^+,n)}G_{m^+,n})^{ij}-G_{m,n}^{ij}\,,\\
(\partialDisc_y G)_{m,n}^{ij}=(R_{(m,n),(m,n^+)}G_{m,n^+})^{ij}-G_{m,n}^{ij}\,,
\end{eqnarray*}
where superscript $ij$ denotes the $(i,j)$-matrix entry.
Throughout, a superscript $d$ denotes discrete differential operators.
In particular, the discrete curl and Laplacian are defined as
\begin{align*}
&\curlDisc:(\RR^{2\times2})^{M\times N}\to(\RR^2)^{M\times N}\,,\\
&\hspace{20ex}(\curlDisc G)_{m,n}=\partialDisc_xG_{m,n}^{:2}-\partialDisc_yG_{m,n}^{:1}\,,\\
&\DeltaDisc:(\RR^2)^{M\times N}\to(\RR^2)^{M\times N}\,,\\
&\hspace{20ex}\DeltaDisc=-\curlDisc(\curlDisc)^*\,,
\end{align*}
where $G_{m,n}^{:i}$ denotes the $i$\textsuperscript{th} column of the
matrix $G_{m,n}$ and the superscript $*$ denotes the adjoint operator, which
in this particular case is given by
\begin{align*}
&(\curlDisc)^*:(\RR^2)^{M\times N}\to(\RR^{2\times2})^{M\times N}\,,\\
&((\curlDisc)^*V)_{m,n}=\\
&\left[\left.V_{m,n}-R_{(m,n^-),(m,n)}^TV_{m,n^-}\right|R_{(m^-,n),(m,n)}^TV_{m^-,n}-V_{m,n}\right]\,.
\end{align*}

\subsubsection{Projected descent in constraint space}
The discrete optimization problem reads
\begin{eqnarray*}
&\min_{G\in C_b}E[G]\\
&\text{for }E[G]=\sum_{(m,n)\in\Omega}\left(\abs{G_{m,n} - (G_0)_{m,n}}^2 + W(G_{m,n})\right)\\
&\text{and }C_b=\{G\in(\RR^{2\times2})^{M\times N}\ :\ \curlDisc G=b\}\,.
\end{eqnarray*}
The constraint space $C_b$ is an affine space and can be expressed as $\hat G+C_0$ for a $\hat G$ with $\curlDisc\hat G=b$.
Hence, the energy can be minimized using a standard projected nonlinear conjugate gradient (NCG) descent in $C_b$.
In more detail, we employ a Fletcher--Reeves NCG method
in which the derivative $\partialDisc_GE$ of $E$ with respect to $G$ is always orthogonally projected onto $C_0$
(i.\,e.\ onto its component parallel to $C_b$)
so that the algorithm is performed within the subspace $C_b$.
Due to accumulating numerical errors we also have to project the current estimate $G$ back onto $C_b$ from time to time.
Denoting the projection onto $C_b$ by $\proj{C_b}$,
the NCG algorithm is initialized with $\hat G=\proj{C_b}G_0$.

The projection $\proj{C_b}F$ is the solution to the constraint minimization $\min_{\curlDisc G=b}\sum_{m,n}\abs{G_{m,n}-F_{m,n}}^2$,
which satisfies the optimality conditions
\begin{equation*}
b=\curlDisc F\,,\qquad0=F-G+(\curlDisc)^*\Lambda
\end{equation*}
for a Lagrange multiplier $\Lambda\in(\RR^2)^{M\times N}$.
Applying $\curlDisc$ to the second equation we obtain
\begin{equation*}
\curlDisc G-b=-\DeltaDisc\Lambda\,.
\end{equation*}
Note that $\ker(\DeltaDisc)\perp\range(\curlDisc)$.
Denoting by $(-\DeltaDisc)^{-1}:\range(\DeltaDisc)\to\ker(\DeltaDisc)^\perp$ the inverse of $-\DeltaDisc$, we obtain $\Lambda=(-\DeltaDisc)^{-1}(\curlDisc G-b)$ and thus
\begin{equation*}
\proj{C_b}G=F=G-(\curlDisc)^*(-\DeltaDisc)^{-1}(\curlDisc G-b)\,.
\end{equation*}
Once $\hat G$ is computed, the projection onto $C_b$ can also be obtained as
\begin{equation*}
\proj{C_b}F=\hat G+\proj{C_0}(F-\hat G)\,.
\end{equation*}

\subsubsection{Fast projection onto constraint space}
For the above projection we need to invert the discrete Laplacian operator.
Using periodic boundary conditions this would be very fast using FFT if there was no jump set $\jumpset$.
However, with nonempty jump set $\jumpset$, our finite difference operators do not turn into pointwise multiplications in Fourier space.
In order to obtain a fast inversion we decompose $\DeltaDisc=\DeltaDisc_0+J$,
where $\DeltaDisc_0$ is the standard discrete Laplacian and $J$ the linear operator accounting for the point group,
\begin{eqnarray*}
&\DeltaDisc_0:(\RR^2)^{M\times N}\to(\RR^2)^{M\times N}\,,\\
&(\DeltaDisc_0V)_{m,n}=V_{m^-,n}+V_{m^+,n}+V_{m,n^-}+V_{m,n^+}-4V_{m,n}\,,
\end{eqnarray*}
\begin{align*}
&J:(\RR^2)^{M\times N}\to(\RR^2)^{M\times N}\,,\ \\
(JV)_{m,n}&=(R_{(m^-,n),(m,n)}^T-I)V_{m^-,n}\\
&+(R_{(m,n),(m^+,n)}-I)V_{m^+,n}\\
&+(R_{(m,n^-),(m,n)}^T-I)V_{m,n^-}\\
&+(R_{(m,n),(m,n^+)}-I)V_{m,n^+}\,.
\end{align*}
Note that $J$ is symmetric, and it is highly sparse and thus has a very small range $\rngJ=\range J$ and a large kernel $\ker J=\ker J^\TT=\rngJ^\perp$.
If we decompose $V\in(\RR^2)^{M\times N}$ into
\begin{equation*}
V=V_\rngJ+V_{\rngJ^\perp}\in\rngJ\oplus\rngJ^\perp\,,
\end{equation*}
then we obtain
\begin{equation}\label{eqn:splitSystem}
-\DeltaDisc V=B
\qquad\Leftrightarrow\qquad
-\DeltaDisc_0V_{\rngJ^\perp}=(\DeltaDisc_0+J)V_\rngJ+B\,.
\end{equation}
The solvability condition tells us that $\ker\DeltaDisc_0$ is orthogonal to the right-hand side,
so we can just as well project the right-hand side onto $(\ker\DeltaDisc_0)^\perp$ by subtracting the mean,
\begin{equation*}
-\DeltaDisc_0V_{\rngJ^\perp}=\proj{(\ker\DeltaDisc_0)^\perp}\left((\DeltaDisc_0+J)V_\rngJ+B\right)\,.
\end{equation*}
Now choosing a basis $\rngJ=\langle v_1,\ldots,v_K\rangle$ and $\ker\DeltaDisc_0=\langle s_1,s_2\rangle$, we write
\begin{eqnarray*}
V_\rngJ=\sum_{i=1}^K\lambda_iv_i\,,
\end{eqnarray*}
\begin{align*}
V_{\rngJ^\perp}&= a_1s_1+a_2s_2+(-\DeltaDisc_0)^{-1}\proj{(\ker\DeltaDisc_0)^\perp}\left((\DeltaDisc_0+J)V_\rngJ+B\right)\\
&=a_1s_1+a_2s_2+(-\DeltaDisc_0)^{-1}\proj{(\ker\DeltaDisc_0)^\perp}B\\
&\qquad +\sum_{i=1}^K\lambda_i\left((-\DeltaDisc_0)^{-1}\proj{(\ker\DeltaDisc_0)^\perp}J v_i-v_i\right)\,,
\end{align*}
where $(-\DeltaDisc_0)^{-1}:(\ker\DeltaDisc_0)^\perp\to(\ker\DeltaDisc_0)^\perp$ denotes the inverse of $-\DeltaDisc_0$.
The degrees of freedom $\lambda_1,\ldots,\lambda_K,a_1,a_2$ now have to satisfy the $K+2$ equations
$v_j\cdot V_{\rngJ^\perp}=0$, $j=1,\ldots,K$, (due to $v_j\in\rngJ$) and $s_j\cdot \left((\DeltaDisc_0+J)V_\rngJ+B\right)=0$, $j=1,2$, (the solvability condition for \eqref{eqn:splitSystem})
where the dot denotes the dot product.
In detail, the equations are given by
\begin{equation}\label{eqn:invertLaplace}
\left.
\begin{array}{rcl}
0&=&a_1v_j\cdot s_1+a_2v_j\cdot s_2+v_j\cdot (-\DeltaDisc_0)^{-1}\proj{(\ker\DeltaDisc_0)^\perp}B\\
& &+\sum_{i=1}^K\lambda_i\left(\left((-\DeltaDisc_0)^{-1}\proj{(\ker\DeltaDisc_0)^\perp}v_j\right)\cdot J v_i-v_j\cdot v_i\right)\,,\\
0&=&s_j\cdot B+\sum_{i=1}^K\lambda_is_j\cdot J v_i\,.
\end{array}
\right\}
\end{equation}
To solve for $\lambda:=(\lambda_1,\ldots,\lambda_K,a_1,a_2)^T$, we write the linear system as a matrix vector equation.
Since we need to solve systems of the form $-\DeltaDisc V=B$ several times for different values of $B$, we perform a QR decomposition,
\begin{equation*}
QRP\lambda=\text{rhs}\,,
\end{equation*}
where $P$ is a permutation matrix such that the lower rows of $R$ are zero
(this is possible since the solution to $-\DeltaDisc V=B$ is only uniquely specified up to a two-dimensional subspace due to $\dim\ker\DeltaDisc=2$)
and all other rows have nonzero diagonal elements.
Note that this decomposition is the bottleneck of the algorithm with a complexity of $O(K^3)$.
The degrees of freedom corresponding to the zero-rows can now be chosen freely (say, as zero), and the resulting
\begin{align*}
V&=a_1s_1+a_2s_2+(-\DeltaDisc_0)^{-1}\proj{(\ker\DeltaDisc_0)^\perp}B\\
& \qquad +\sum_{i=1}^K\lambda_i(-\DeltaDisc_0)^{-1}\proj{(\ker\DeltaDisc_0)^\perp}J v_i\\
&= a_1s_1+a_2s_2+(-\DeltaDisc_0)^{-1}\proj{(\ker\DeltaDisc_0)^\perp}\left[\sum_{i=1}^K\lambda_iJ v_i+B\right]
\end{align*}
solves $B=-\DeltaDisc V$.
Note that $(-\DeltaDisc_0)^{-1}$ can readily be computed via FFT
and that the $v_i$ can be chosen as shifted versions of $\hat v^j\in(\RR^2)^{M\times N}$, $j=1,2$,
with $\hat v^j=0$ except for the $j$\textsuperscript{th} entry of $\hat v^j_{1,1}$ being one.
Thus, $(-\DeltaDisc_0)^{-1}\proj{(\ker\DeltaDisc_0)^\perp}v_i$ can be evaluated efficiently as just a shift of $(-\DeltaDisc_0)^{-1}\proj{(\ker\DeltaDisc_0)^\perp}\hat v^j$.

\section{Numerical results}

In this section, we present several numerical examples for images
coming from both computer simulations and real experiments to
illustrate the performance of our method. The corresponding code is open source and available as SynCrystal at
\url{https://github.com/SynCrystal/SynCrystal}. We first apply our method to
analyze synthetic crystal images of phase field crystals
(PFC)~\cite{ElderGrantMartin:04}. To show the robustness of our
method, both noiseless and noisy examples of PFC images are
presented. Moreover, we examine different kinds of crystal images
from real experiments: (1) a TEM-image of GaN with a series of
isolated defects that leads to a large angle grain boundary; (2) a
photograph of a bubble raft with a thick transition region at grain
boundaries; (3) a TEM-image of Al with a noisy and irregular grain
boundary.

In all of these examples, we compare the initial estimated strain $G_0$ provided by the SST-based analysis described in \S\ref{sec:synsquez} and the improved results $G$ from the variational method \eqref{eq:energyS}, where each time we display crystal orientations, difference in principal stretches, and volume distortion. For better visualization we mask out the identified defect regions (in which there is no meaningful notion of strain). 
The curl of $G_0$ (which in general violates the physical constraint of being zero outside $\Omega_d$ and of being compatible with the defects' Burgers vectors)
as well as the average $\curl G$ per connected defect region (which is compatible with the defects' Burgers vectors) are also shown.

The main parameters for the SST-based analysis are two geometric scaling parameters $s$ and $t$ in the 2D SST (for details see \cite{Robustness}). Smaller scaling parameters result in better robustness of SSTs while larger scaling parameters give more accurate estimates in noiseless cases \cite{Robustness}. Hence, we adopt $t=s\approx 1$ in the examples with less noise and use $t=s\approx 0.8$ when the crystal image is noisy. As discussed in \cite{Robustness}, for images with heavy noise, the synchrosqueezed transform can still provide reasonable initial guess via a highly redundant transform with more computational cost. The variational model parameters $\lambda$ and $\mu$ in \eqref{eq:elasticEnergy} are simply set to $1$ in all of the following examples.

During the synchrosqueezed transform \notinclude{ very efficient as shown in \cite{SSCrystal}. It }we also downsample the original image in the frequency domain and reduce the mesh size for the variational optimization. As a result, the computational mesh of the optimization model is actually independent of the resolution of the crystal images.
To achieve a short computation time, we typically choose the downsampling rate as large as possible such that defects can still be localized with a precision below one atom spacing.
Figure\,\ref{fig:comp} will show a comparison between different downsampling rates.

The main computational cost arises from inverting the Laplace operator via the solution of \eqref{eqn:invertLaplace}.
The number of variables is proportional to the number $K$ of pixel pairs of the singularity set $\jumpset$, and the solution cost scales like $K^3$.
Note that, since $\jumpset$ has codimension $1$, $K$ behaves roughly like the number of image pixels along one direction (and not like the total number of pixels),
so the computations are still feasible for quite large images.
The runtime for the examples in Figure\,\ref{fig:Real1img} is less than $10$ seconds,
the runtime for the $1024\times1024$ PFC image in Figure\,\ref{fig:PFC1img}, left, is $20$ seconds.
The runtime for the example in Figure\,\ref{fig:Real2img} with heavy noise is about $1$ minute due to extra effort to obtain robustness of the synchrosqueezed transform.

\subsection{Synthetic crystal images}

\begin{figure}
\begin{center}
   \begin{tabular}{cc}
       \includegraphics[height=2.4in]{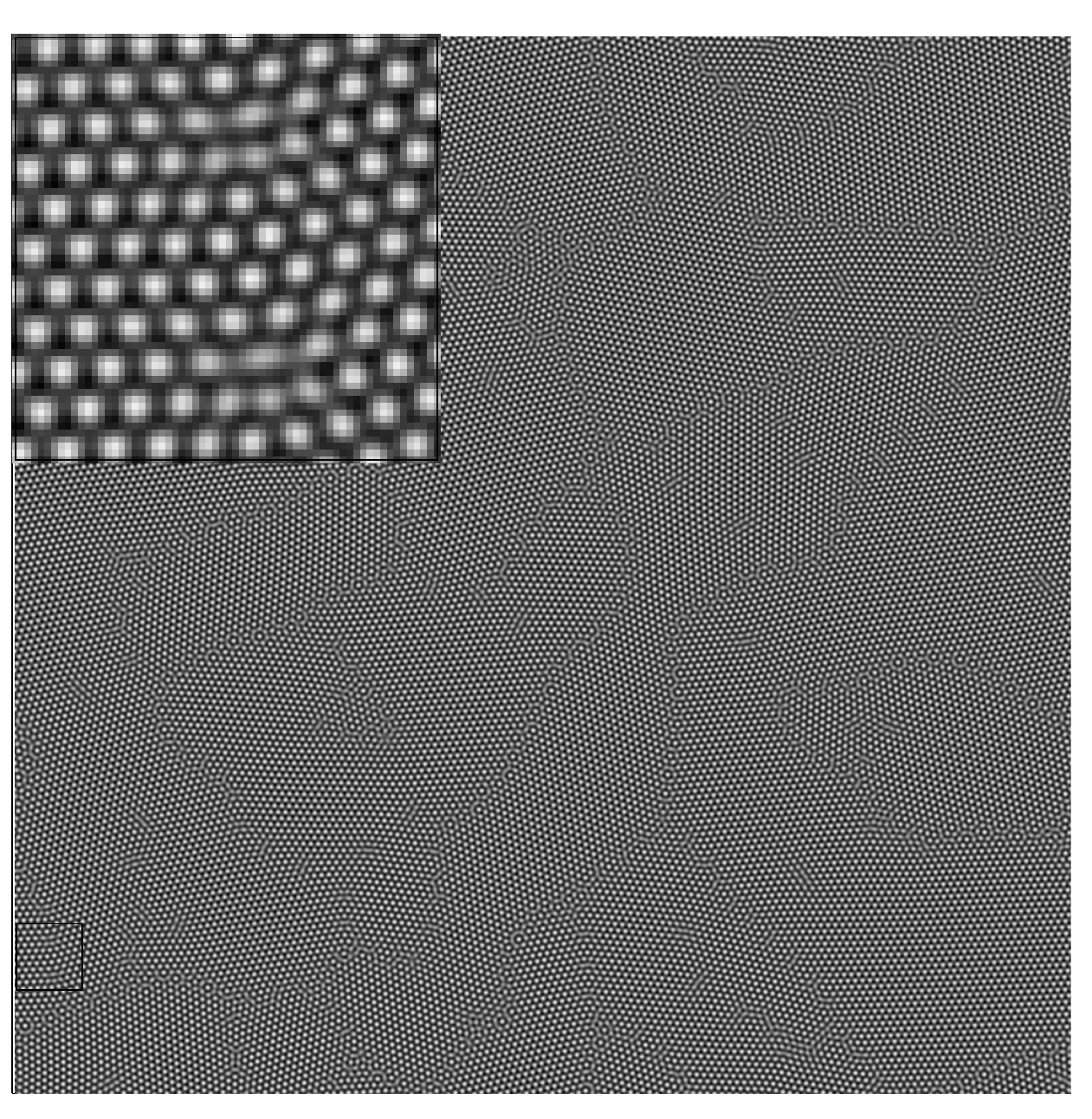}&\includegraphics[height=2.4in]{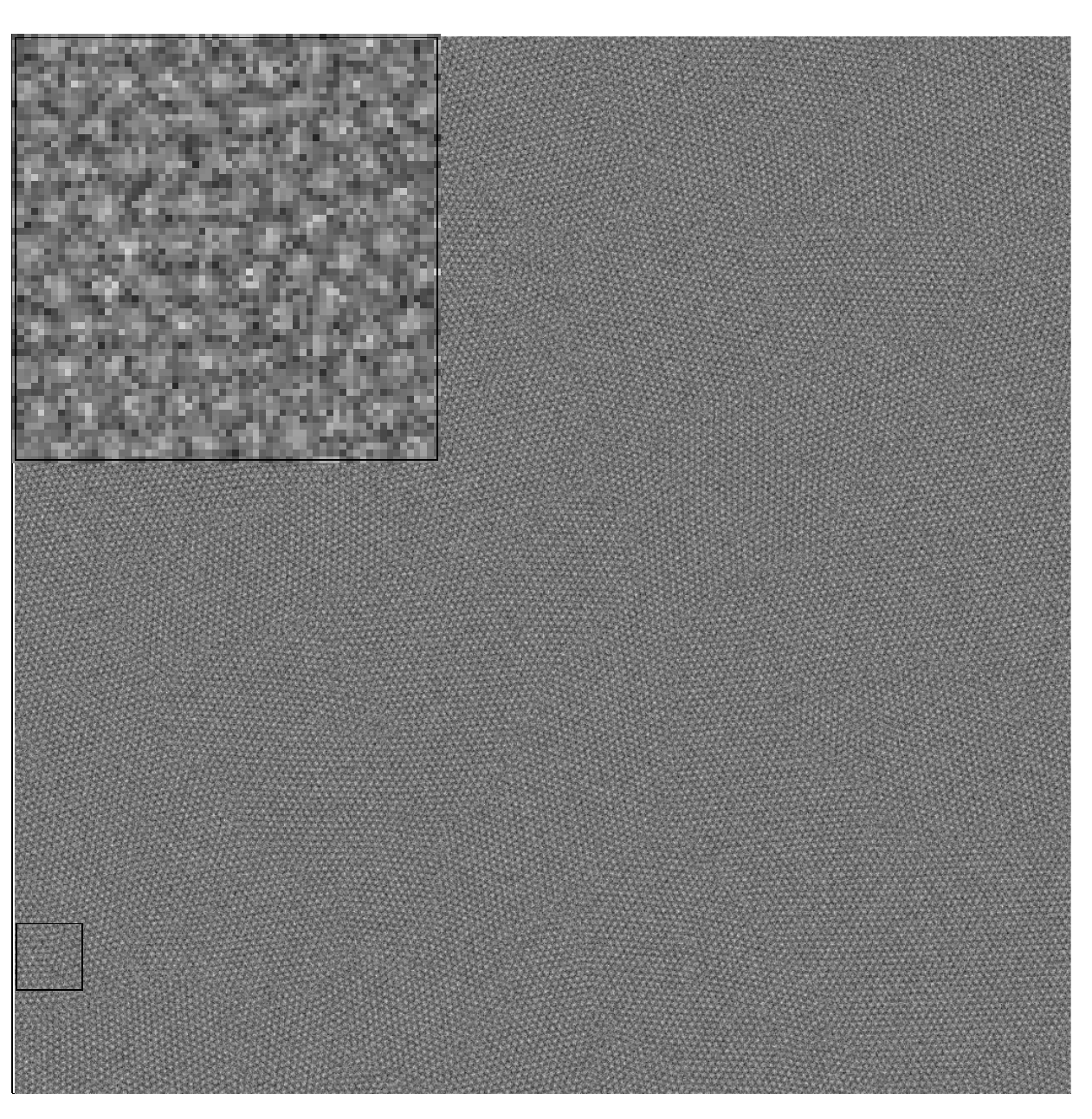}
    \end{tabular}
 \end{center}
 \caption{A noise-free PFC image (left) and its noisy version (right) with a zoom-in detailing the marked part.}
 \label{fig:PFC1img}
  \end{figure}
       
\begin{figure}
\begin{center}
   \begin{tabular}{cc}
        \includegraphics[height=1.2in]{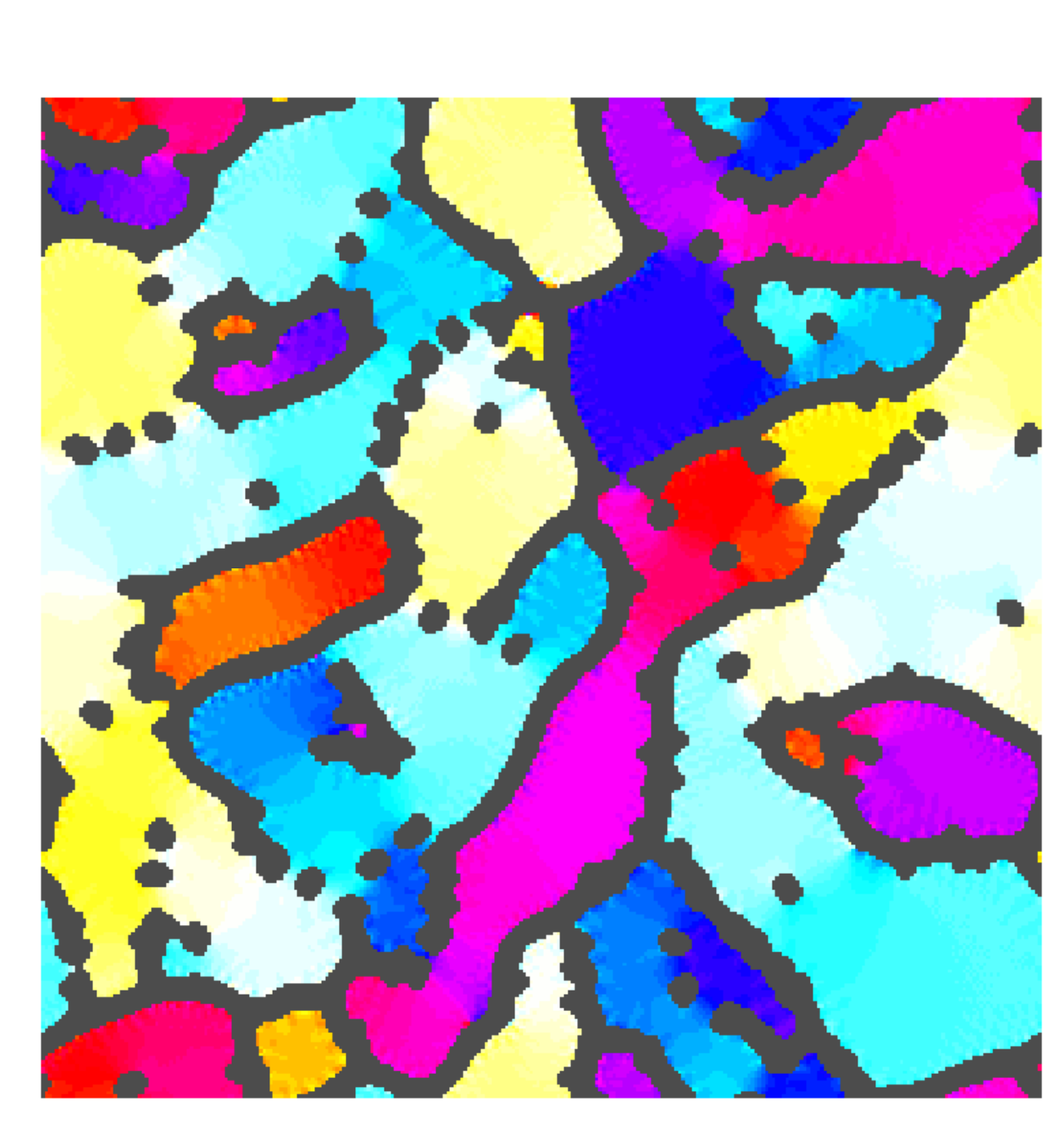} \includegraphics[height=1.2in]{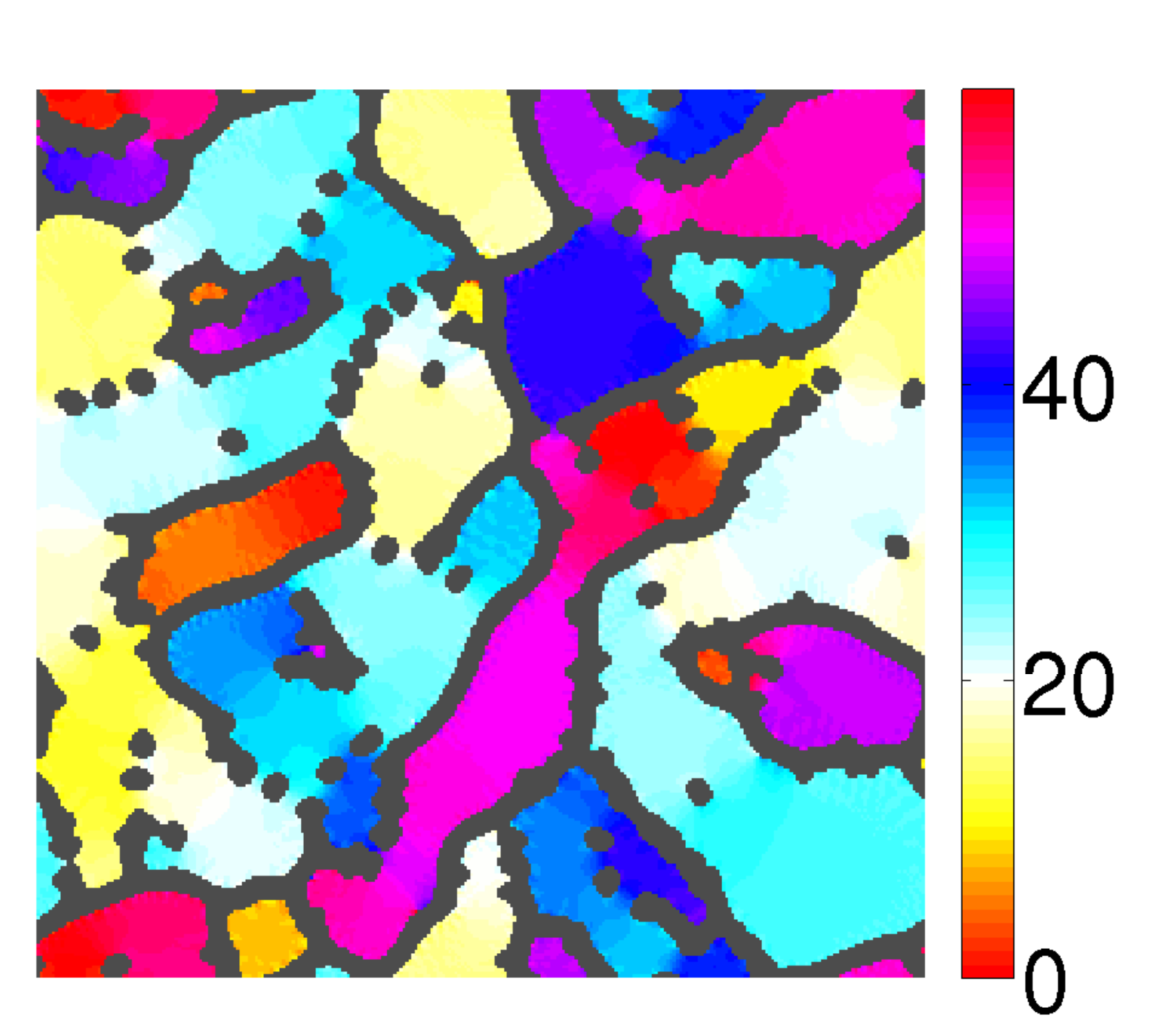}  &
         \includegraphics[height=1.2in]{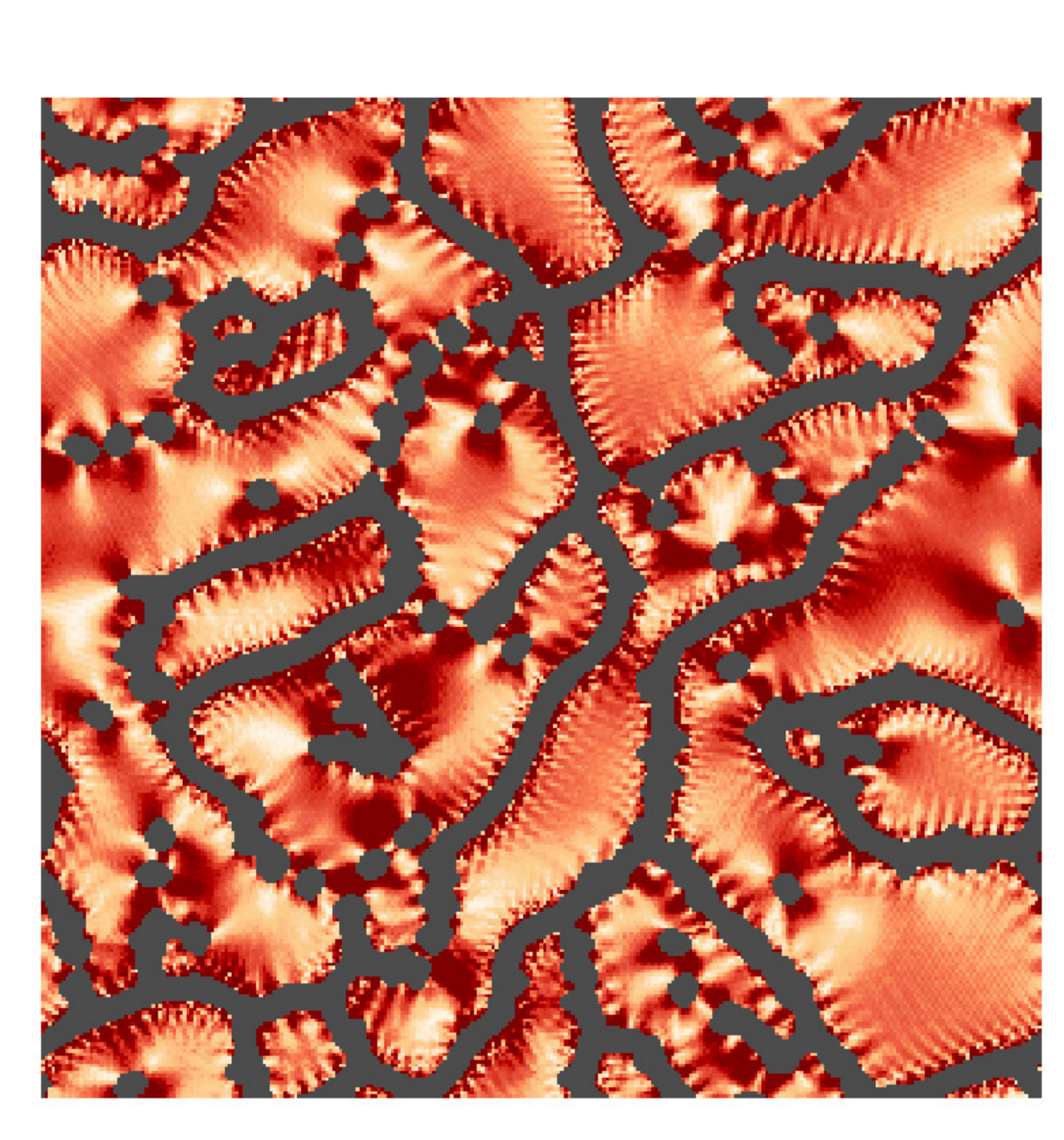} \includegraphics[height=1.2in]{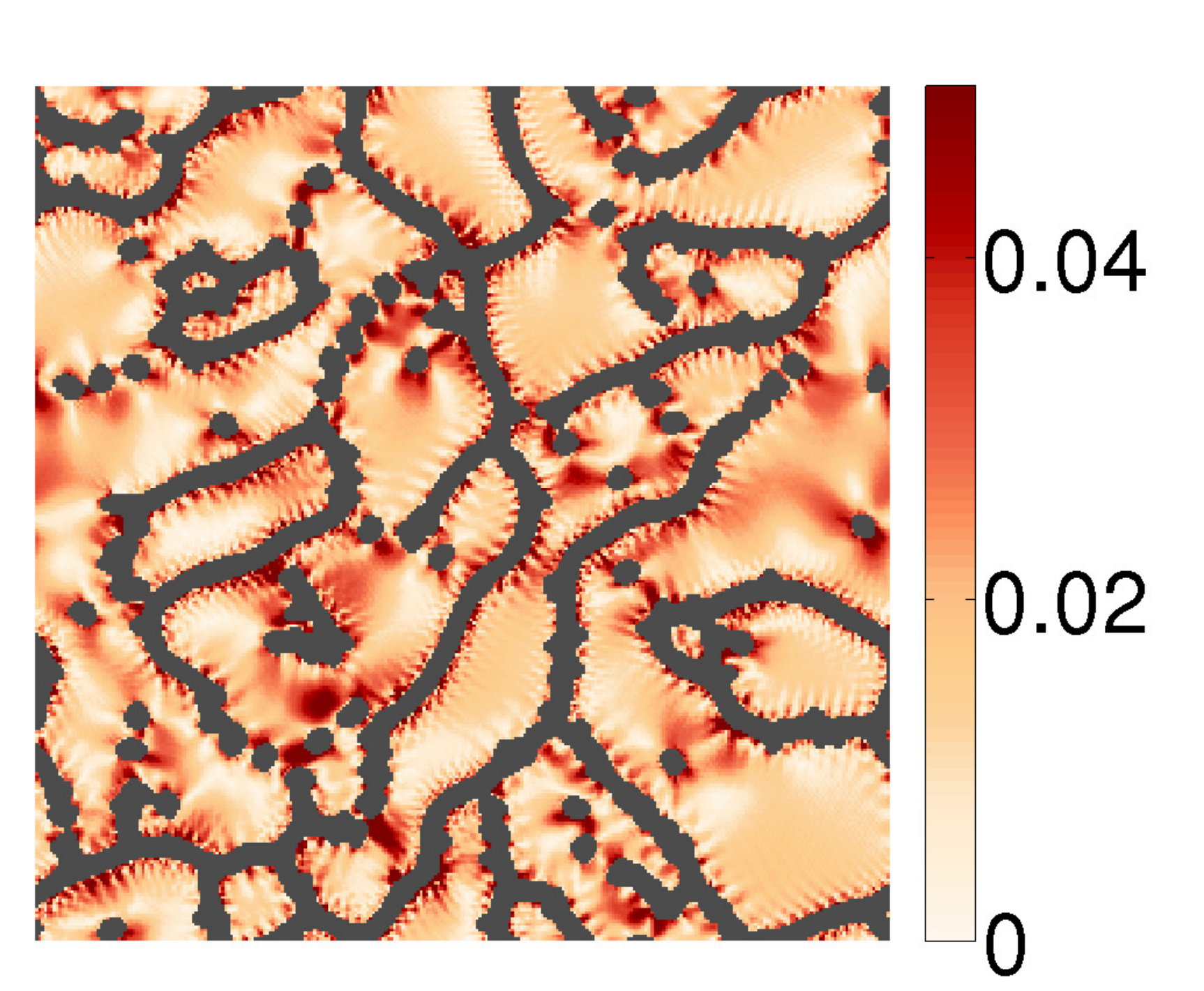} \\
         (a) & (b)\\
         \includegraphics[height=1.2in]{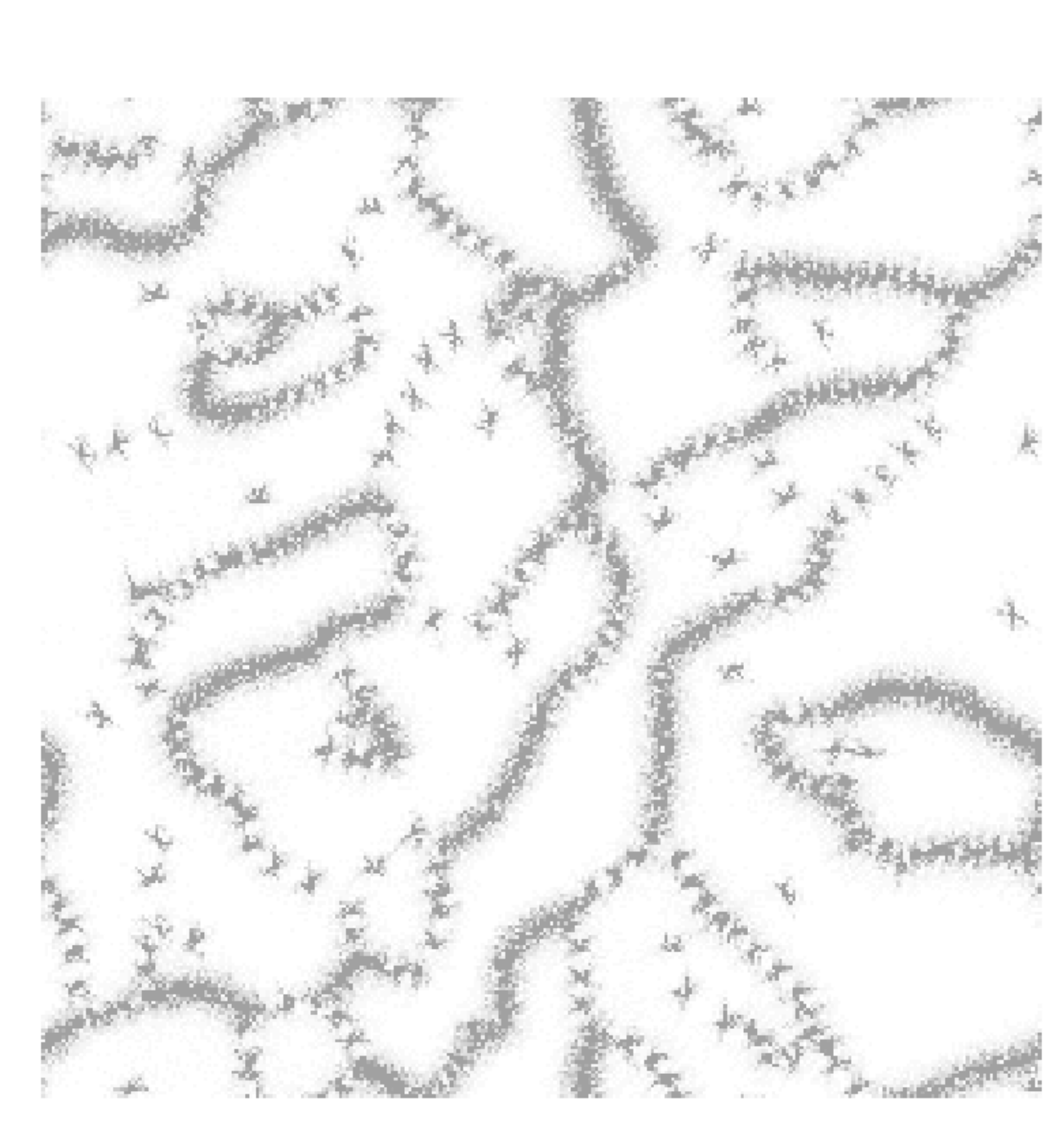}  \includegraphics[height=1.2in]{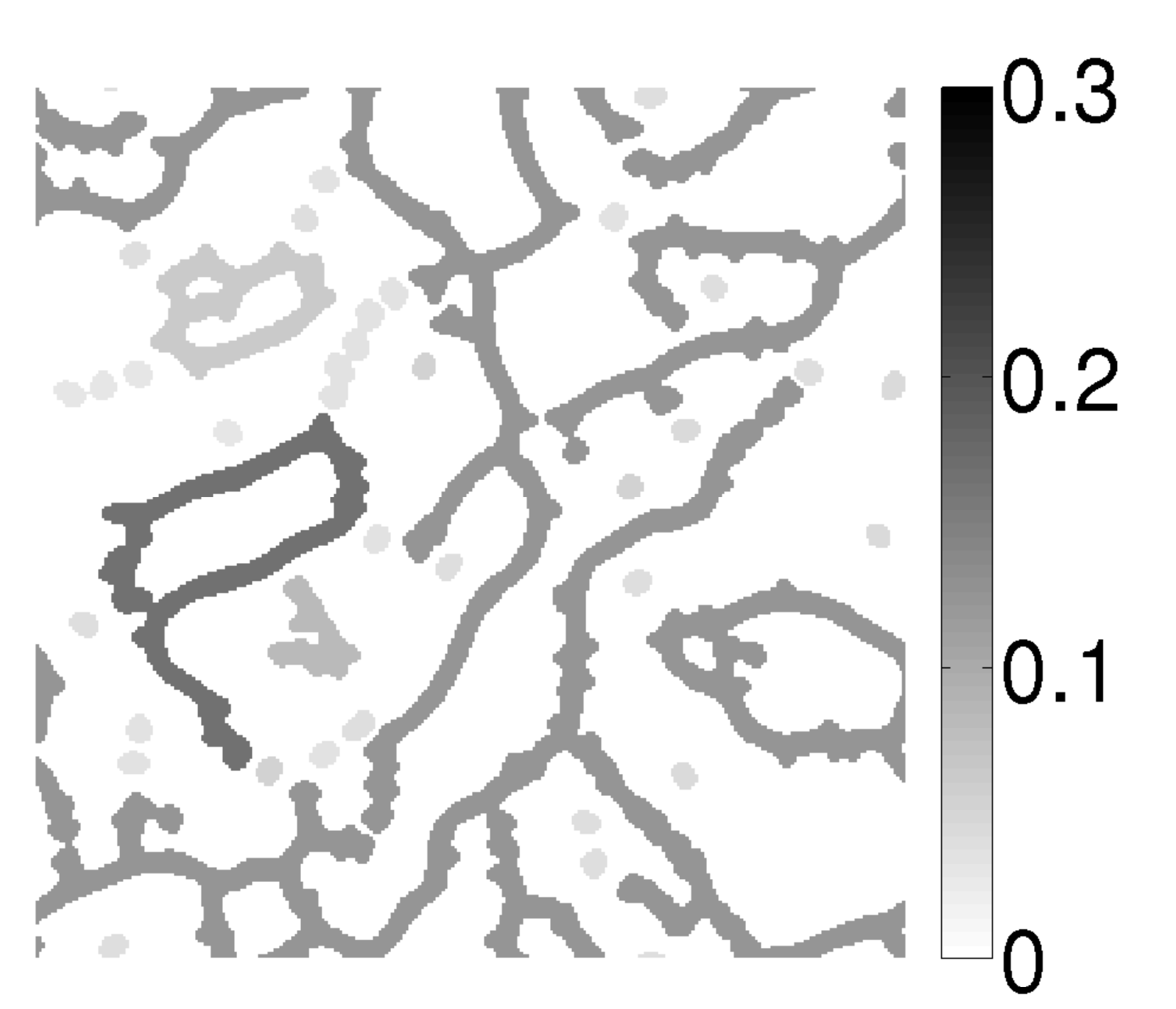} &
         \includegraphics[height=1.2in]{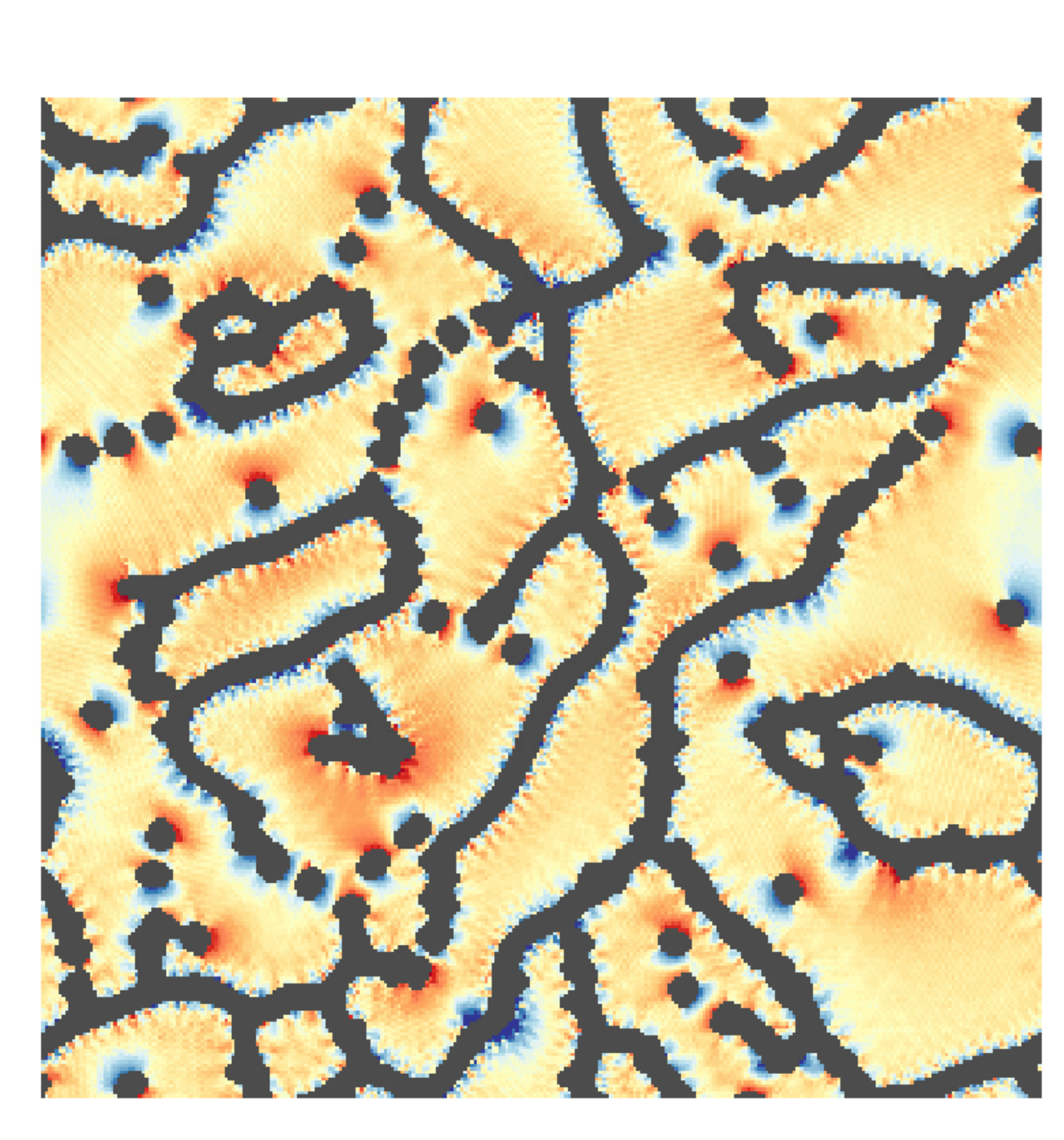}  \includegraphics[height=1.2in]{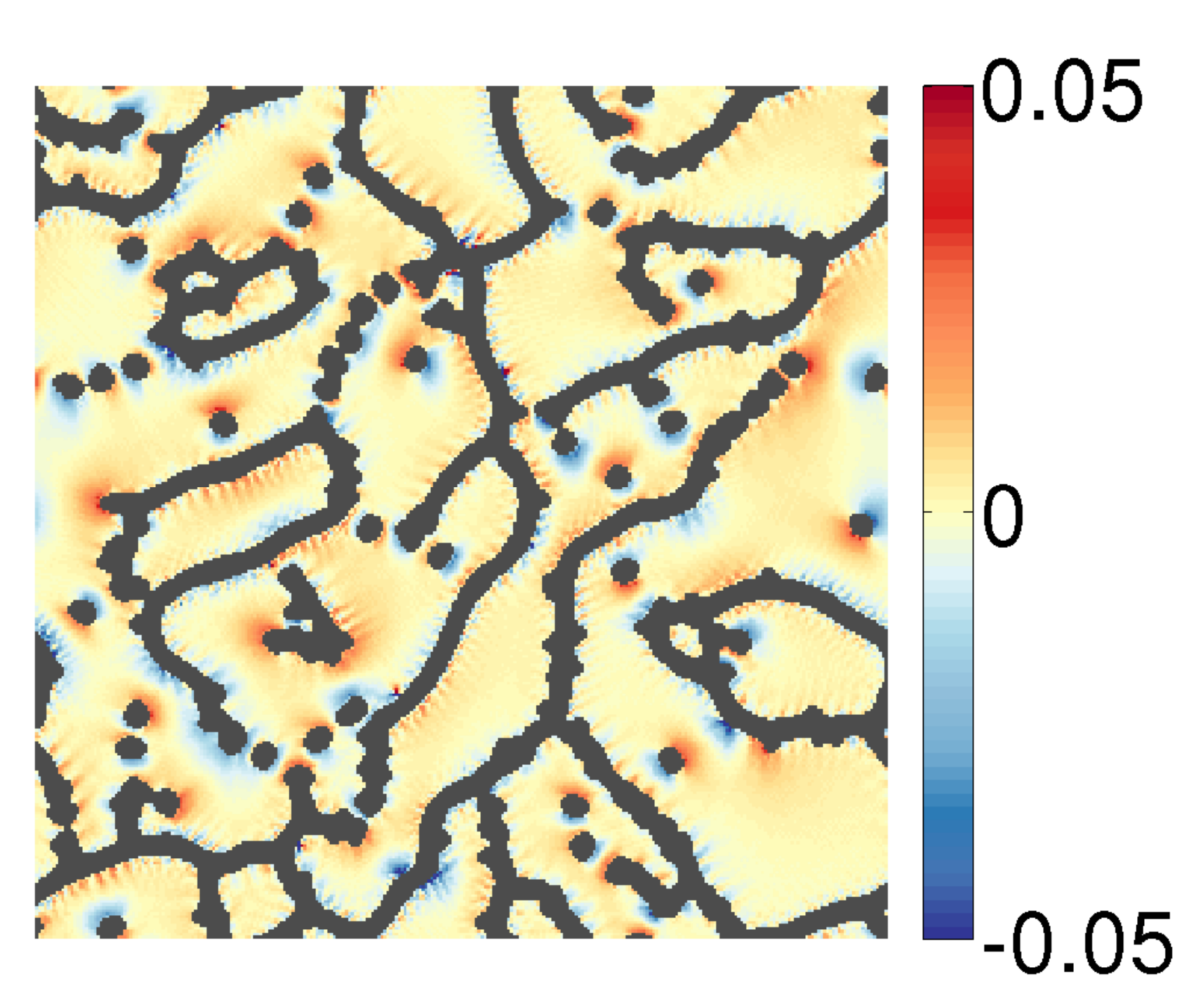} \\
         (c) & (d)
\end{tabular}
\end{center}
\caption{Results of the image in Figure\,\ref{fig:PFC1img} (left). From panel (a) to (d): crystal orientation, difference in principal stretches, curl of the inverse deformation gradient, and volume distortion. In each panel, the left figure shows the initial results from SST and the right one shows the optimized results from the variational method. Particularly in (c)-right, the average curl on each connected defect region is shown.}
\label{fig:PFC1}
\end{figure}

\begin{figure}
 \begin{center}
   \begin{tabular}{cc}
        \includegraphics[height=1.2in]{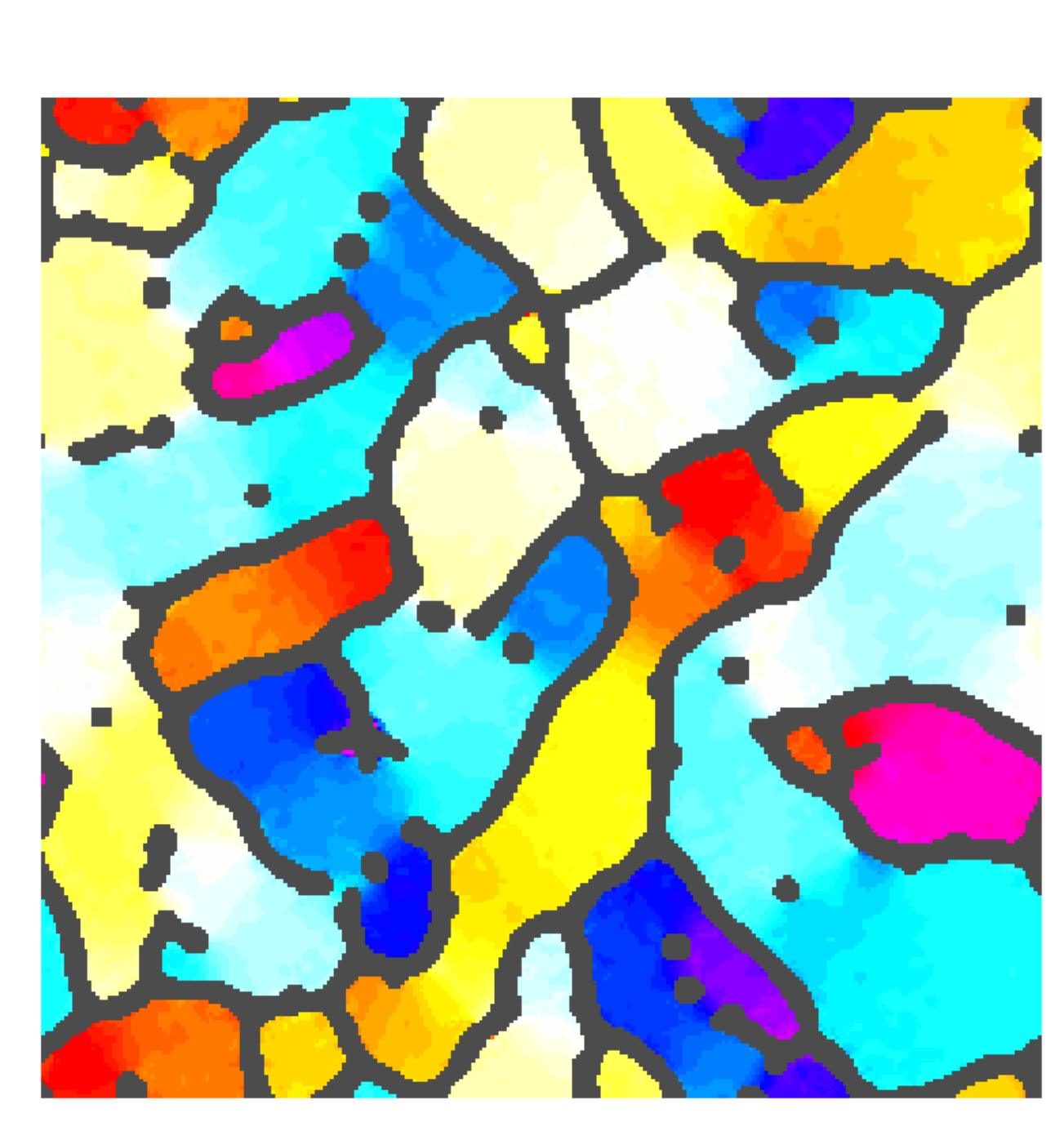} \includegraphics[height=1.2in]{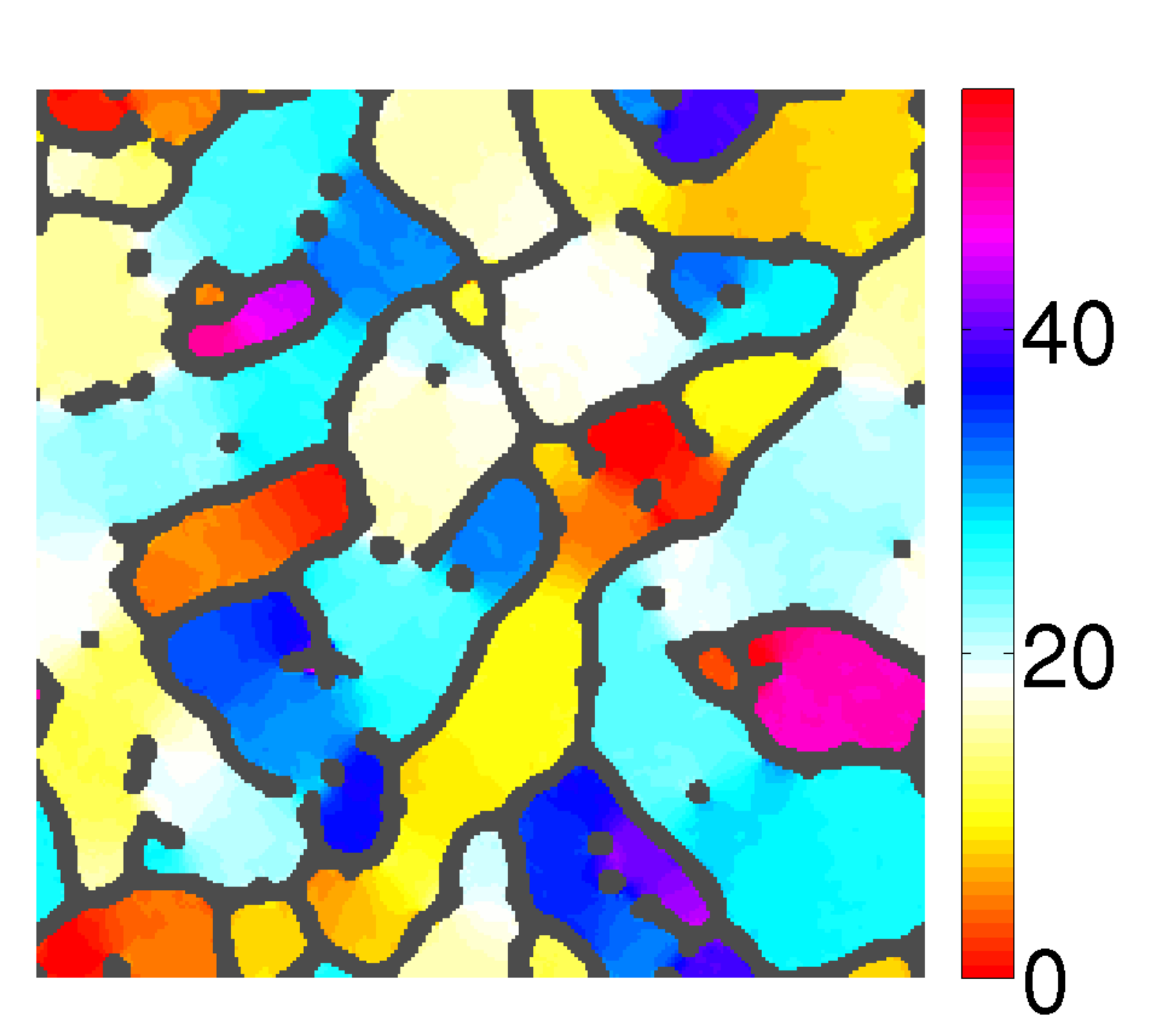} &
         \includegraphics[height=1.2in]{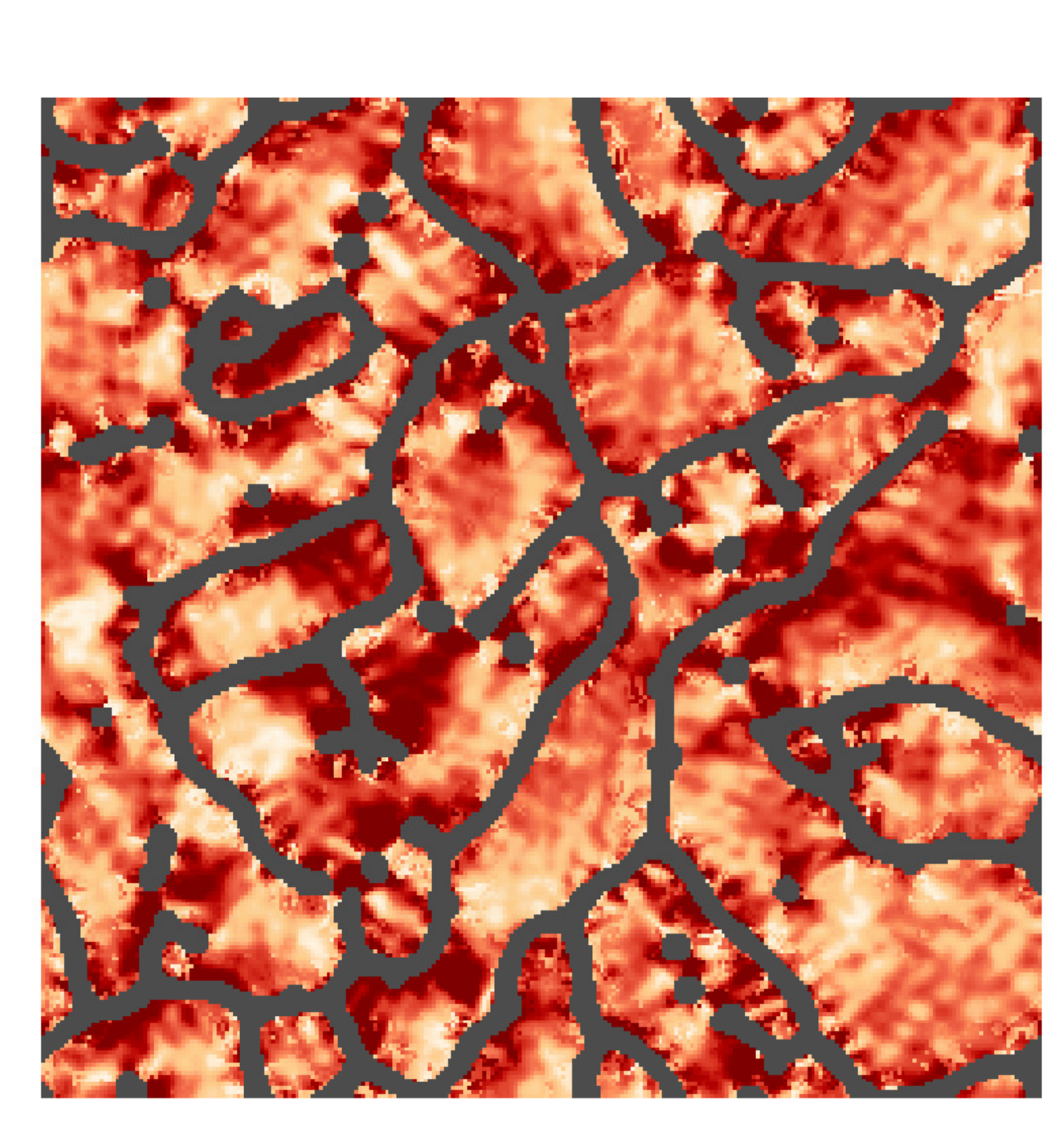} \includegraphics[height=1.2in]{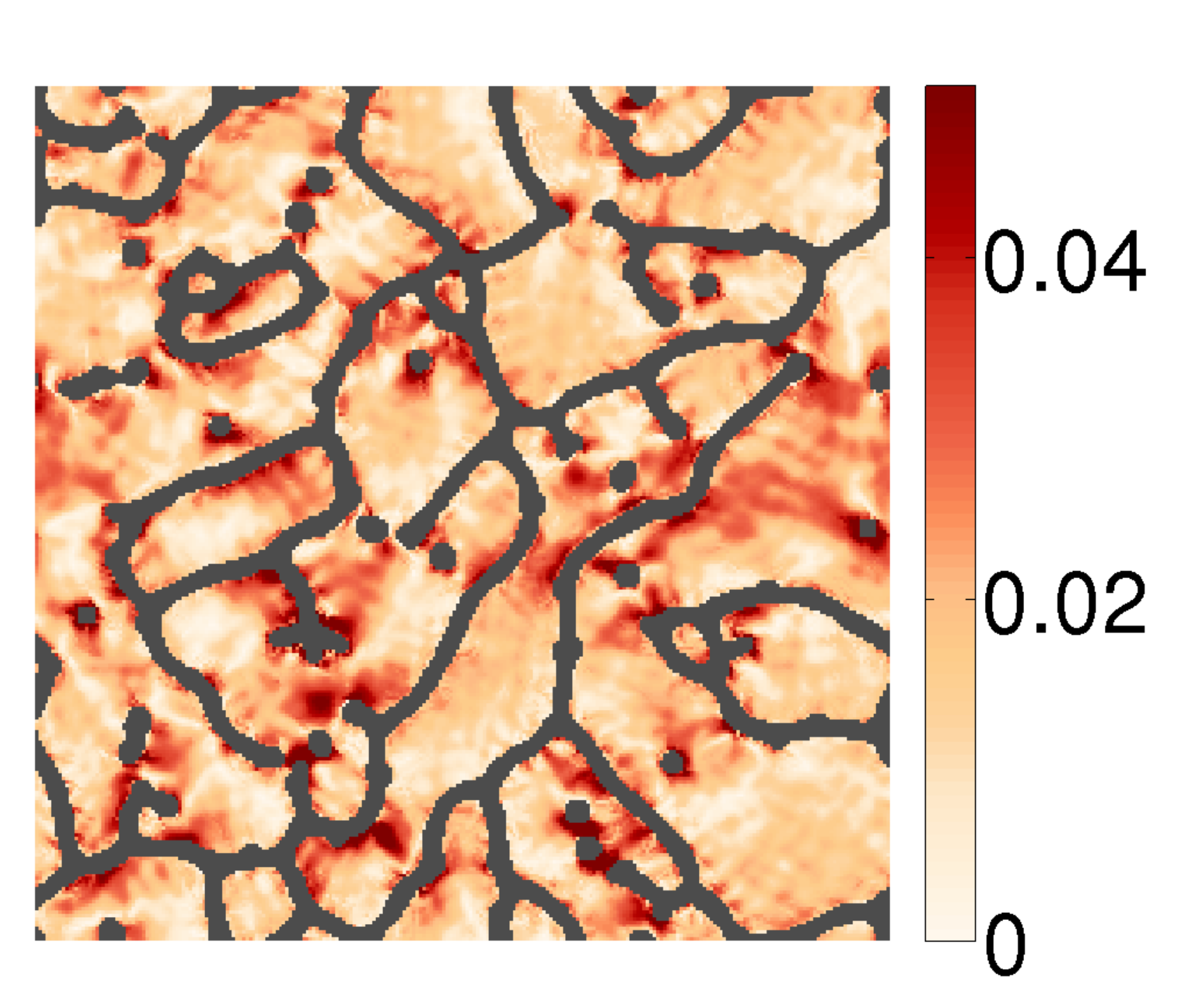} \\
         \includegraphics[height=1.2in]{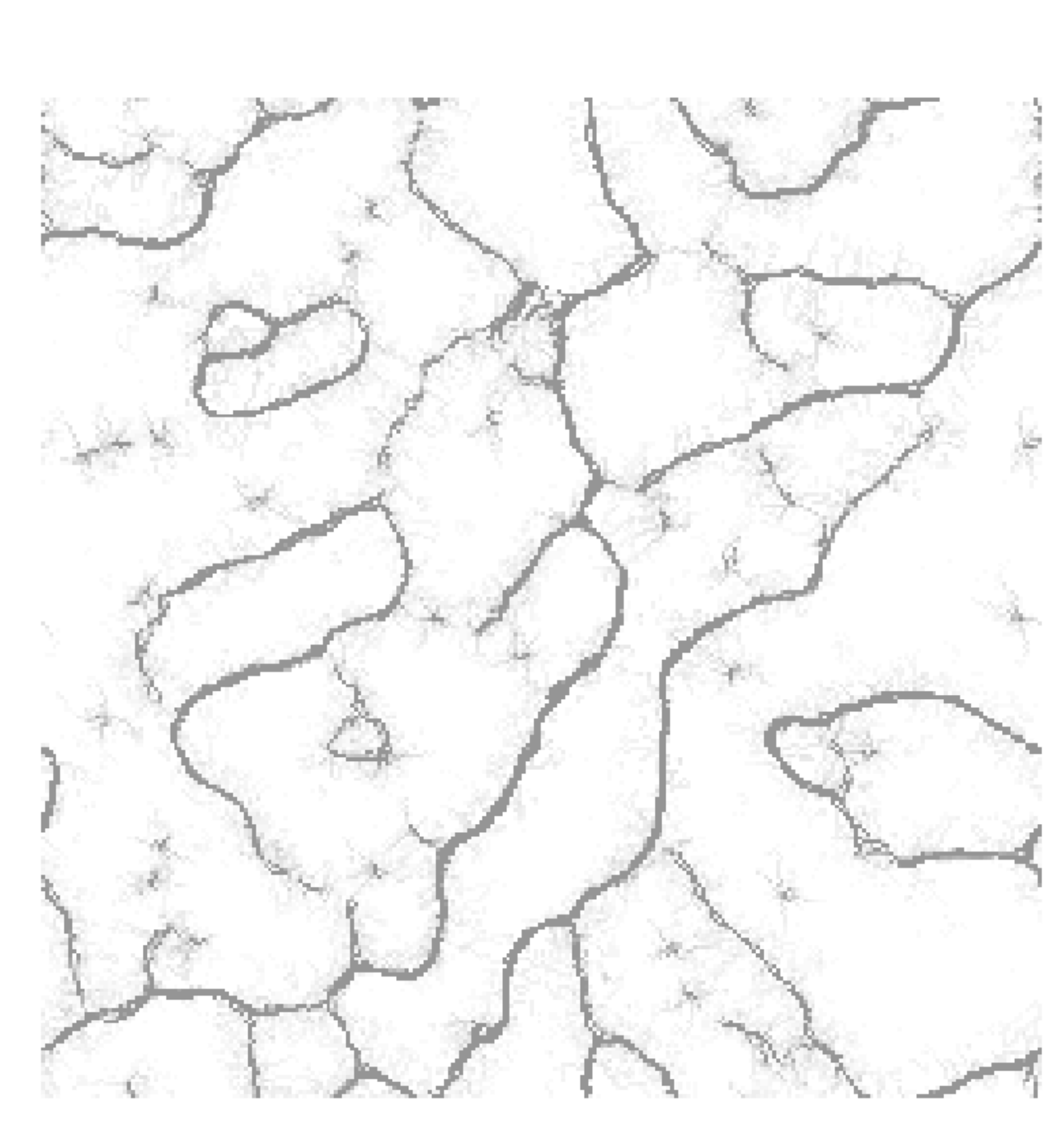} \includegraphics[height=1.2in]{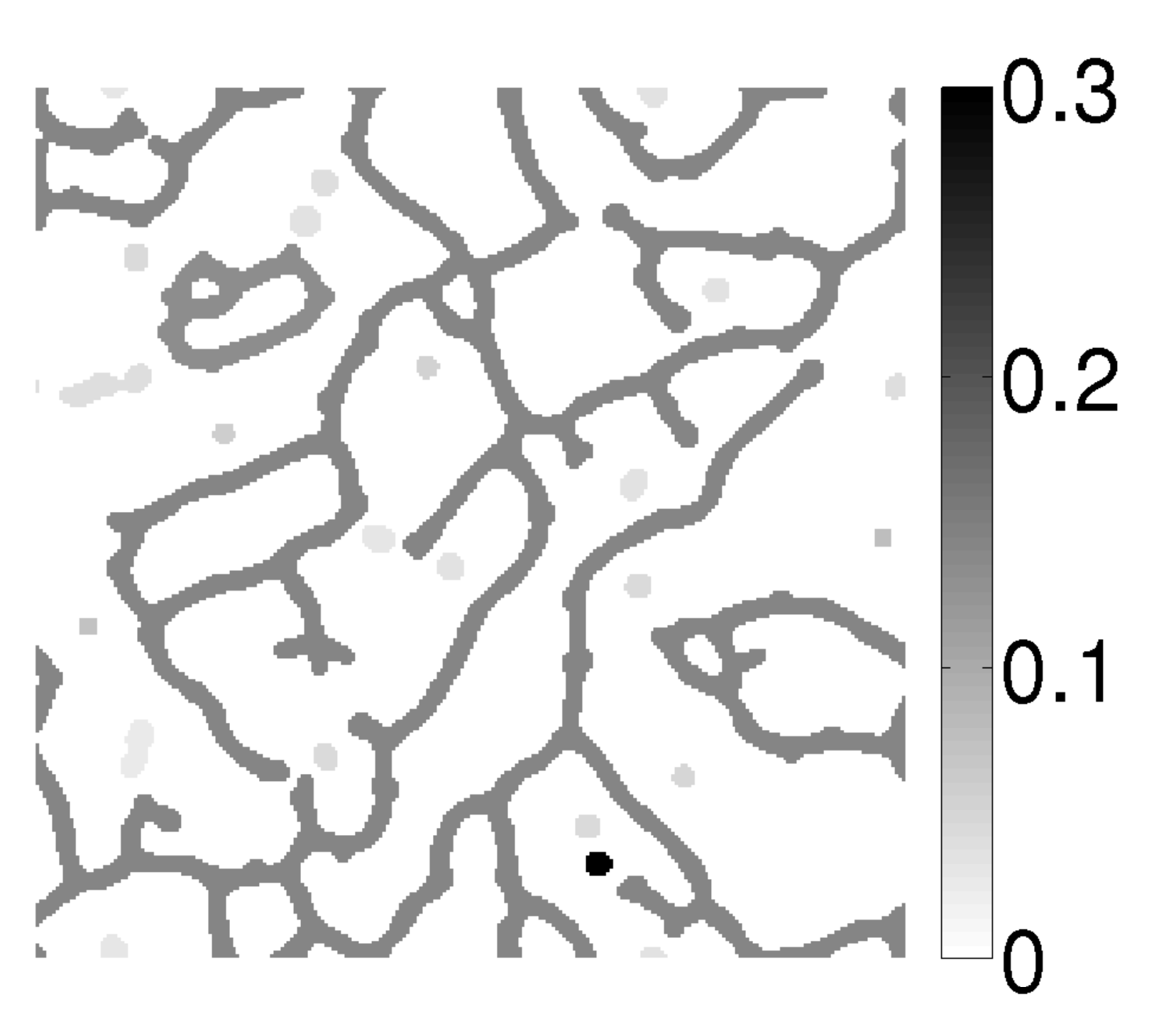} &
         \includegraphics[height=1.2in]{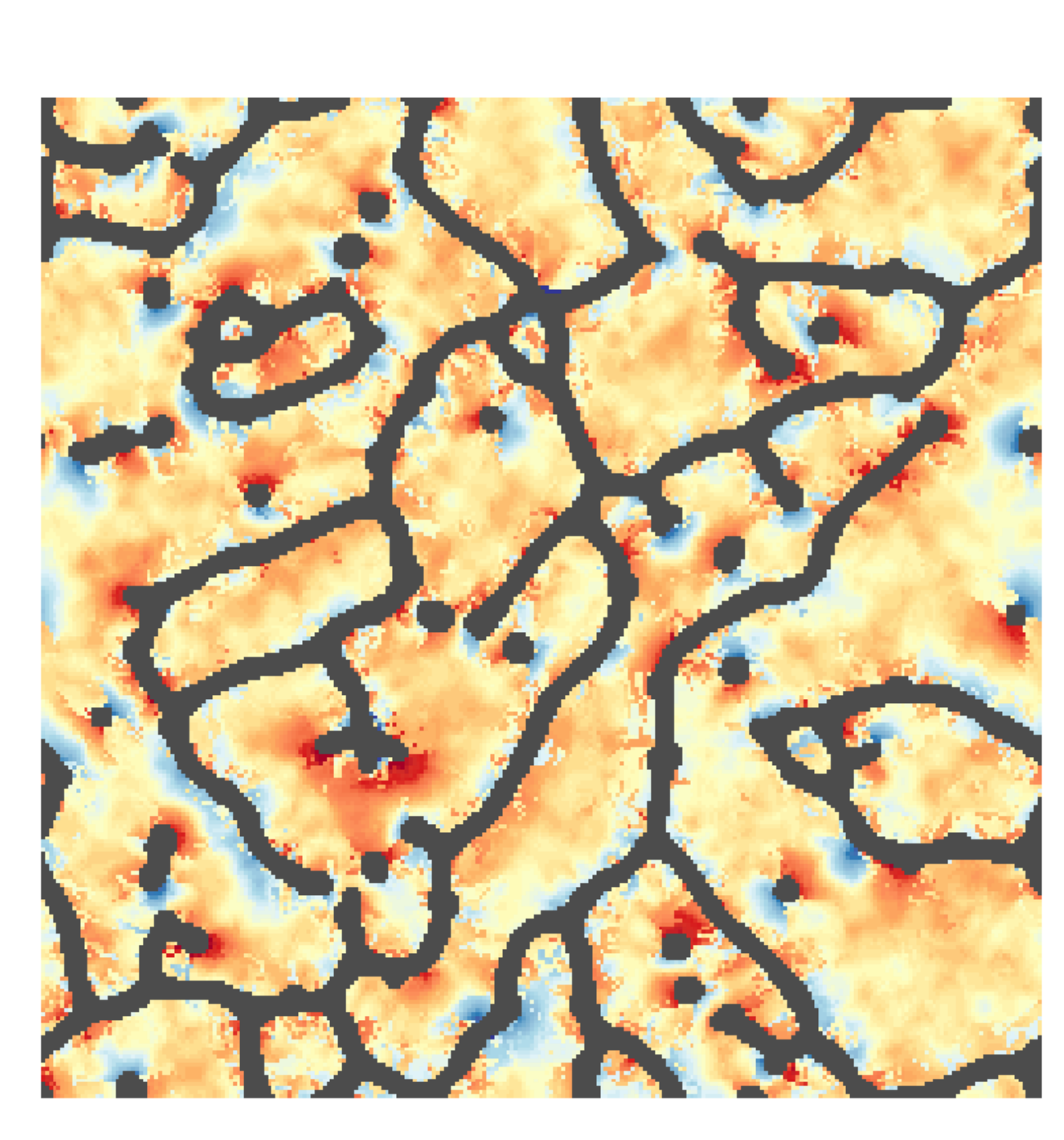}  \includegraphics[height=1.2in]{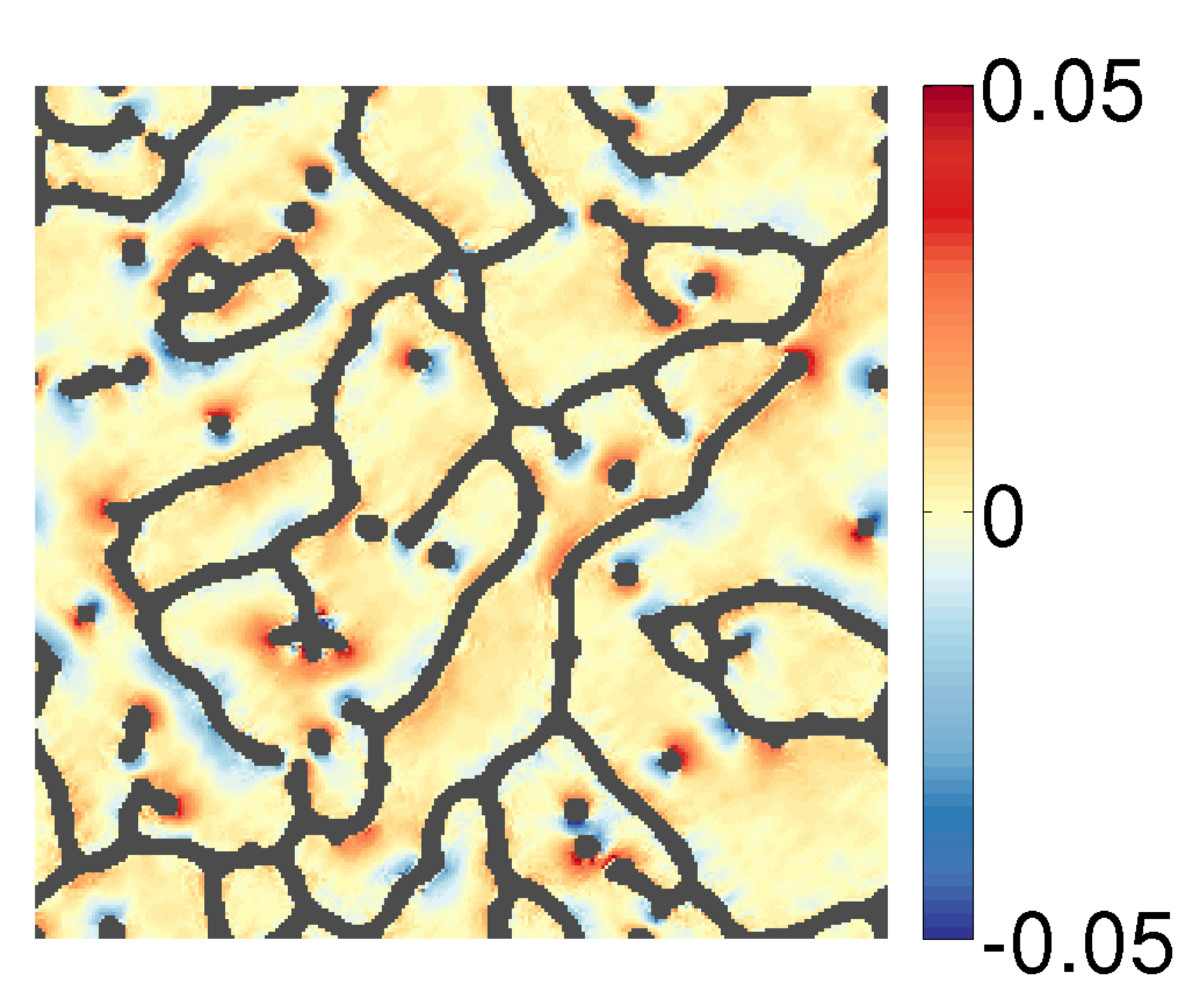} \\
           \end{tabular}
 \end{center}
\caption{Results of Figure\,\ref{fig:PFC1img} (right) using the same visualization as in Figure\,\ref{fig:PFC1}.}
\label{fig:PFC2}
\end{figure}

\begin{figure}
  \begin{tabular}{ccc}
  \includegraphics[height=1.6in]{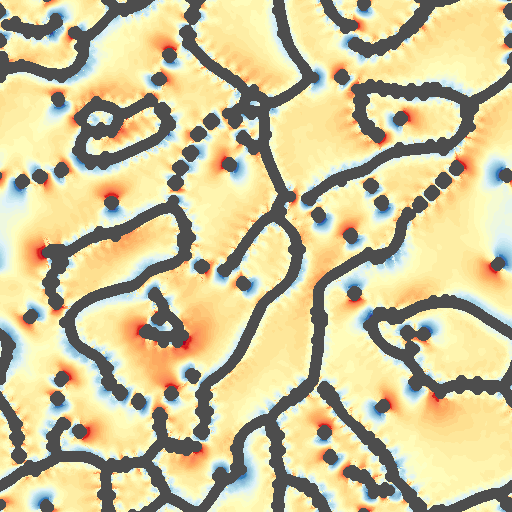}&
  \includegraphics[height=1.6in]{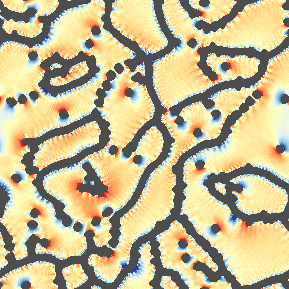}&
  \includegraphics[height=1.6in]{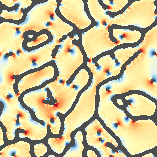}\\
  \includegraphics[height=1.6in]{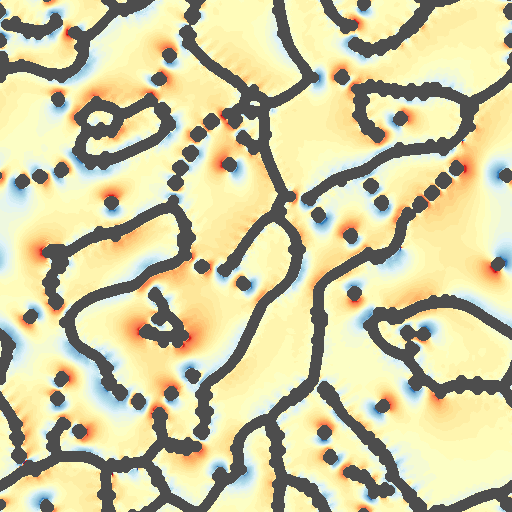}&
  \includegraphics[height=1.6in]{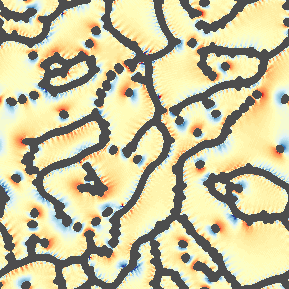}&
  \includegraphics[height=1.6in]{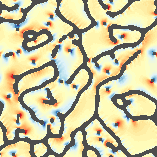}
\end{tabular}
\caption{A comparison of the results obtained with different downsampling rates. From left to right, the synchrosqueezed transform downsamples the original image by a factor of $2$, $4$, and $8$, and the runtime is about $50$ minutes, $13$ minutes and $20$ seconds. Top panel: initial volume distortion provided by the synchrosqueezed transform. Bottom panel: optimized volume distortion. }
\label{fig:comp}
\end{figure}
The PFC model is a well-established method to simulate elastic and plastic deformations, free surfaces, and multiple crystal orientations in nonequilibrium processes. Its simulation can provide synthetic crystal images with various topological defects. A noiseless PFC image is given in Figure~\ref{fig:PFC1img} (left). It contains isolated defects and large as well as small angle grain boundaries, the latter showing up as a string of dislocations.

Compared to the initial strain estimate $G_0$, the optimized $G$ is smoother, exhibits a much smaller overall volume distortion and shear (as visualized by the difference in principal stretches), and sharpens the compression-dilation dipoles around each single dislocation, as shown in Figure\,\ref{fig:PFC1}.

The noisy PFC image of Figure~\ref{fig:PFC1img} (right) is generated by adding
50$\%$ Gaussian random noise. 
Obviously, this leads to strong artifacts in the estimated deformation $G_0$.
Remarkably, after the optimization we retrieve an estimate $G$ almost as good as in the noiseless case as shown in Figure\,\ref{fig:PFC2}, 
which demonstrates the robustness of our method.

To show that the performance of our method is not really
  depending on the mesh size chosen for the optimization algorithm, we
  compare the results obtained with different downsampling rates in
  the synchrosqueezed transform. Figure \ref{fig:comp} summarizes the
  comparison for the PFC example. Even though the grain boundary
  geometry is complicated in this example, our algorithm obtains
  similar results for different downsampling rates.

\subsection{Experimental crystal images}

The first real crystal image, a TEM-image of GaN  of size $420\times 444$ (Figure~\ref{fig:Real1img} left), contains a string of dislocations forming a large angle grain boundary.
The artificial strong spatial variation of crystal orientation, shear and volume distortion in the SST result is greatly reduced after applying the optimization as shown in Figure\,\ref{fig:GaN}.
Even more importantly, the unphysical curl away from the defects is completely removed
such that the curl of $G$ is fully concentrated in the single defect regions around each dislocation (the total curl of each region equalling the dislocation's Burgers vector).

\begin{figure}
\begin{center}
   \begin{tabular}{cc}
       \includegraphics[height=2.4in]{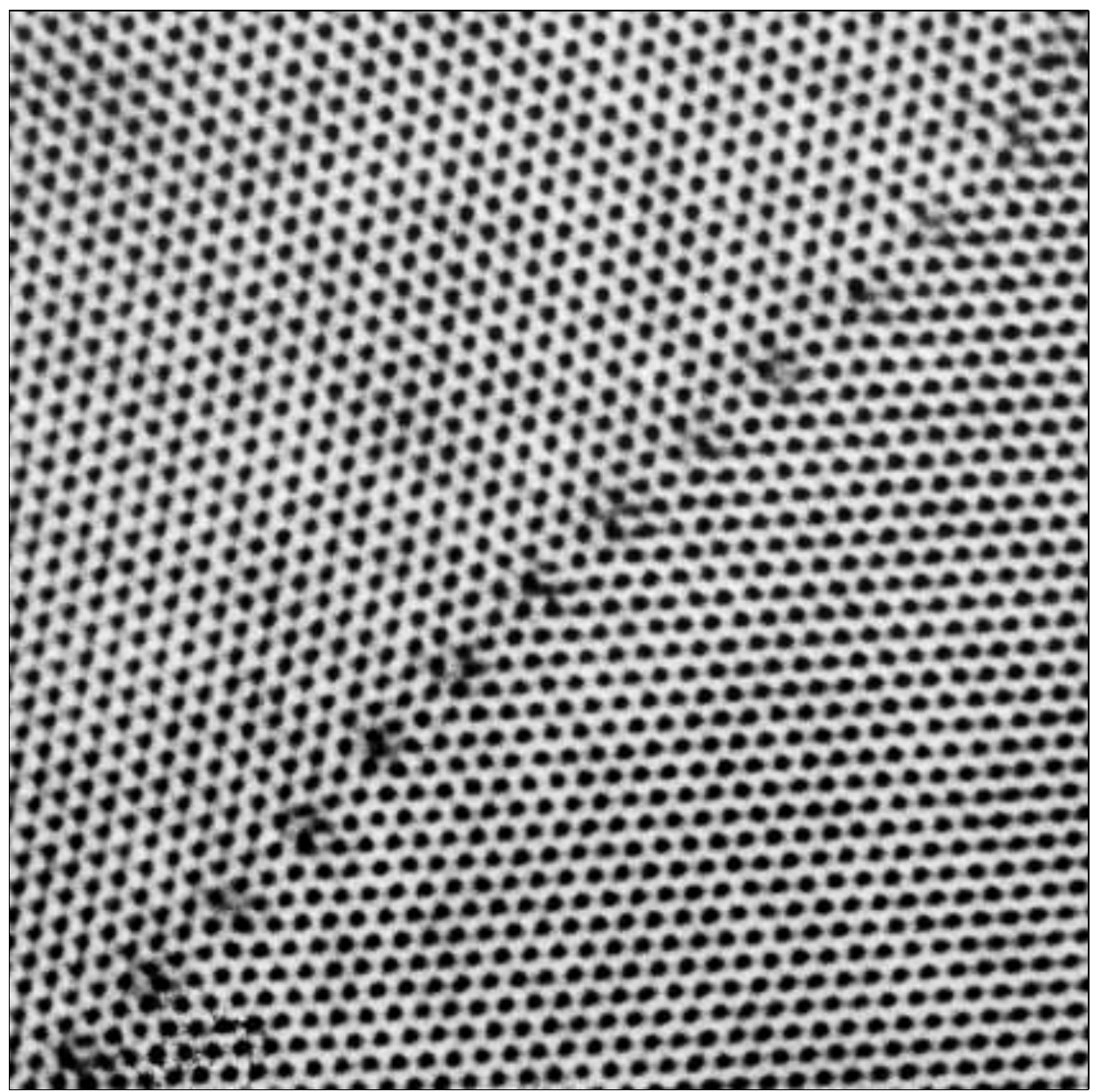}&\includegraphics[height=2.4in]{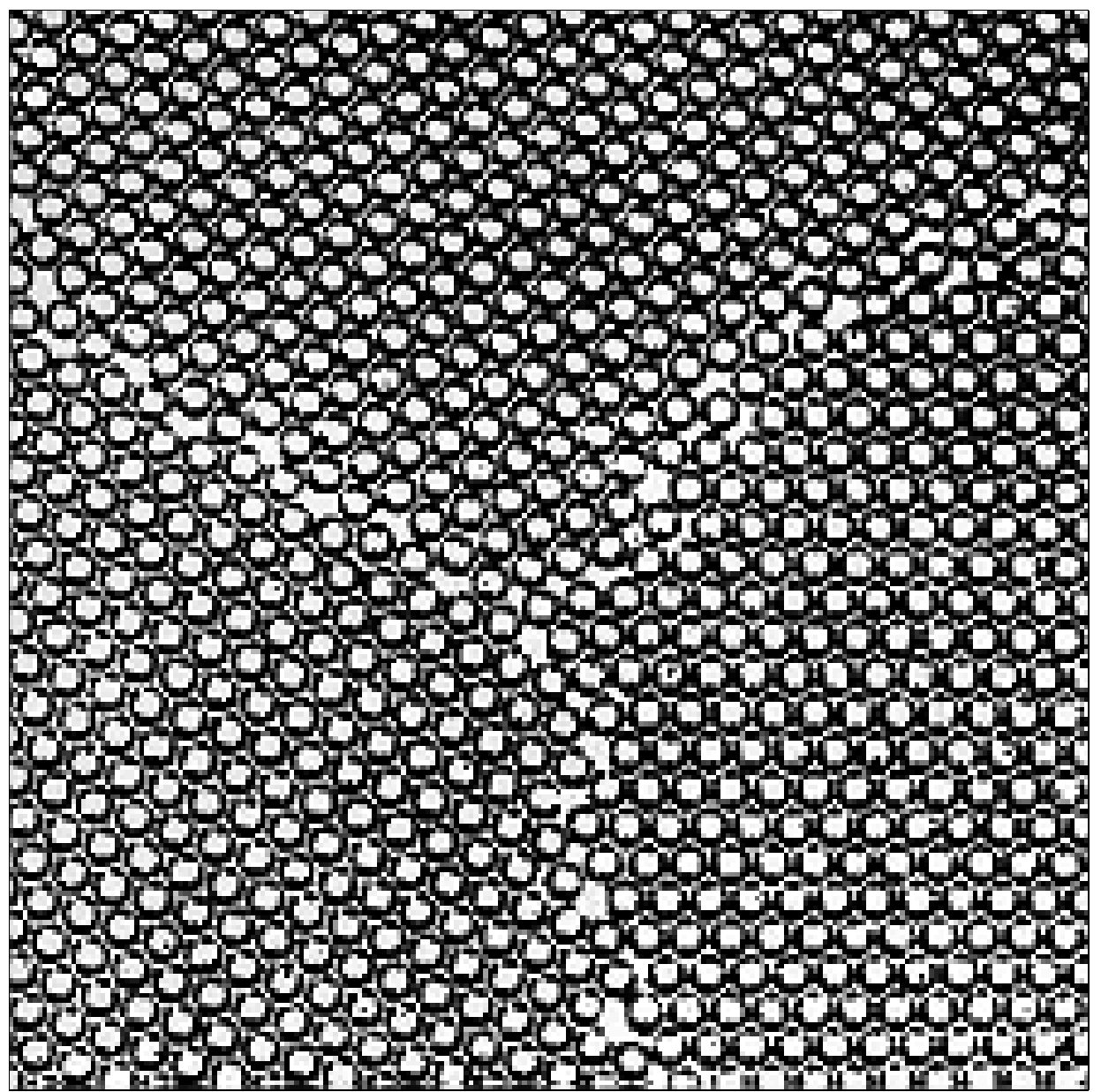}
       \end{tabular}
   \end{center}
 \caption{Left: a TEM-image of GaN (courtesy of David M.\ Tricker, Department of Material Science and Metallurgy at the University of Cambridge). Right: a photograph of a bubble raft (courtesy of Barrie S.\ H.\ Royce at Princeton University).}
 \label{fig:Real1img}
\end{figure}
       
 \begin{figure}
 \begin{center}
       \begin{tabular}{cc}
        \includegraphics[height=1.2in]{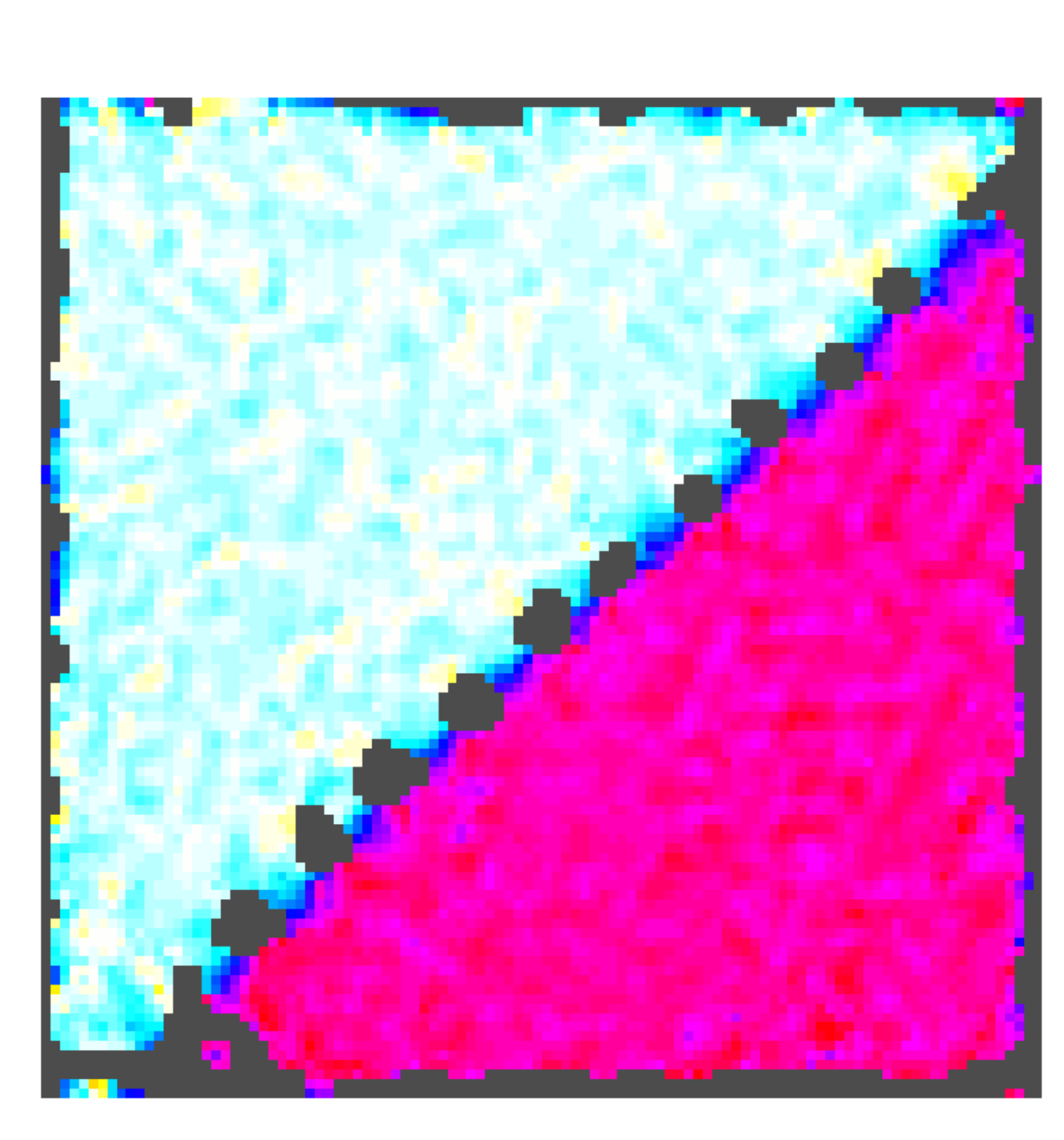} \includegraphics[height=1.2in]{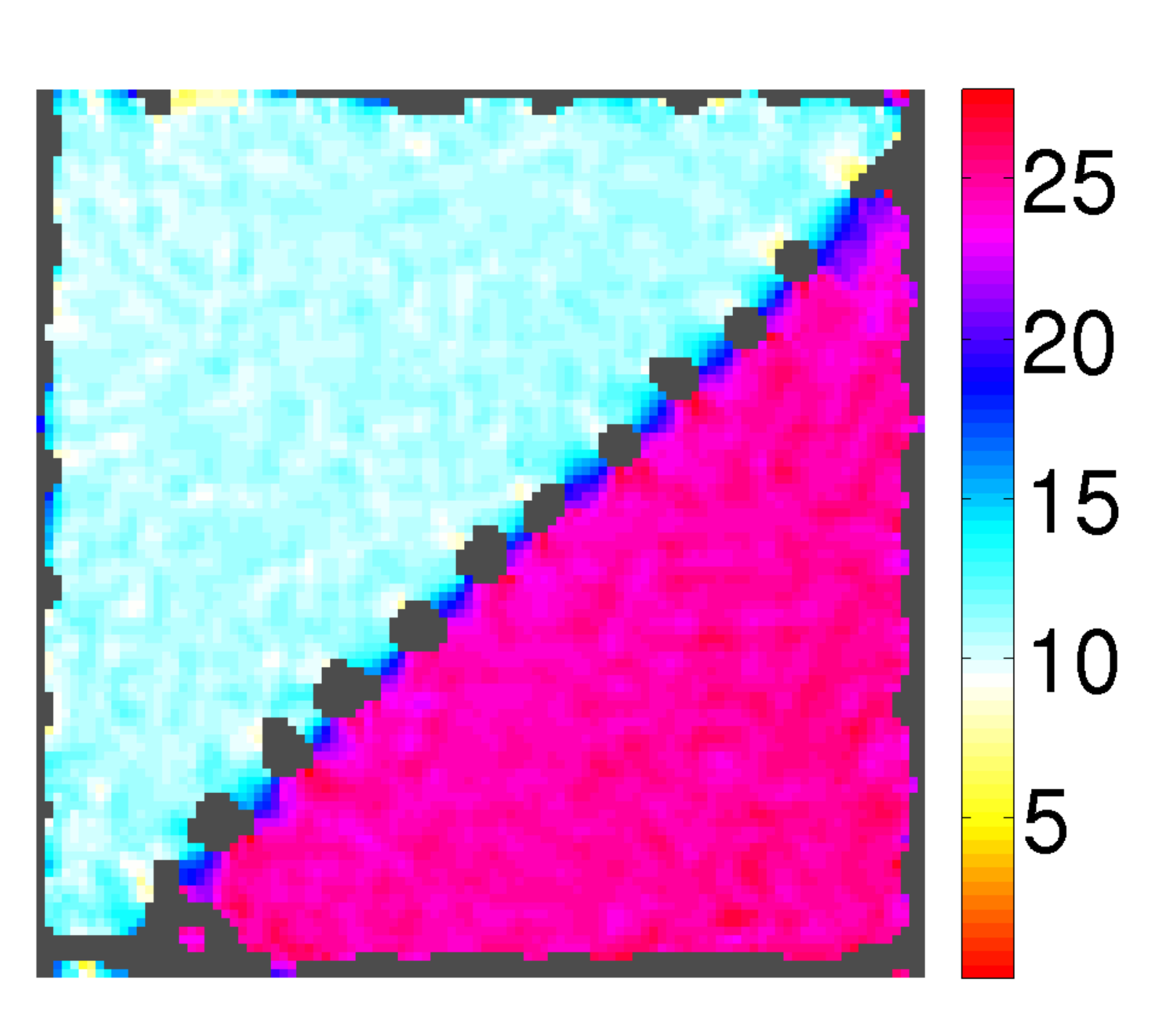} &
         \includegraphics[height=1.2in]{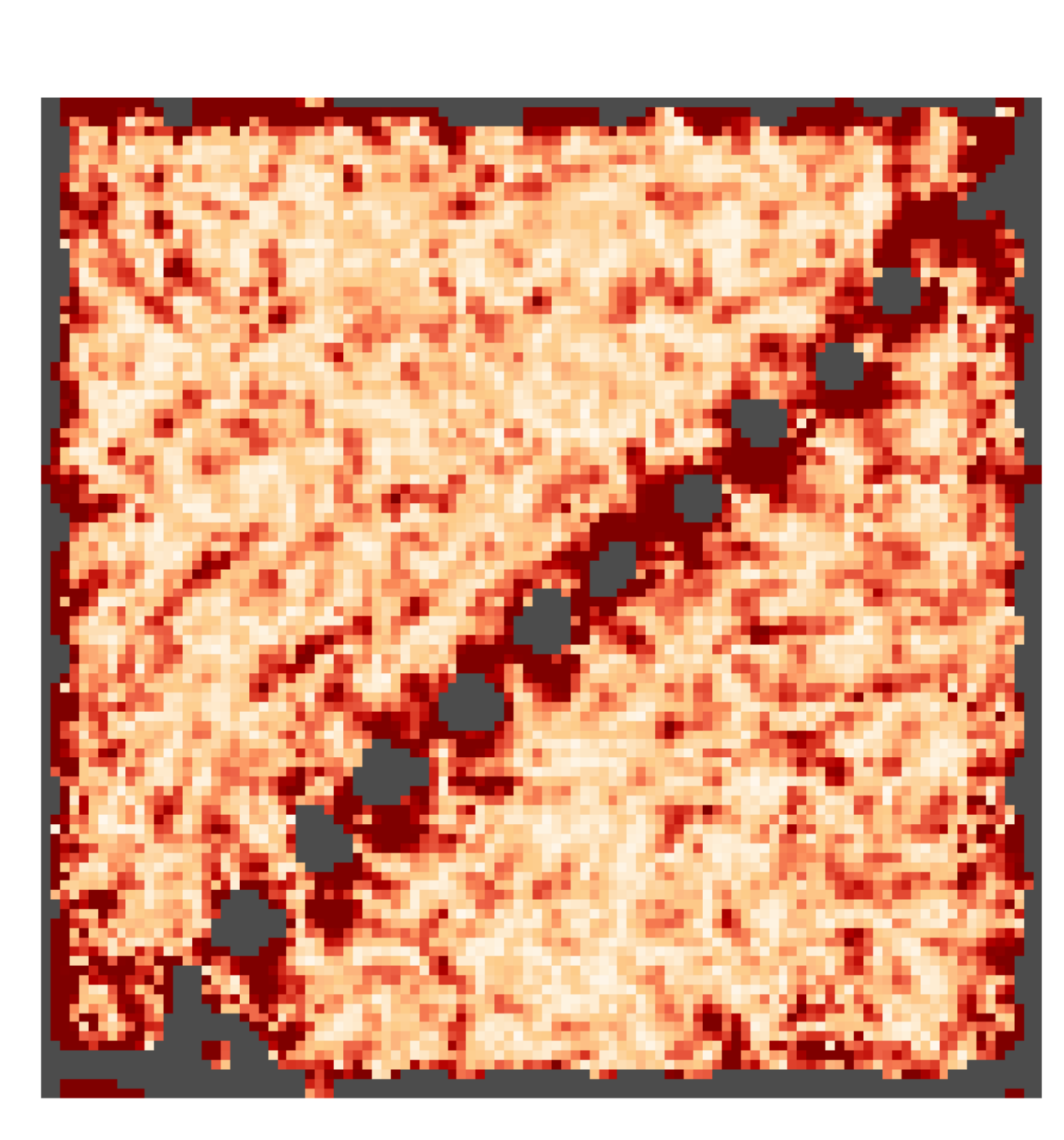} \includegraphics[height=1.2in]{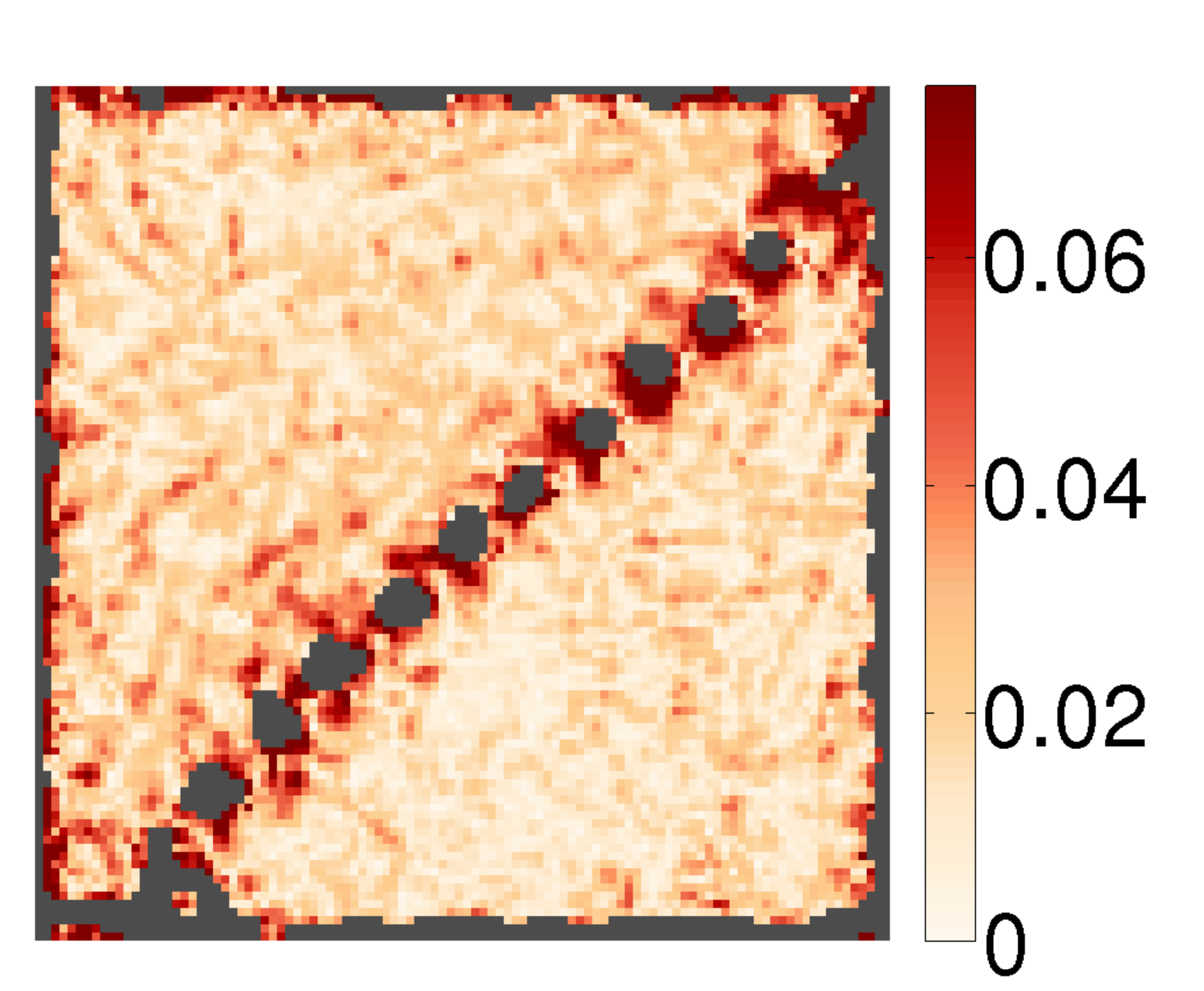} \\
         \includegraphics[height=1.2in]{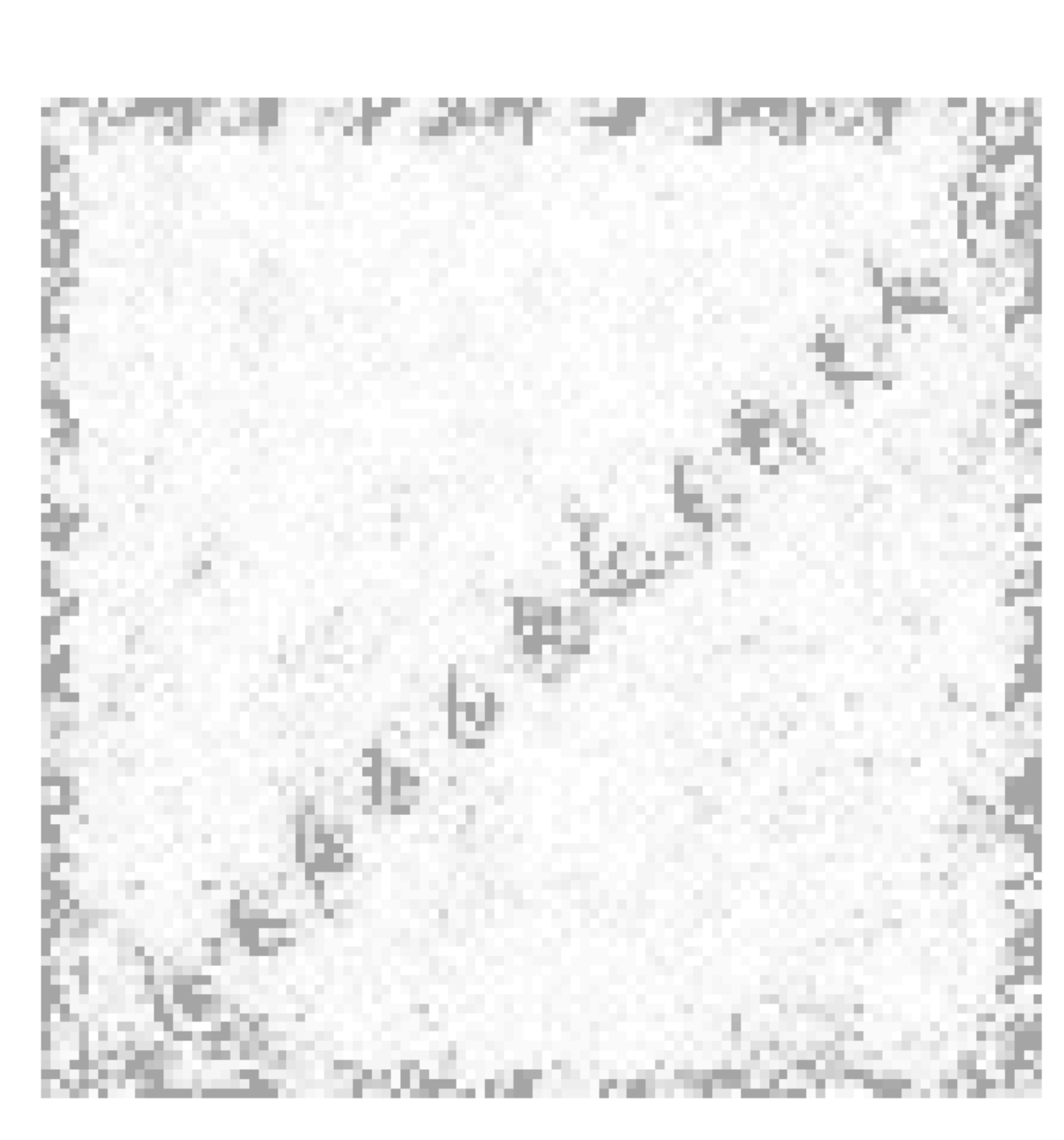}  \includegraphics[height=1.2in]{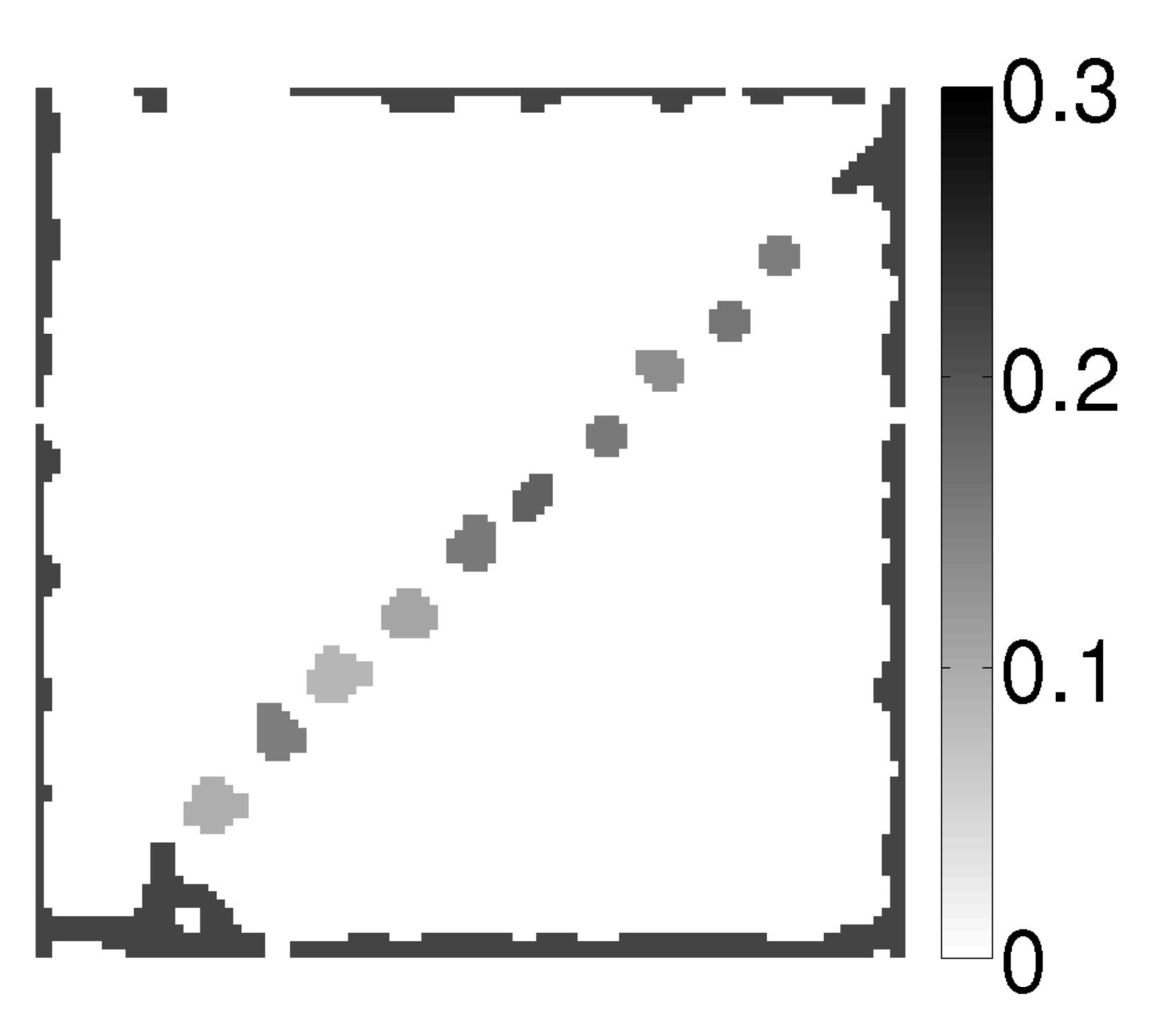} &
         \includegraphics[height=1.2in]{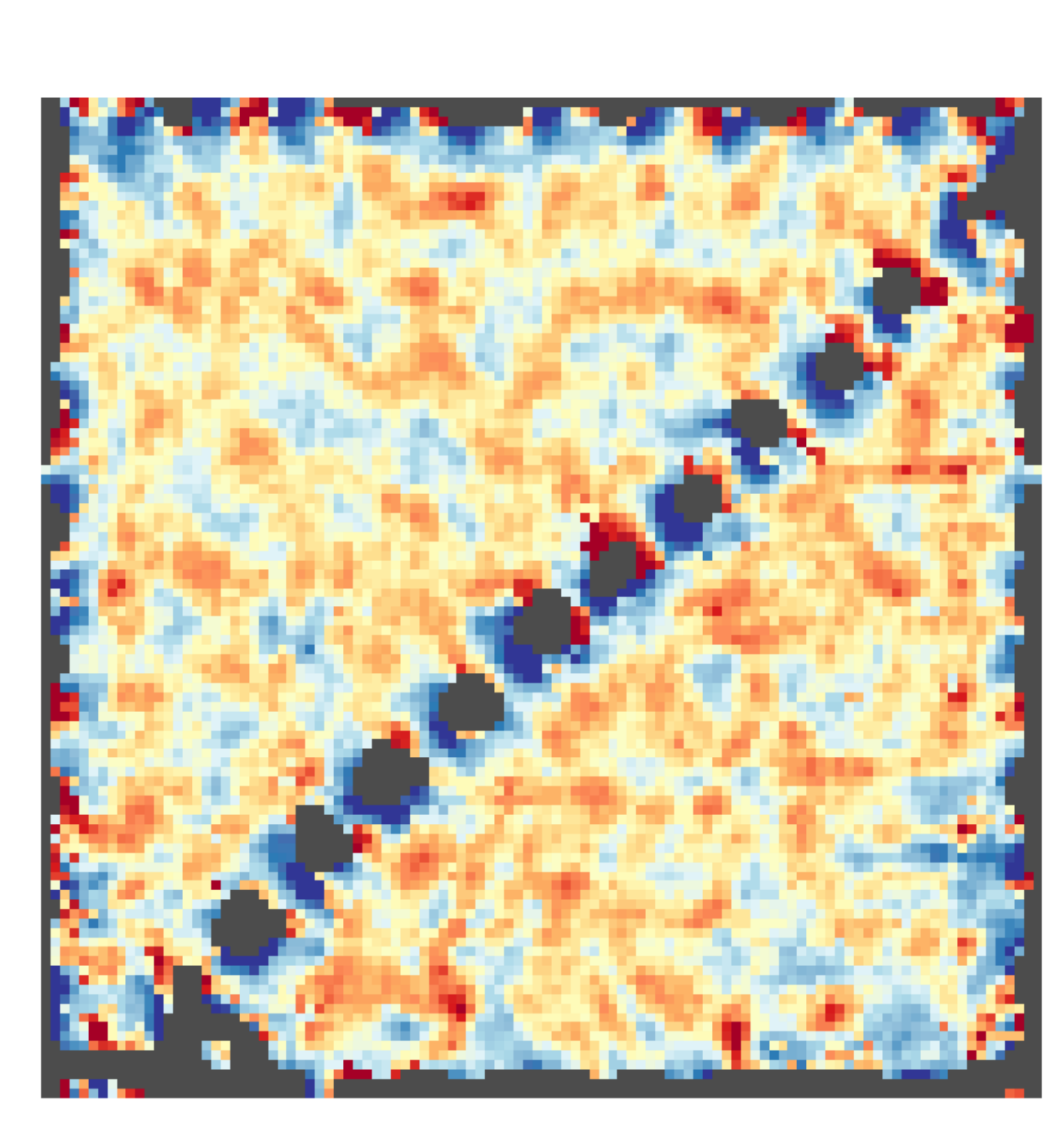}  \includegraphics[height=1.2in]{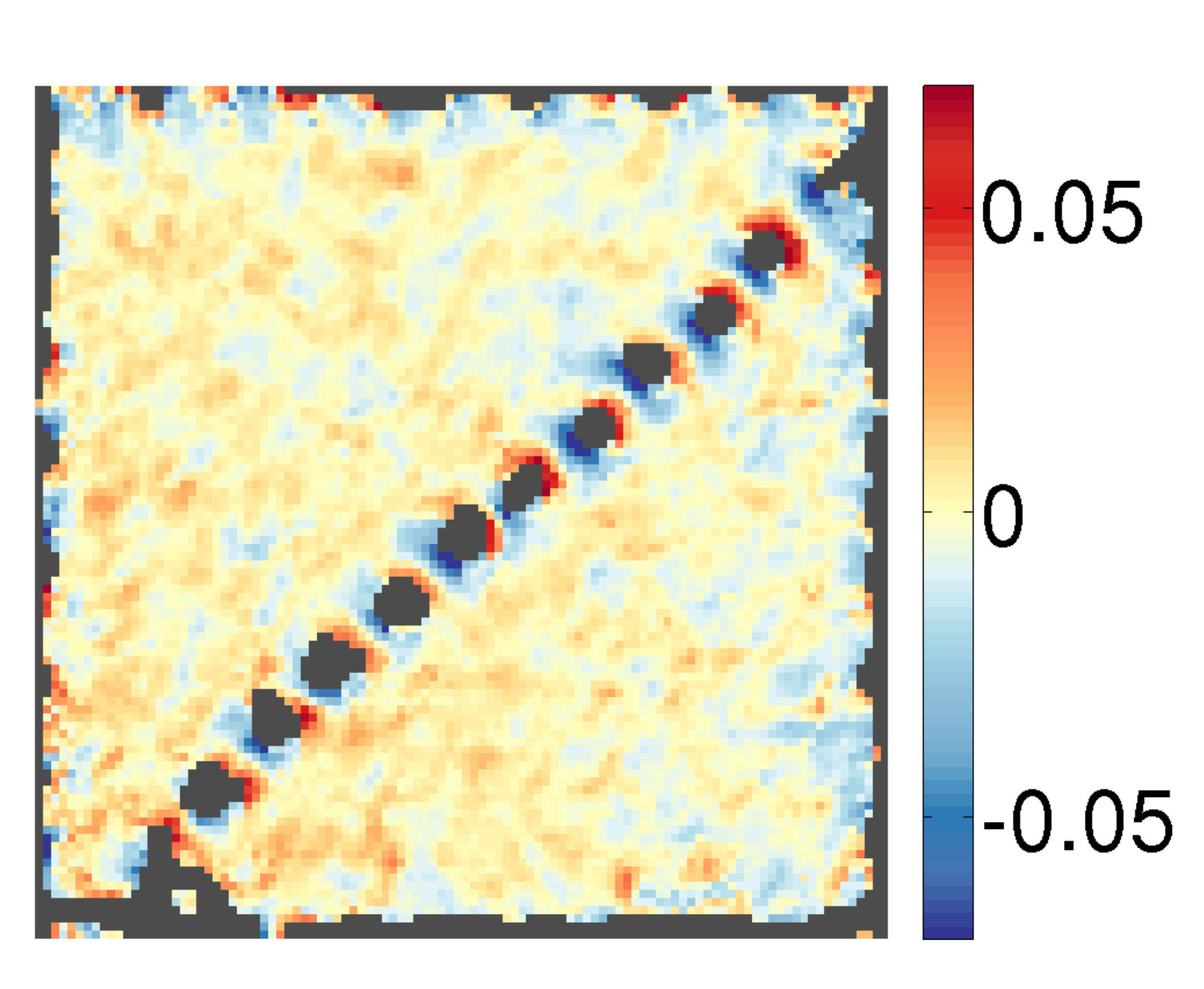} 
   \end{tabular}
\end{center}
\caption{Results of crystal analysis for Figure\,\ref{fig:Real1img} (left), using the same visualization as in Figure\,\ref{fig:PFC1}.}
\label{fig:GaN}
\end{figure}

 \begin{figure}
 \begin{center}
       \begin{tabular}{cc}
        \includegraphics[height=1.2in]{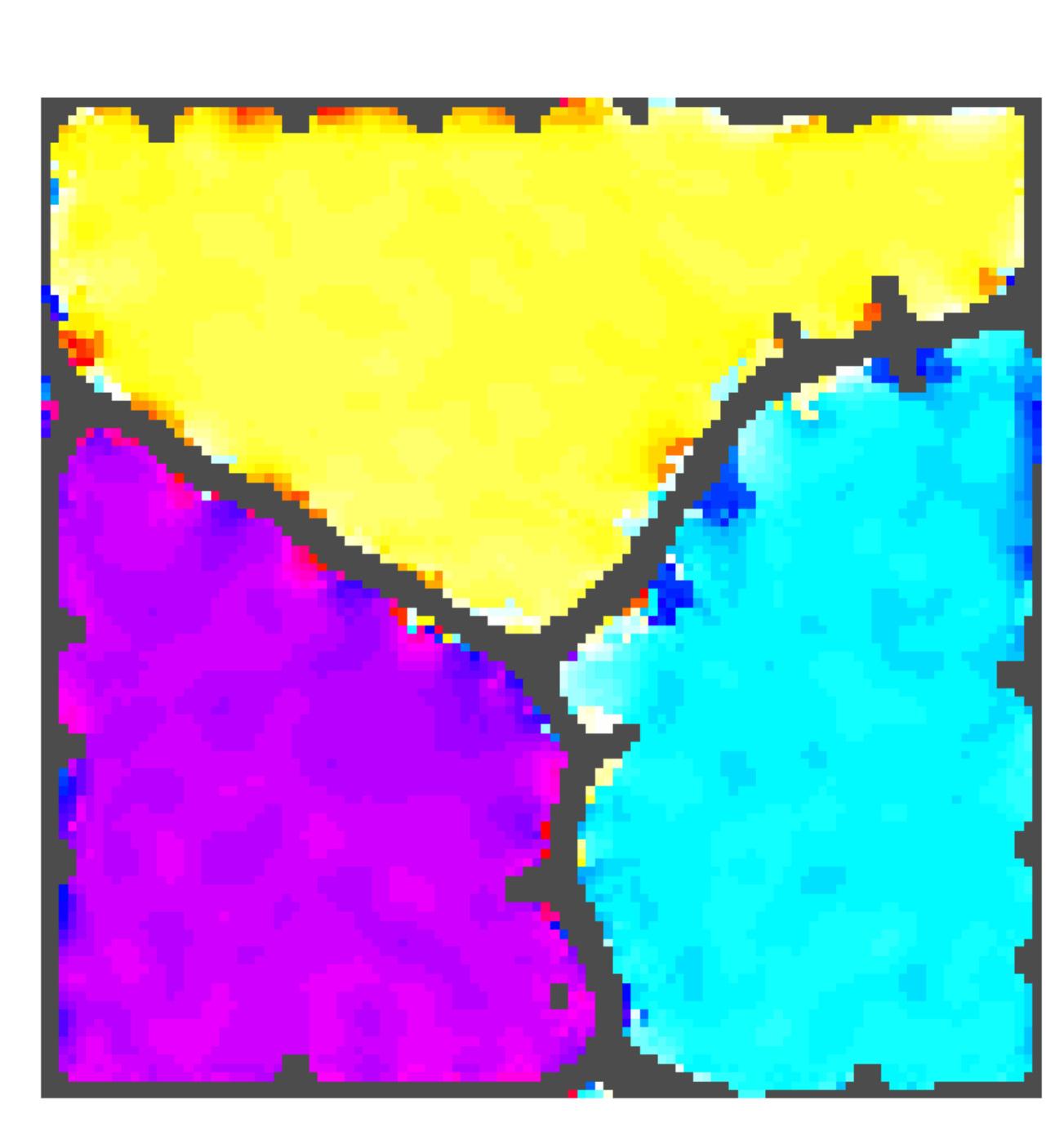} \includegraphics[height=1.2in]{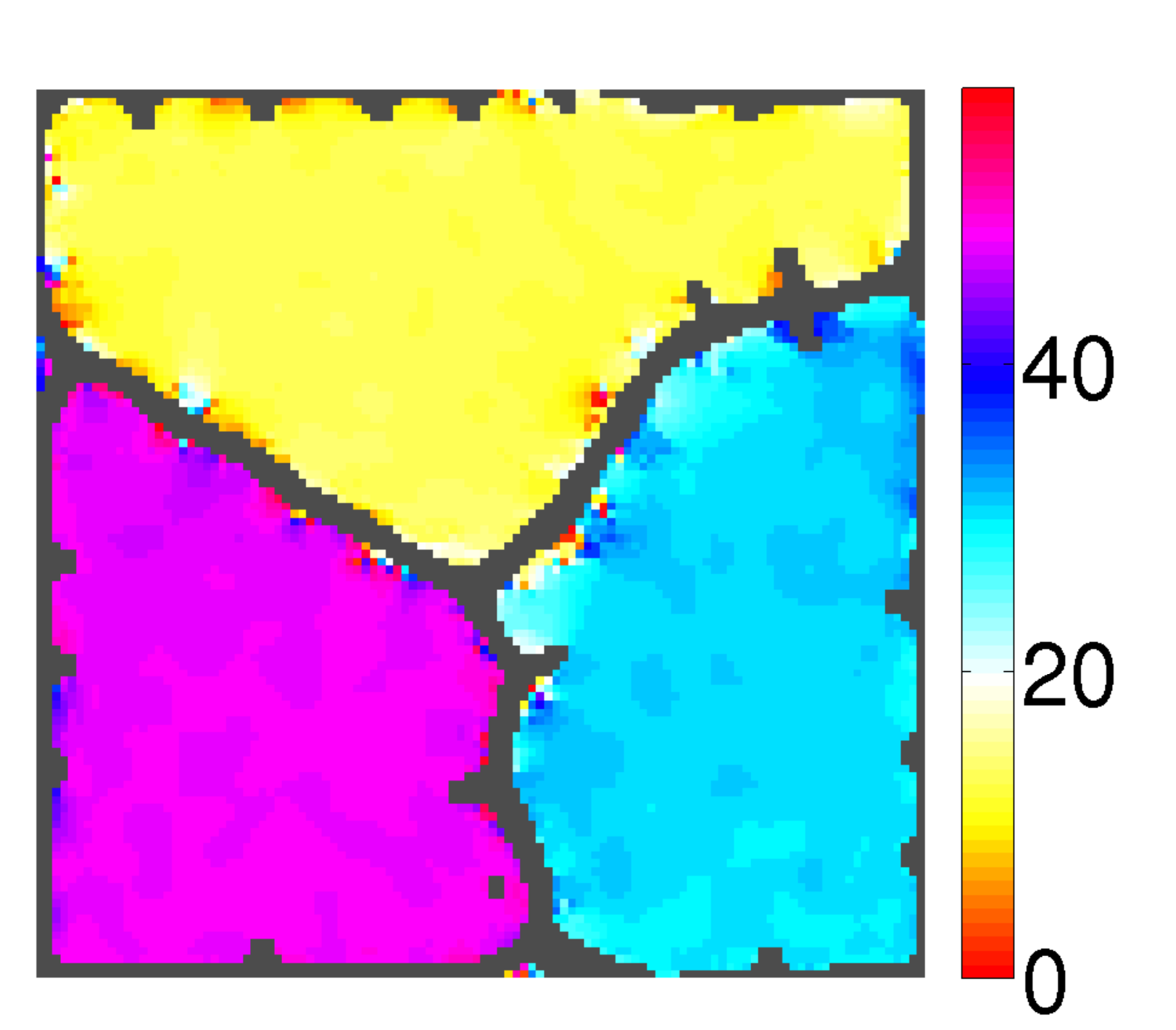} &
         \includegraphics[height=1.2in]{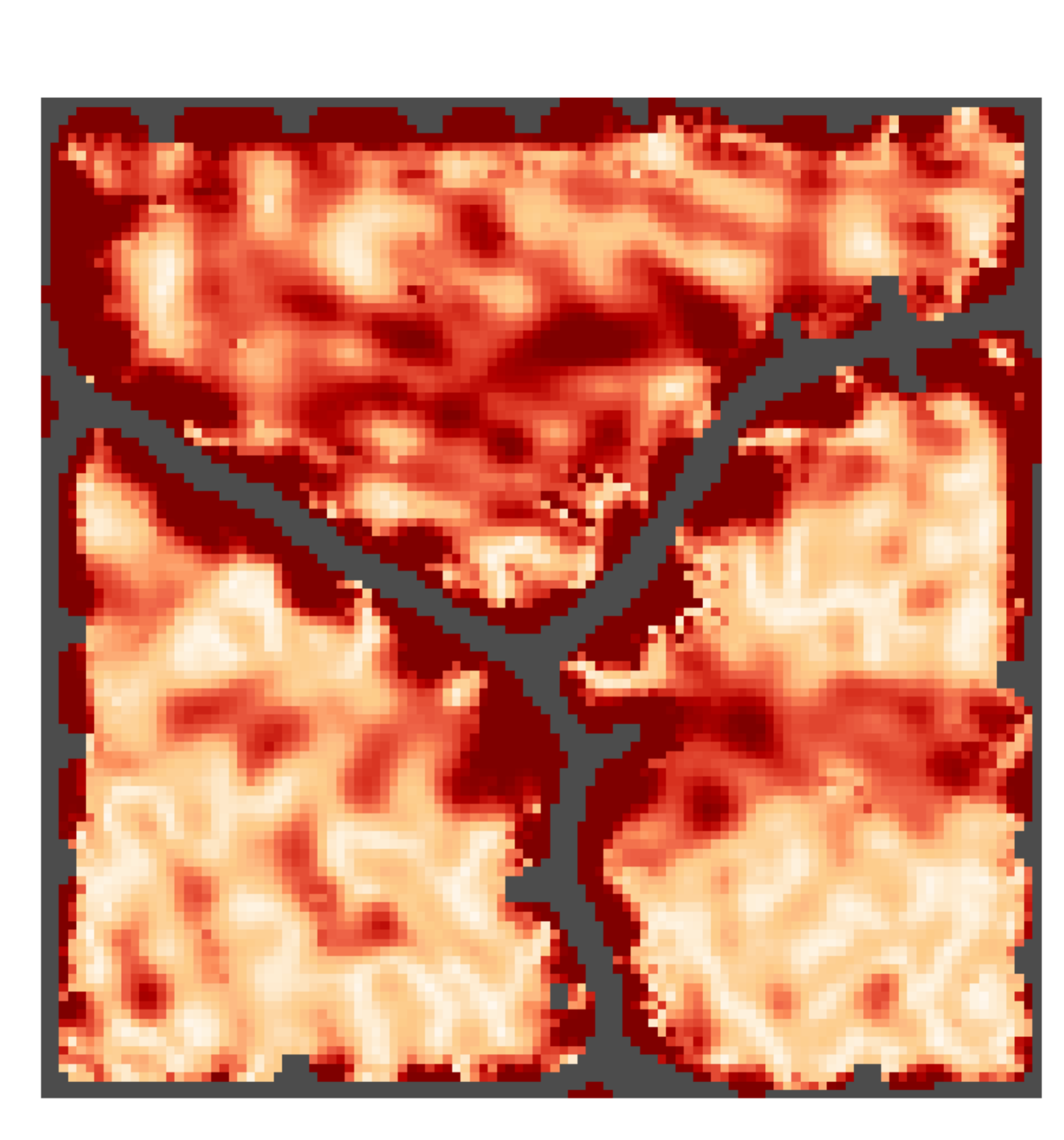} \includegraphics[height=1.2in]{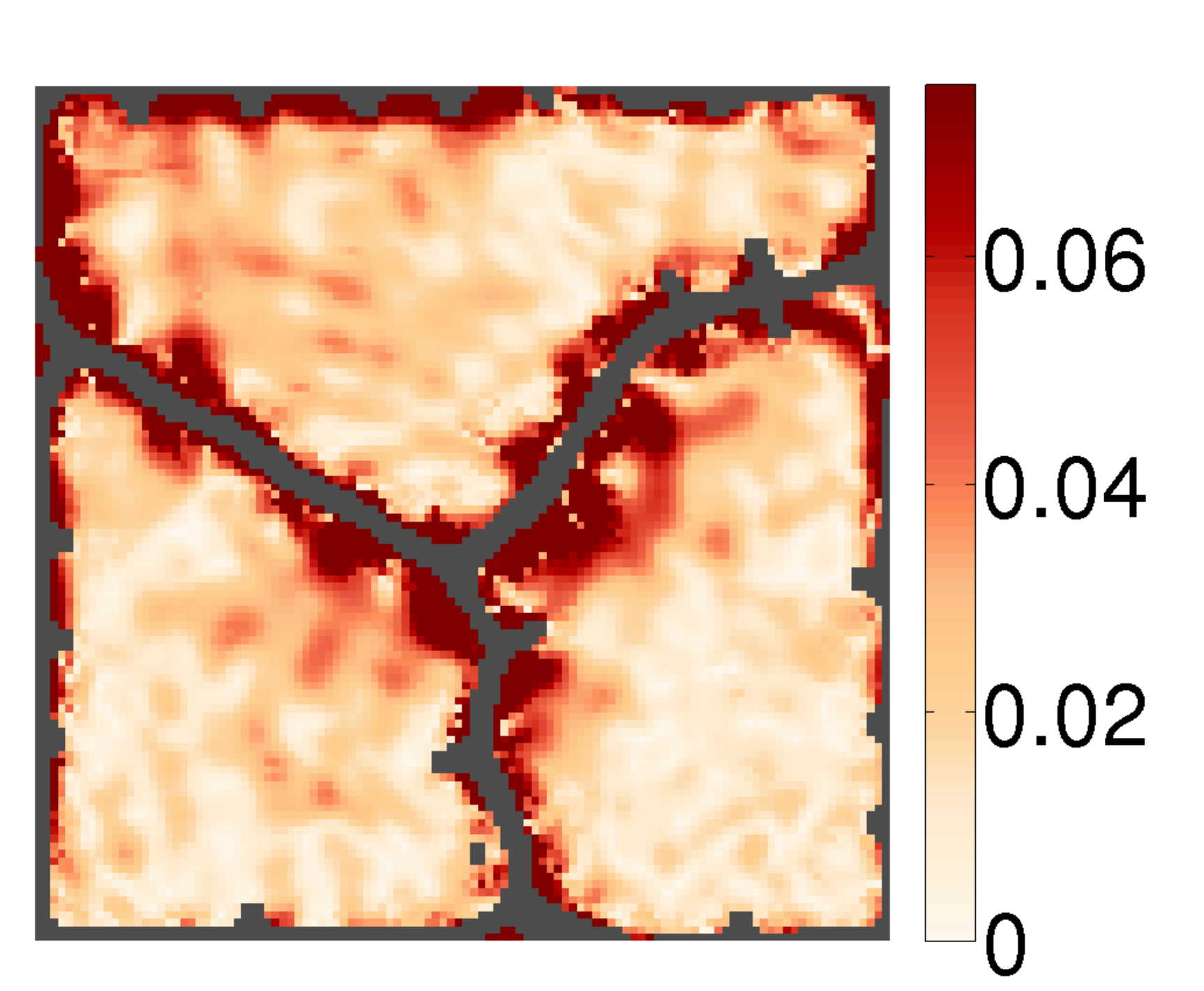} \\
         \includegraphics[height=1.2in]{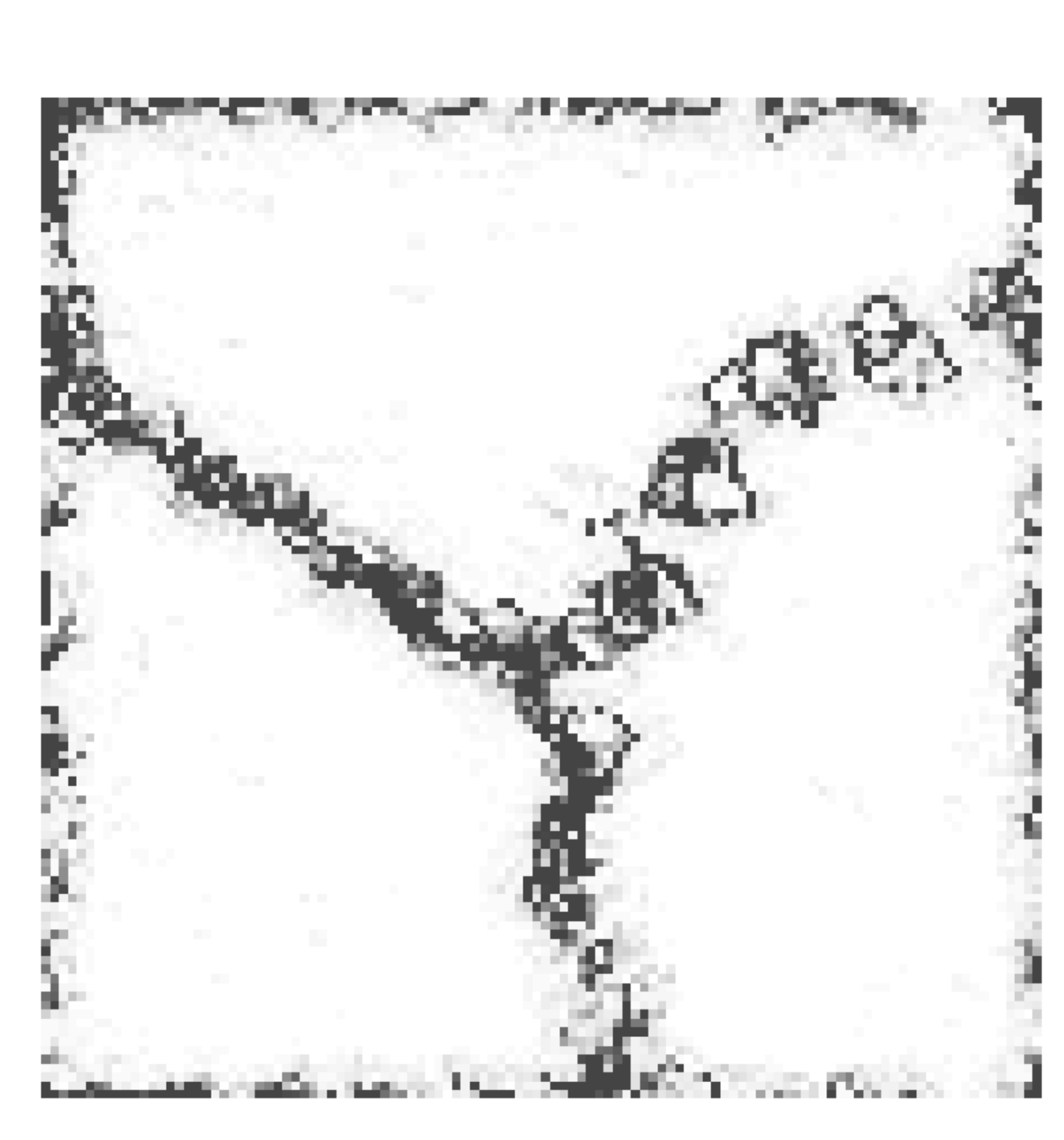}  \includegraphics[height=1.2in]{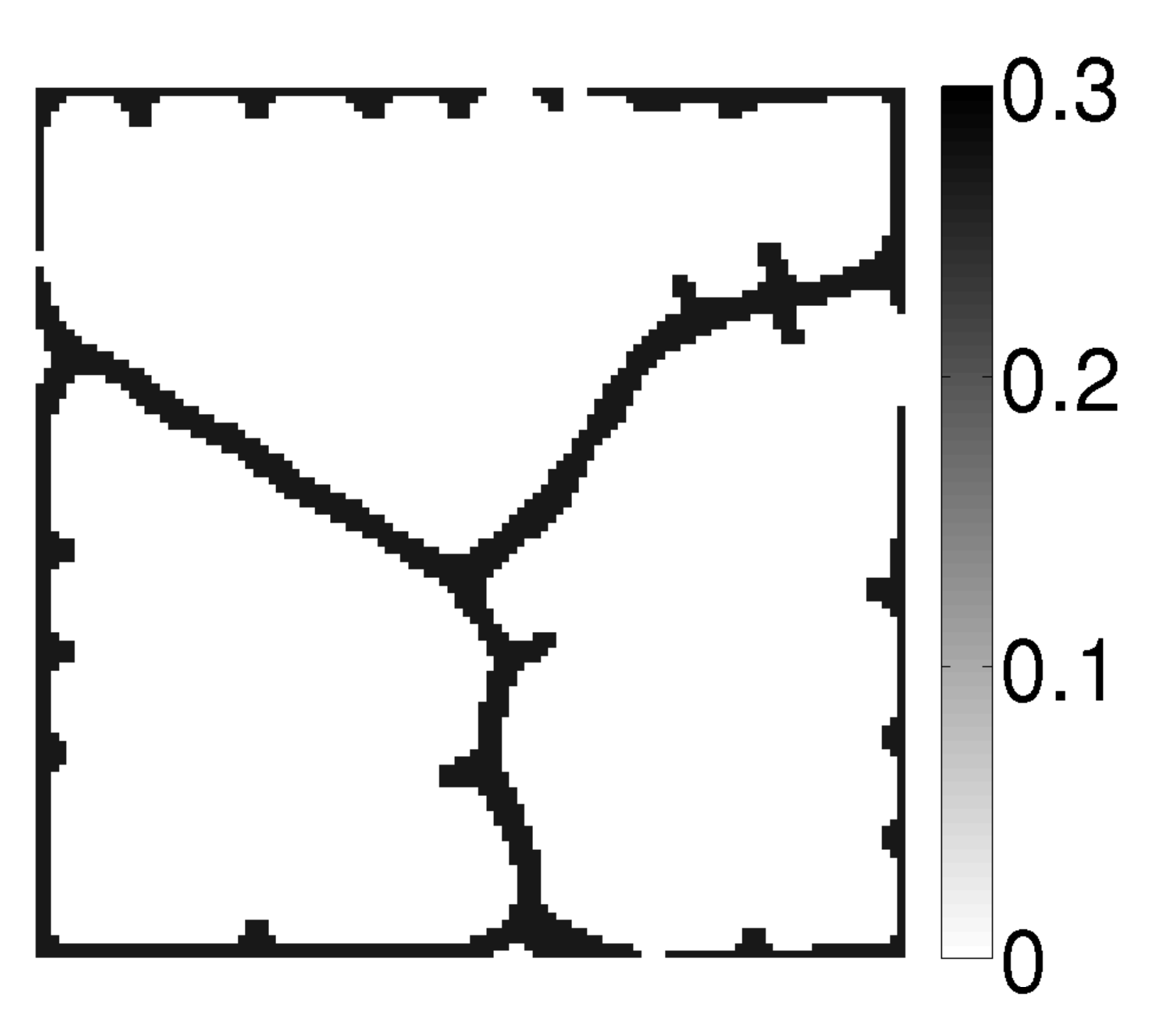} &
         \includegraphics[height=1.2in]{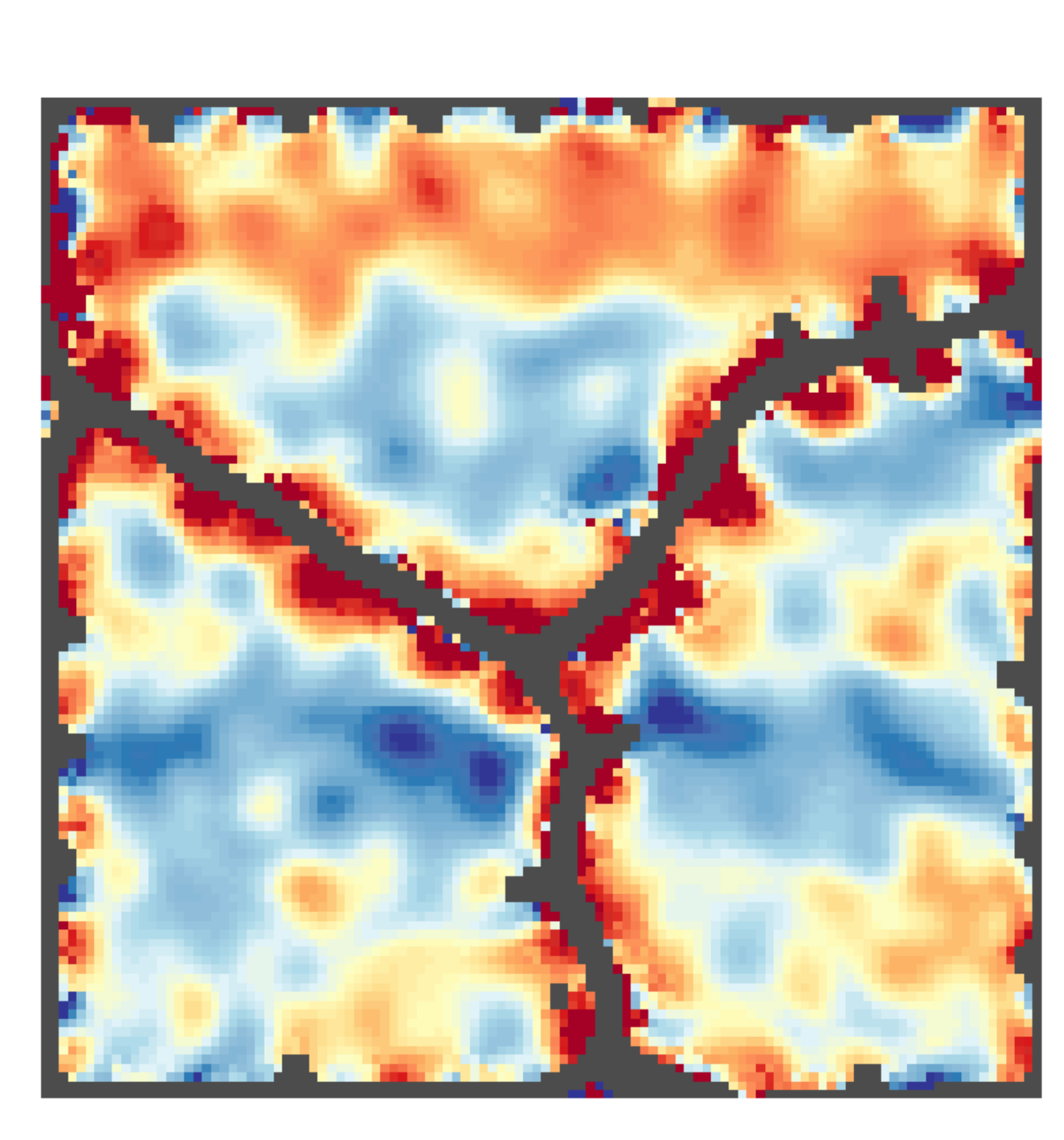}  \includegraphics[height=1.2in]{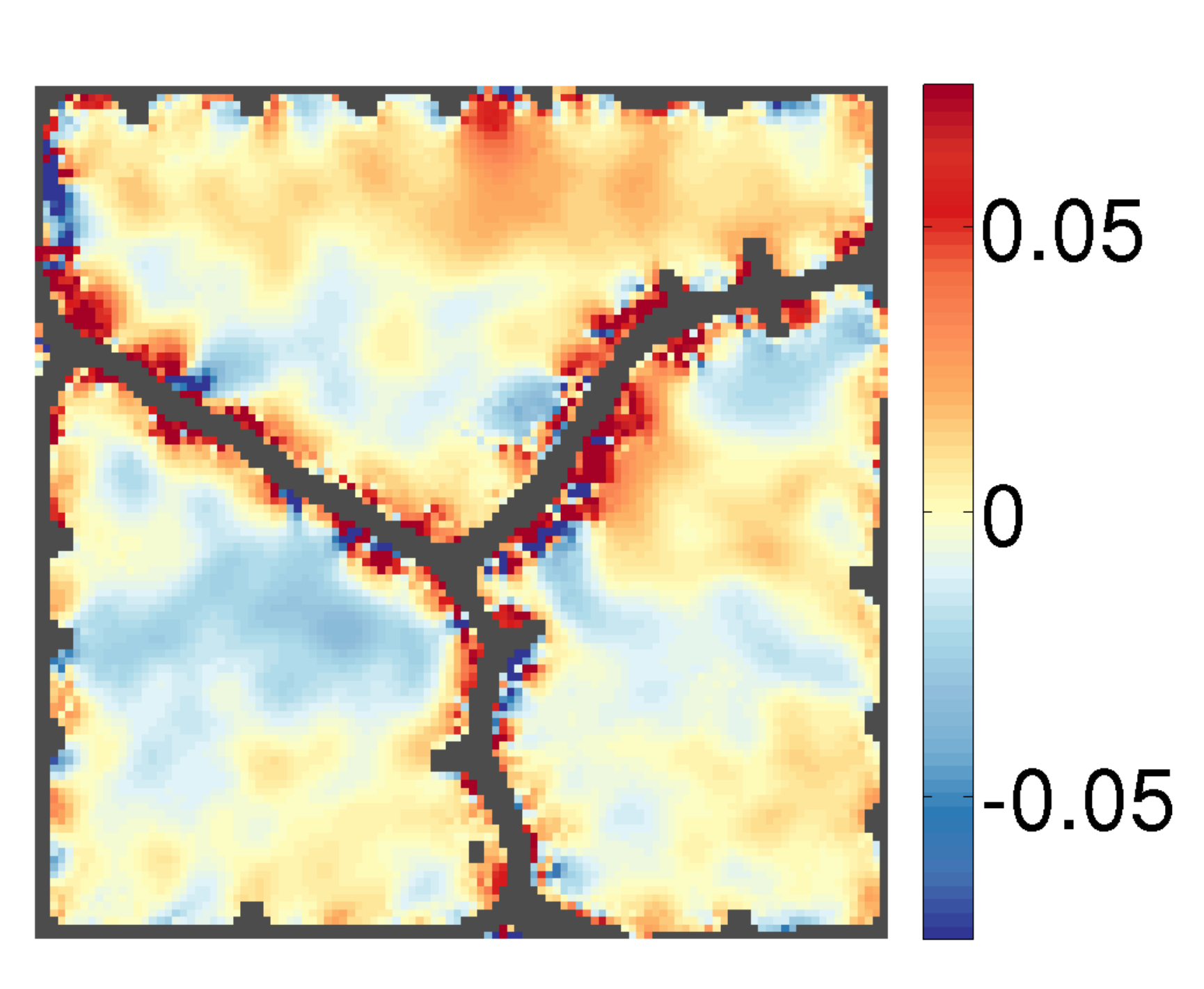} 
   \end{tabular}
\end{center}
\caption{Results of crystal analysis for Figure\,\ref{fig:Real1img} (right), using the same visualization as in Figure\,\ref{fig:PFC1}.}
\label{fig:BB}
\end{figure}

The second example is a photograph of a bubble raft with strongly disordered and blurry grain boundaries (Figure\,\ref{fig:Real1img} right). The image size is $223\times 223$ pixels. The SST result (see Figure\,\ref{fig:BB}) shows a spurious strong shear of the local crystal structure close to the grain boundaries, especially near the triple junction. One of the reasons for this behavior is that the SST, like any wavelet type transform, extracts directional and strain information from image patches, which here have to be larger than the unit cells. Thus, the grain boundaries are diffused, and information near the grain boundary is not trustworthy. The optimization can mostly remedies this effect.

The last experimental example is a TEM-image of a 
twin and a high angle grain boundary in Al (Figure~\ref{fig:Real2img}). The image size is $424\times 634$ pixels. The crystal structures in each grain are also slightly stretched. Although this example is very challenging, the SST-based analysis can still provide an accurate defect region estimate and a reasonable initial guess $G_0$ (see Figure\,\ref{fig:AI}). After optimization, we obtain a curl-free inverse deformation gradient $G$ in the grain interior. The difference of principal stretches becomes smaller and the volume distortion gets closer to zero outside the defect region.

\begin{figure}
  \begin{center}
     \begin{tabular}{c}
       \includegraphics[height=1.8in]{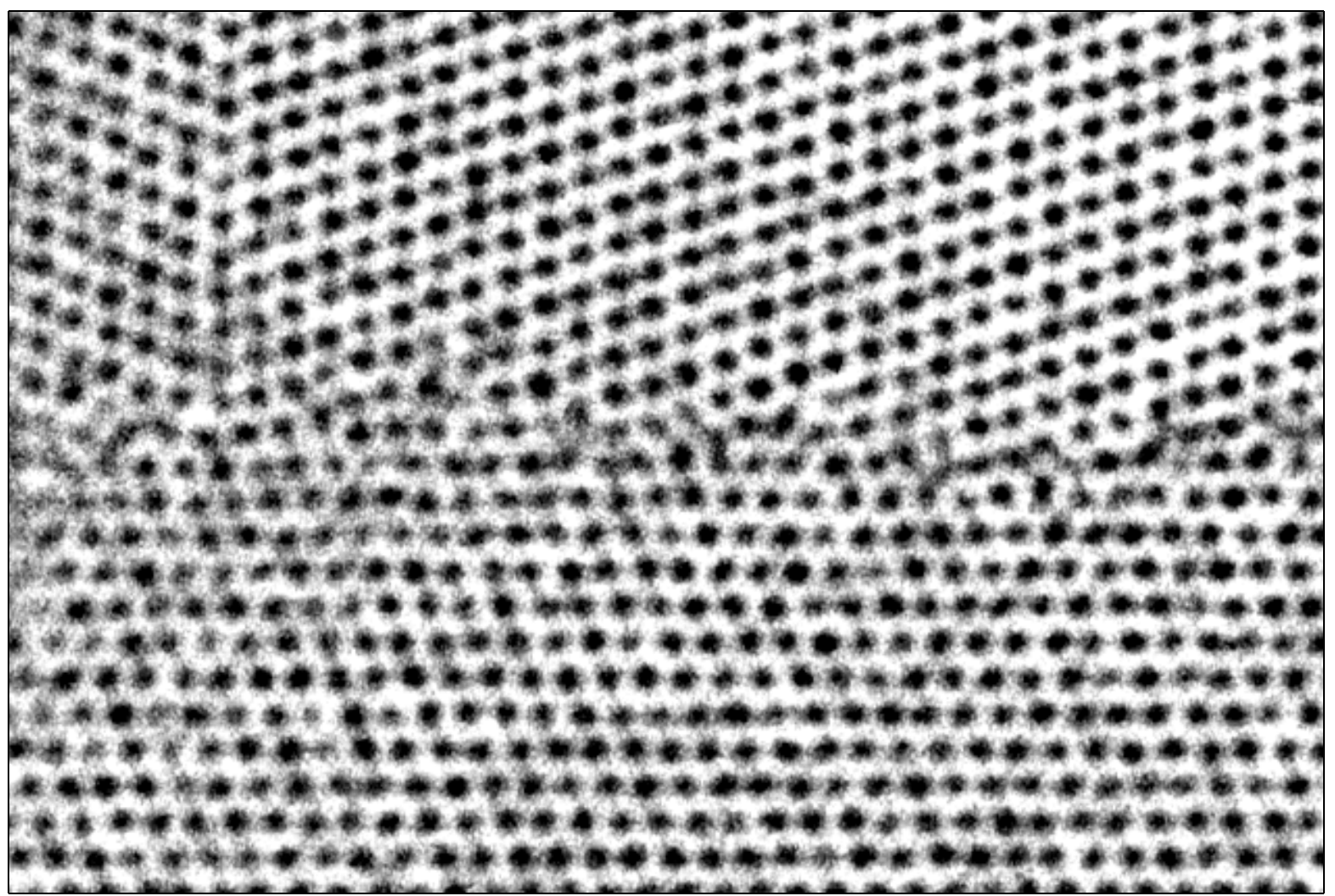}
     \end{tabular}
   \end{center}
 \caption{A TEM-image of Al (courtesy of the National Center for Electron Microscopy at the Lawrence Berkeley National Laboratory).}
 \label{fig:Real2img}
 \end{figure}
 
 \begin{figure}
  \begin{center}
  \begin{tabular}{cc}
        \includegraphics[height=0.9in]{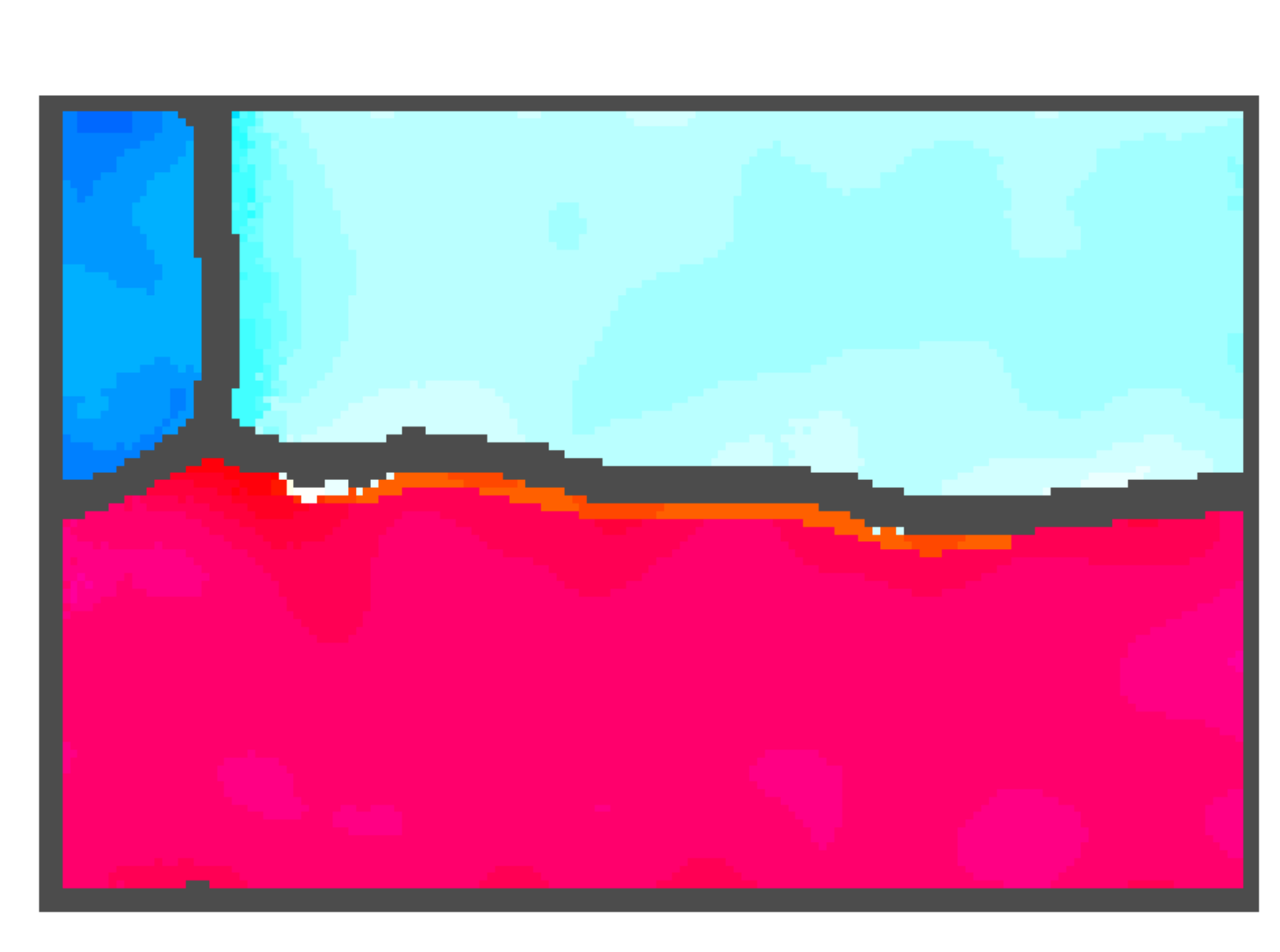} \includegraphics[height=0.9in]{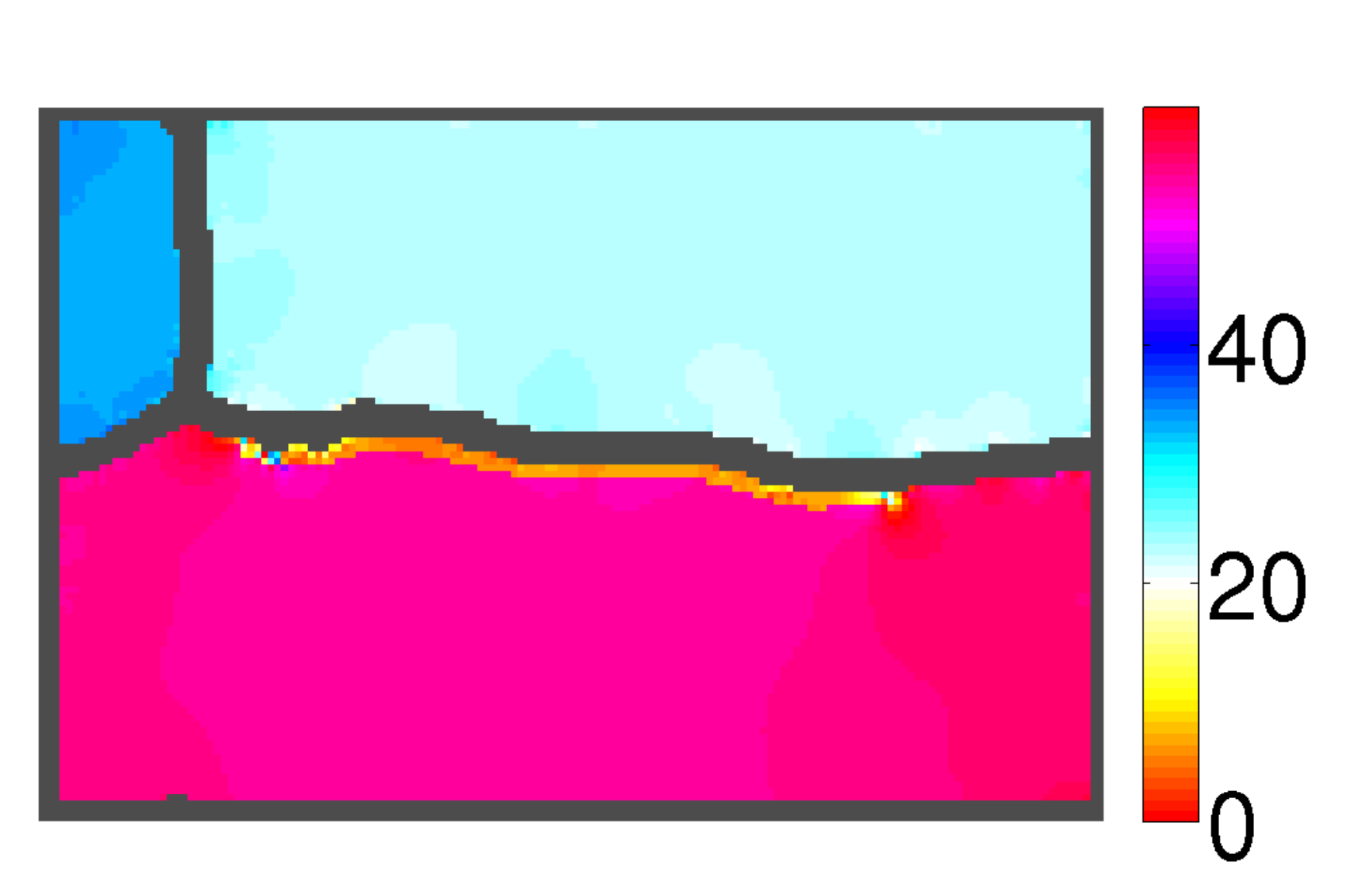} &
         \includegraphics[height=0.9in]{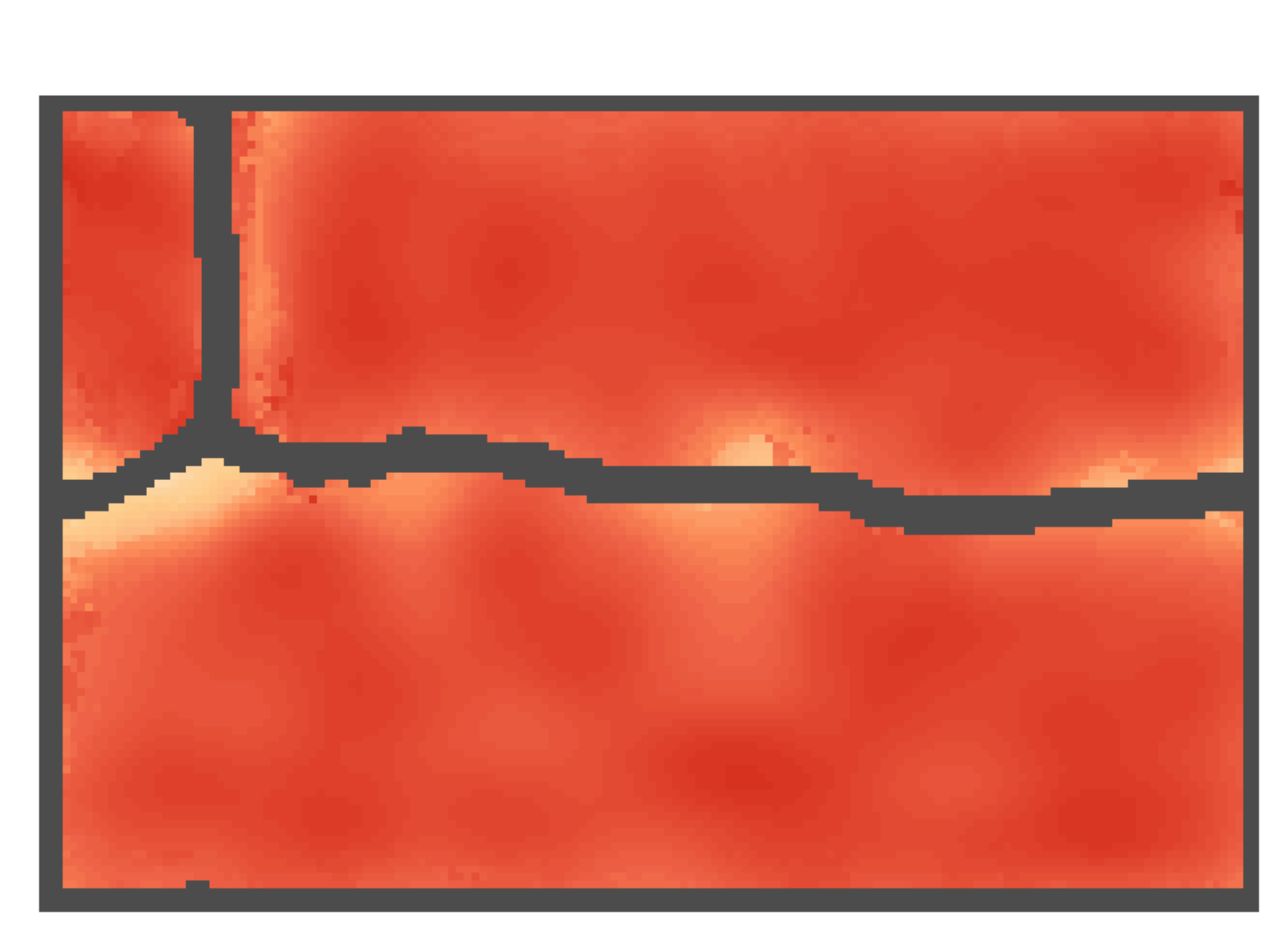} \includegraphics[height=0.9in]{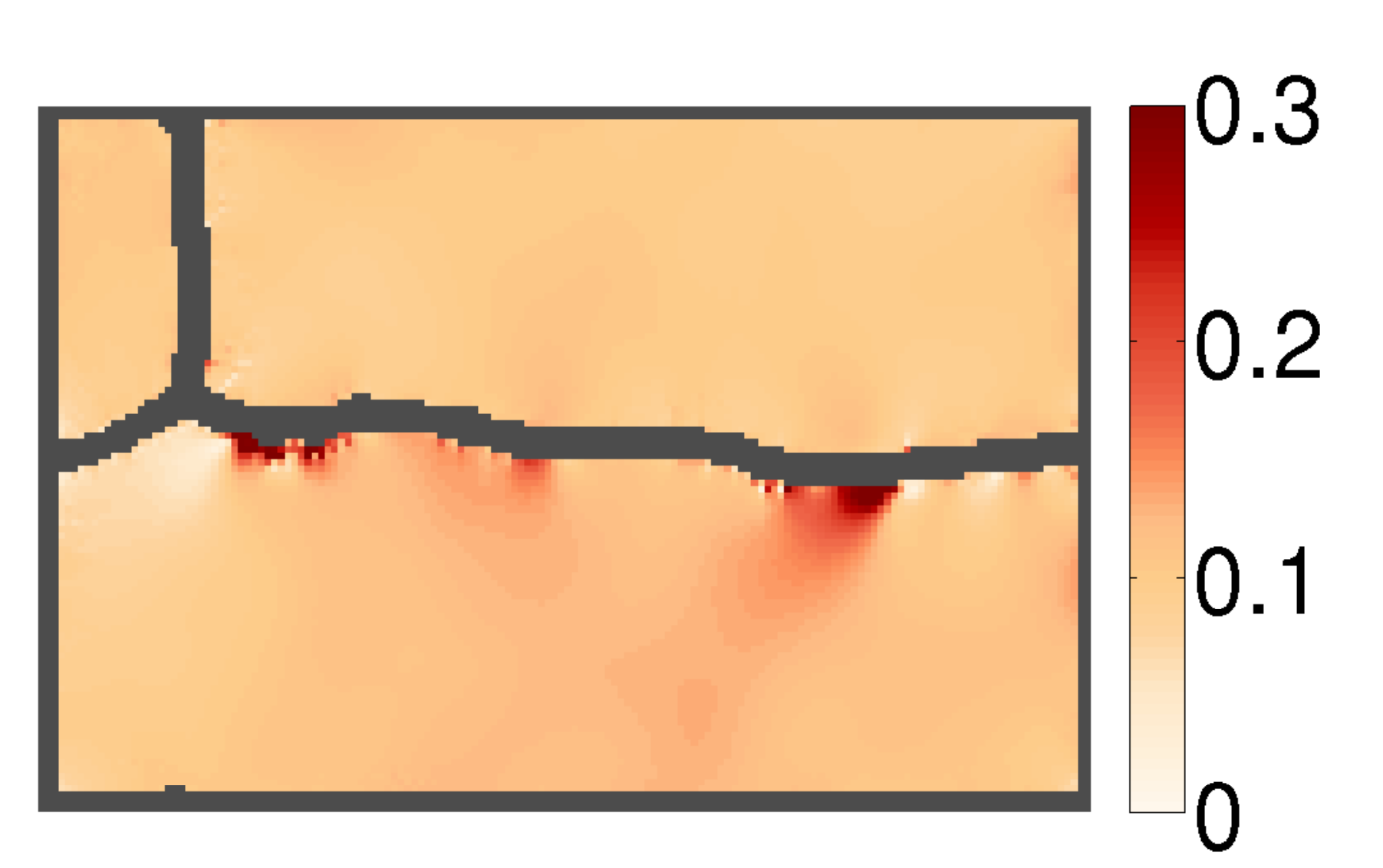} \\
         \includegraphics[height=0.9in]{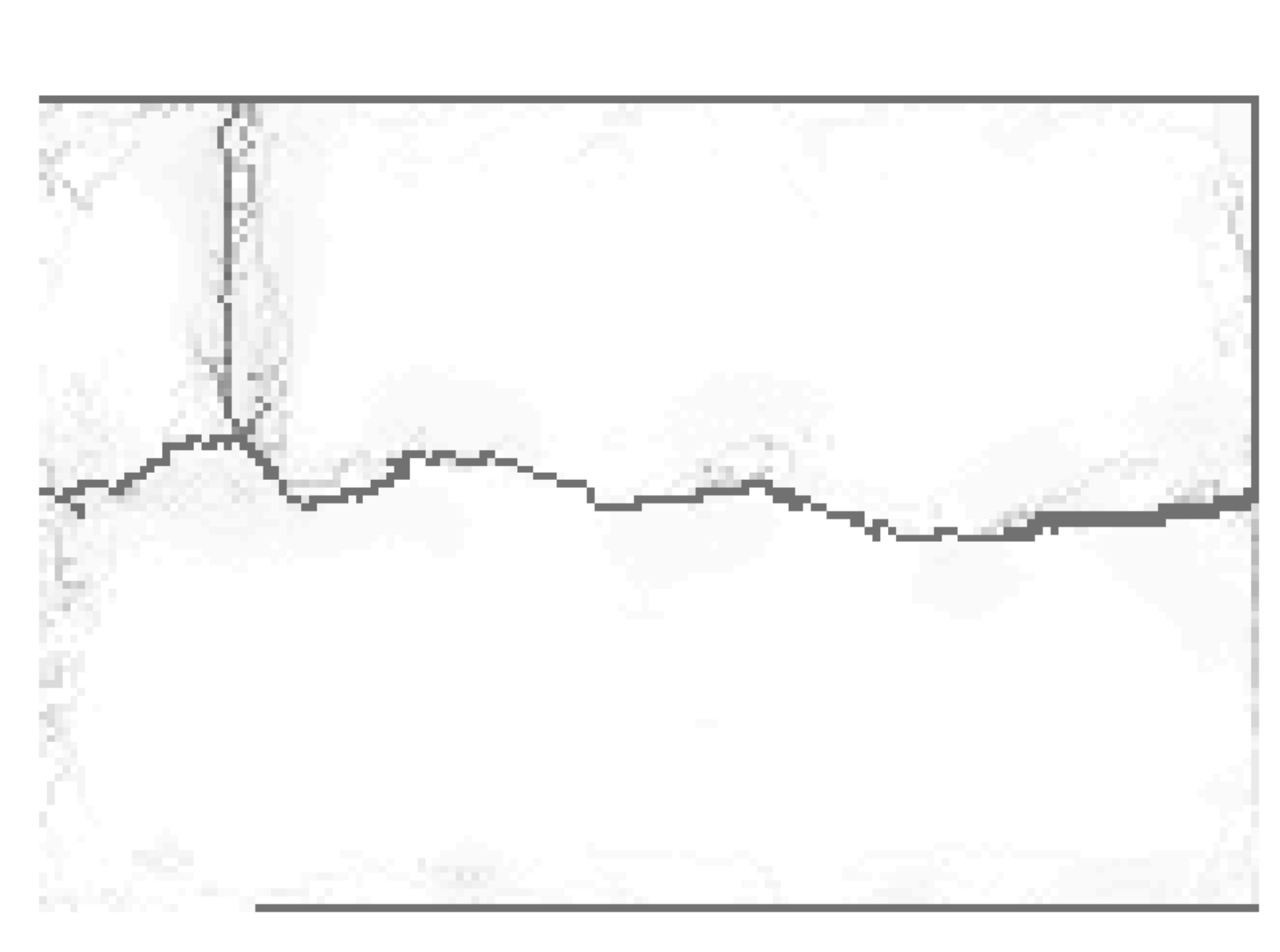}  \includegraphics[height=0.9in]{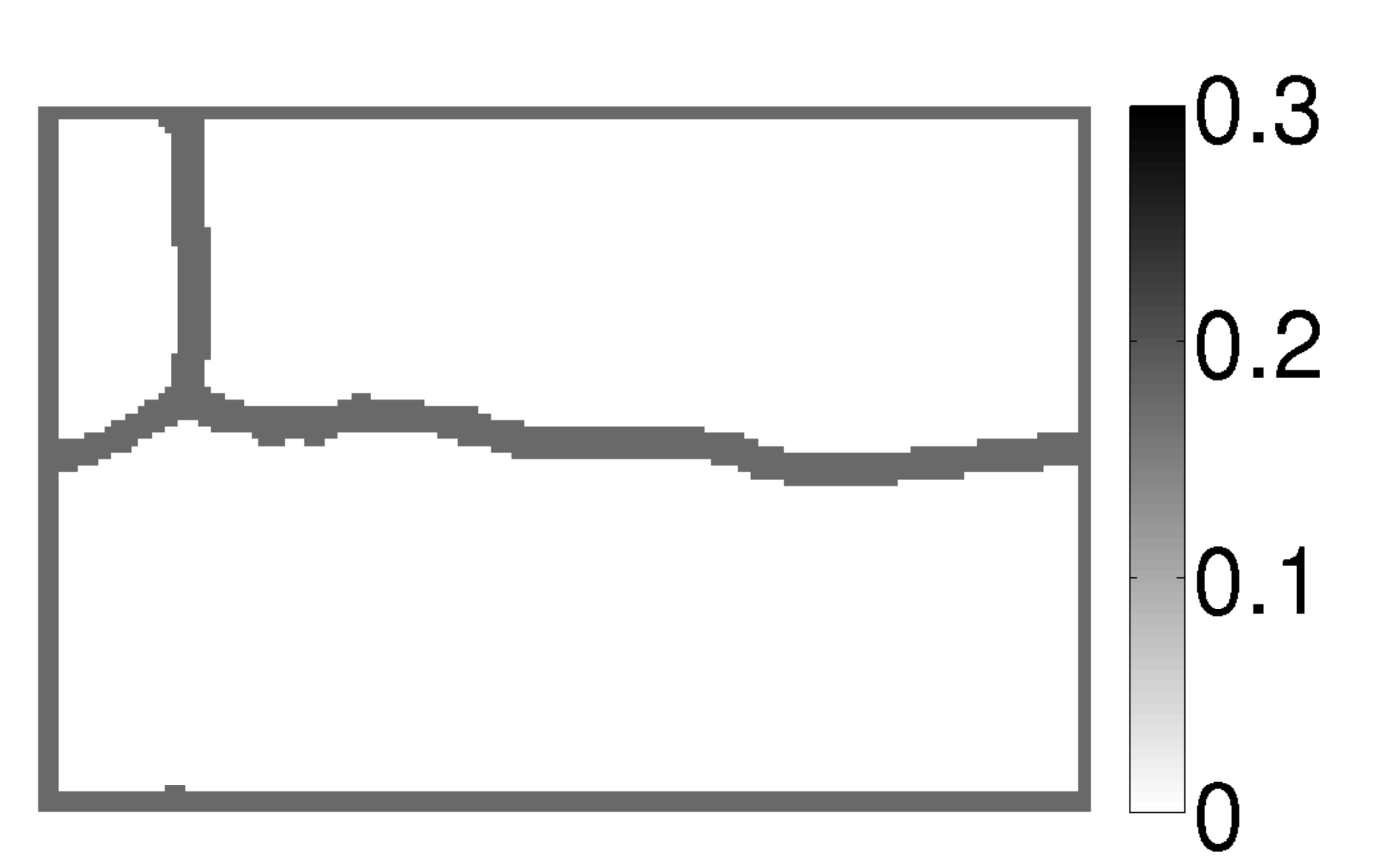} &
         \includegraphics[height=0.9in]{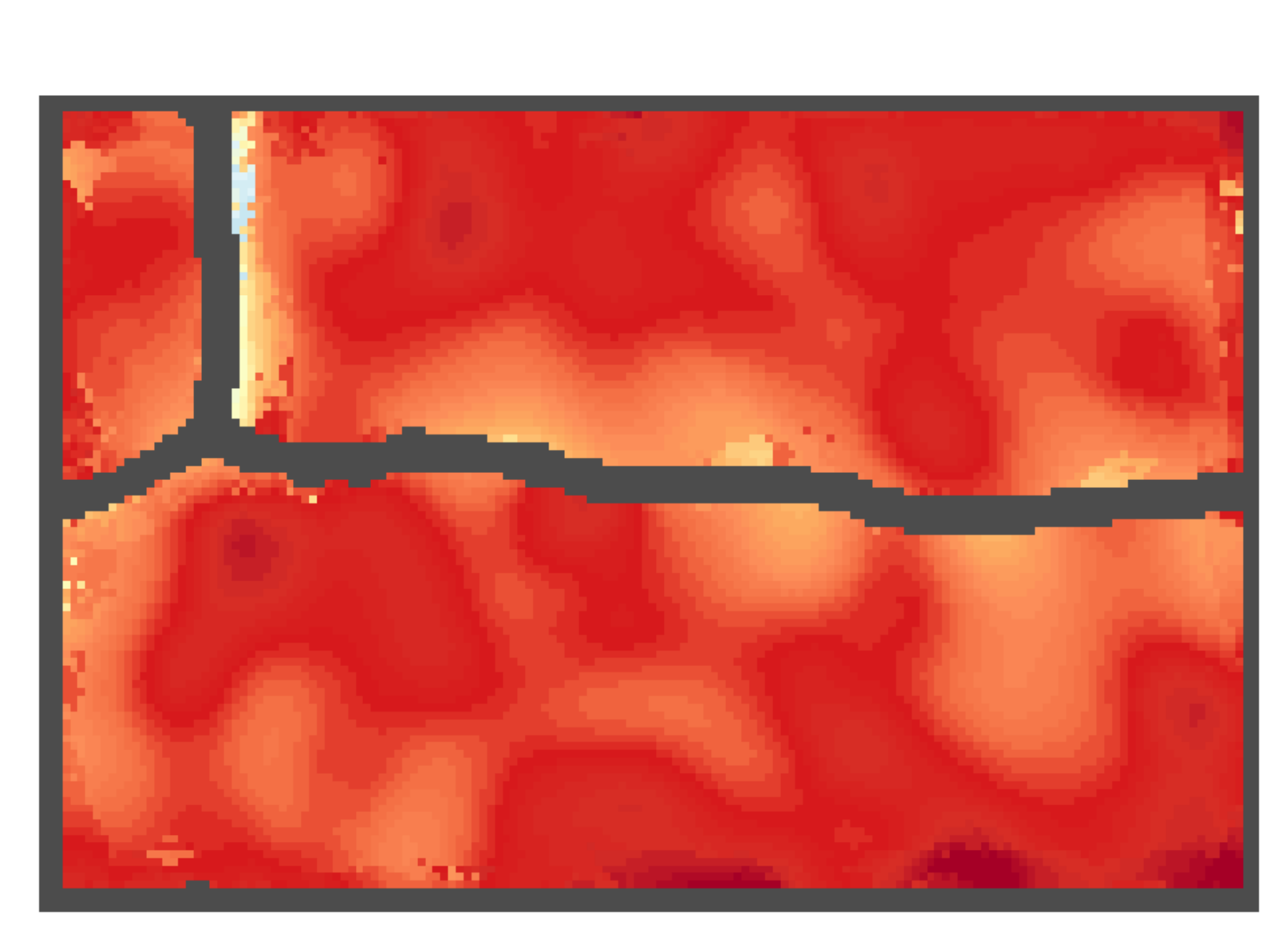}  \includegraphics[height=0.9in]{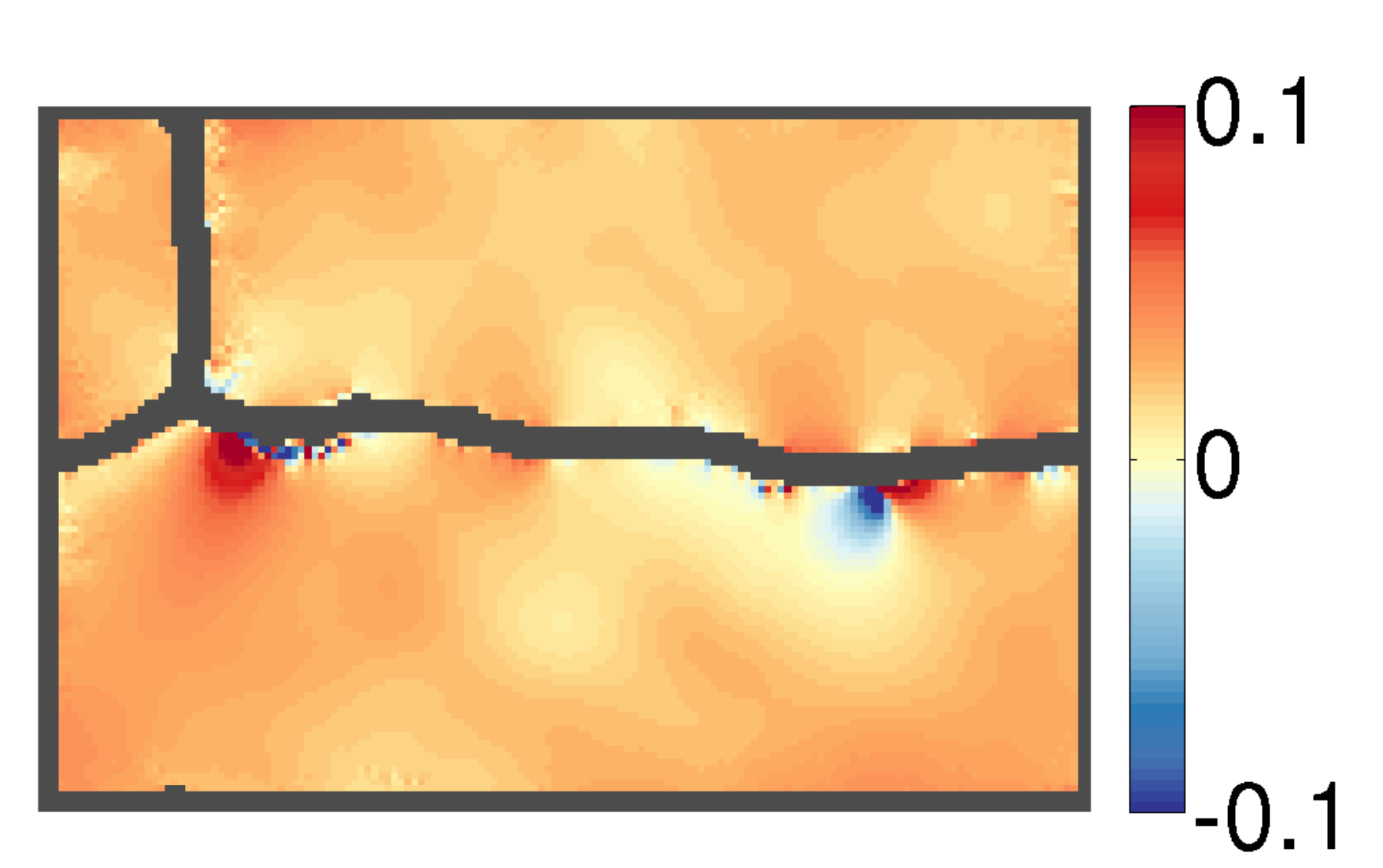} 
     \end{tabular}
  \end{center}
  \caption{Results of crystal analysis for Figure\,\ref{fig:Real2img}, using the same visualization as in Figure\,\ref{fig:PFC1}.}
  \label{fig:AI}
\end{figure}


\section{Discussion}

We have presented a variational method to improve an initial analysis of a crystal image provided by the SST-based method developed in \cite{SSCrystal}. By minimizing the elastic energy of the local inverse deformation gradient outside the SST-estimated defect region, we obtain an optimized deformation gradient $G$ that better agrees with physical understanding. The desired information about the local crystal state is immediately available from $G$: the polar decomposition of $G$ gives crystal orientations; the determinant of $G$ indicates volume dilation and compression; the curl of $G$ indicates defect regions, etc. 

The variational method in this paper presents a first step towards regularizing the local inverse deformation gradient provided by SST. In this step, we only focus on the optimization outside defect regions. In a next step, one has to investigate variational models to also obtain a meaningful estimate of the deformation gradient inside defect regions, which would provide a more accurate localization of defects. One possible solution could be a nested optimization model with an inner and an outer loop, where the inner loop performs the optimization outside the defect regions as proposed in this paper. The outer loop then tries to improve $\curl G$ inside the defect regions. 
We will leave this to future works. 

\section*{Acknowledgment}

The work of J.L. was supported in part by the Alfred P.~Sloan
foundation and National Science Foundation under award DMS-1312659. 
B.W.'s research was supported by the Alfried Krupp Prize for Young University Teachers awarded by the Alfried Krupp von Bohlen und Halbach-Stiftung.
H.Y. was partially supported by Lexing Ying's grants: the National Science Foundation under award DMS-1328230 and the U.S. Department of Energys Advanced Scientific Computing Research program under award DE-FC02-13ER26134/DE-SC0009409. 
The authors thank Lexing Ying for helpful suggestions and comments.

\bibliographystyle{abbrv}
\bibliography{synsquezvar}

\begin{thebibliography}{10}

\bibitem{BeHuHa12}
C.~Begau, J.~Hua, and A.~Hartmaier.
\newblock A novel approach to study dislocation density tensors and lattice
  rotation patterns in atomistic simulations.
\newblock {\em J. Mech. Phys. Solids}, 60(4):711--722, 2012.

\bibitem{Berkels:08}
B.~Berkels, A.~R{\"a}tz, M.~Rumpf, and A.~Voigt.
\newblock Extracting grain boundaries and macroscopic deformations from images
  on atomic scale.
\newblock {\em J. Sci. Comput.}, 35:1--23, 2008.

\bibitem{Berkels:10}
M.~Boerdgen, B.~Berkels, M.~Rumpf, and D.~Cremers.
\newblock Convex relaxation for grain segmentation at atomic scale.
\newblock In {\em {Vision, Modeling, and Visualization}}, pages 179--186.
  Eurographics Association, 2010.

\bibitem{DaubechiesLuWu:11}
I.~Daubechies, J.~Lu, and H.-T. Wu.
\newblock Synchrosqueezed wavelet transforms: an empirical mode
  decomposition-like tool.
\newblock {\em Appl. Comp. Harmonic Anal.}, 30:243--261, 2011.

\bibitem{DaubechiesMaes:96}
I.~Daubechies and S.~Maes.
\newblock A nonlinear squeezing of the continuous wavelet transform based on
  auditory nerve models.
\newblock In {\em {Wavelets in Medicine and Biology}, {Ed.} {A. Aldroubi and M.
  Unser}}, pages 527--546. CRC Press, 1996.

\bibitem{DuvautLions:77}
G.~Duvaut and J.~L. Lion.
\newblock {\em Les in\'equations en M\'ecanique et en Physique (English
  Translation)}.
\newblock Springer-Verlag, New York, 1977.

\bibitem{ElderGrantMartin:04}
K.~R. Elder and M.~Grant.
\newblock Modeling elastic and plastic deformations in nonequilibrium
  processing using phase field crystals.
\newblock {\em Phys. Rev. E}, 70:051605, 2004.

\bibitem{ElseyWirth:MMS}
M.~Elsey and B.~Wirth.
\newblock Fast automated detection of crystal distortion and crystal defects in
  polycrystal images.
\newblock {\em Multi. Model. Simul.}, 12:1--24, 2014.

\bibitem{ElseyWirth:14}
M.~Elsey and B.~Wirth.
\newblock Redistancing dynamics for vector-valued multilabel segmentation with
  costly fidelity: grain identification in polycrystal images.
\newblock {\em J. Sci. Comp.}, pages 1--28, 2014.

\bibitem{Hytch:98}
M.~H\"ytch, E.~Snoeck, and R.~Kilaas.
\newblock Quantitative measurement of displacement and strain fiedls from
  {HREM} micrographs.
\newblock {\em Ultramicroscopy}, 74:131--146, 1998.

\bibitem{Strekalovskiy:11}
E.~Strekalovskiy and D.~Cremers.
\newblock Total variation for cyclic structures: Convex relaxation and
  efficient minimization.
\newblock In {\em {Computer Vision and Pattern Recognition (CVPR), 2011 IEEE
  Conference on}}, pages 1905--1911, 2011.

\bibitem{StukowskiAlbe1:10}
A.~Stukowski and K.~Albe.
\newblock Dislocation detection algorithm for atomistic simulations.
\newblock {\em Model. Simul. Mater. Sci. Eng.}, 18:025016, 2010.

\bibitem{StukowskiAlbe2:10}
A.~Stukowski and K.~Albe.
\newblock Extracting dislocations and non-dislocation crystal defects from
  atomistic simulation data.
\newblock {\em Model. Simul. Mater. Sci. Eng.}, 18:085001, 2010.

\bibitem{SSSTFT}
G.~Thakur and H.-T. Wu.
\newblock Synchrosqueezing-based recovery of instantaneous frequency from
  nonuniform samples.
\newblock {\em SIAM J. Math. Anal.}, 43(5):2078--2095, 2011.

\bibitem{Yang:2014}
H.~Yang.
\newblock Synchrosqueezed wave packet transforms and diffeomorphism based
  spectral analysis for 1{D} general mode decompositions.
\newblock {\em Appl. Comp. Harmon. Anal.}, in press.

\bibitem{SSCrystal}
H.~Yang, J.~Lu, and L.~Ying.
\newblock Crystal image analysis using 2{D} synchrosqueezed transforms, 2014.
\newblock preprint, arXiv:1402.1262 [math.NA].

\bibitem{YangYing:2013}
H.~Yang and L.~Ying.
\newblock Synchrosqueezed wave packet transform for 2{D} mode decomposition.
\newblock {\em SIAM J. Imaging Sci.}, 6:1979--2009, 2013.

\bibitem{Robustness}
H.~Yang and L.~Ying.
\newblock Robustness analysis of synchrosqueezed transforms, 2014.
\newblock preprint, arXiv:1410.5939 [math.ST].

\bibitem{YangYing:2014}
H.~Yang and L.~Ying.
\newblock Synchrosqueezed curvelet transform for two-dimensional mode
  decomposition.
\newblock {\em SIAM J. Math. Anal.}, 46(3):2052--2083, 2014.

\end{thebibliography}

\end{document}